\DeclarePairedDelimiter{\floor}{\lfloor}{\rfloor}
\numberwithin{equation}{section}
\titleformat*{\section}{\large \bfseries}
\titleformat*{\subsection}{\normalsize \bfseries}
\titleformat*{\subsubsection}{\small \bfseries}
\newif\ifshow 
\theoremstyle{definition}
\newtheorem{theorem}{Theorem}
\newtheorem{assumption}{Assumption}
\newtheorem{lemma}{Lemma}
\newtheorem{example}{Example}
\newtheorem{proposition}{Proposition}
\newtheorem{remark}{Remark}
\newtheorem{corollary}{Corollary}
\newcommand\norm[1]{\left\lVert#1\right\rVert}
\begin{document}
\pagenumbering{roman}


\title{ {\Large \textbf{Robust Estimation in Network Vector Autoregression with Nonstationary Regressors}\thanks{\textbf{Article history:}  First draft: 21 April 2021. Second draft: 11 November 2023. The first draft of this paper was prepared after PhD Candidature was confirmed at the Department of Economics, University of Southampton. Relevant papers prepared during my PhD studies include \cite{katsouris2021optimal, katsouris2023estimating, katsouris2023limit, katsouris2023statistical}. The main ideas presented in the original draft are extended to Chapter 4 of my PhD thesis submitted at the University of Southampton titled: "\textit{Estimation and Inference in Seemingly Unrelated Systems of Nonstationary Quantile Predictive Regression Models}". The author is currently working on a follow-up paper titled: "\textit{Robust Identification and Estimation in Non-Gaussian Structural VARs with Near Unit Roots}". Financial support from the Research Council of Finland (grant 347986) is gratefully acknowledged. Address correspondence to Christis Katsouris, Faculty of Social Sciences, University of Southampton, United Kingdom. E-mail: \textcolor{blue}{\texttt{c.katsouris@soton.ac.uk}}   }  }
}

\author{\textbf{Christis Katsouris}\thanks{Dr. Christis Katsouris is a Postdoctoral Researcher at the Faculty of Social Sciences, University of Helsinki.   } \\  \textit{University of Southampton}\\ $\&$  \textit{University of Helsinki} \\  \\ \textcolor{blue}{Job Market Paper I}}

\date{\today}

\maketitle

\begin{abstract}
\vspace*{-0.25 em}
This article studies identification and estimation for the network vector autoregressive model with nonstationary regressors. In particular, network dependence is characterized by a nonstochastic adjacency matrix. The information set includes a stationary regressand and a node-specific vector of nonstationary regressors, both observed at the same equally spaced time frequencies. Our proposed econometric specification correponds to the NVAR model under time series nonstationarity which relies on the local-to-unity parametrization for capturing the unknown form of persistence of these node-specific regressors. Robust econometric estimation is achieved using an IVX-type estimator and the asymptotic theory analysis for the augmented vector of regressors is studied based on a double asymptotic regime where both the network size and the time dimension tend to infinity. 
\\

\textbf{Keywords:} Network dependence; Local-to-unity; persistence; IVX; Vector Autoregression.  
\\

\textbf{JEL} Classification: C12, C22
\end{abstract}

\newpage 

\setcounter{page}{1}
\pagenumbering{arabic}

\section{Introduction}

Network connectivity has important implications for the risk management of economic, financial and societal events such as the diffusion of spillover effects across large networks (e.g., \cite{he2018measuring}), the financial contagion in stock markets (e.g., \cite{hardle2016tenet}, \cite{chen2019tail}, \cite{mitchener2019network}), the detection of market exuberance (e.g., \cite{Magdal2009limit}) as well as the spread of epidemic diseases (e.g., \cite{keeling2005networks}). Time series network-driven models have seen growing attention in the literature as a statistical mechanism for identifying underline network structures based on methods commonly used for Vector Autoregression Models (e.g., \cite{zhu2017network, zhu2019network} and \cite{zhu2020grouped}).  Our study is about the aspects of identification and estimation of cointegration dynamics and the stability of autoregressive processes under network dependence. We focus on robust econometric estimation of the Network Vector Autoregression (NVAR), with nonstationary (near-unit-root) node-specific regressors.  

We consider a network with $N$ nodes (possibly high dimensional) such as a social or a financial network  which is indexed by $i \in \left\{ 1,...,N \right\}$. The network structure is characterized by a binary nonstochastic adjacency matrix $\boldsymbol{\Omega} = \left( \omega_{ij} \right) \in \mathbb{R}^{ N \times N}$ such that $\omega_{ij} = 1$ if a direct link between the pair $(i,j)$ exists and $\omega_{ij} = 0$ otherwise. We denote with $Y_{i(t)} \in \mathbb{R}$ the continuous response variable obtained from node $i$ at time point $t$ and respectively $\mathbb{Y}_t = \left( Y_{1t},..., Y_{Nt} \right)^{\top} \in \mathbb{R}^N$ the possibly high dimensional vector or responses for large $N$. Under the assumption of identifiable network dependence we assume that the response variable of the node $i$, $Y_{i(t)}$, is affected by its lag value $Y_{i,(t-1)}$, as well as by its connected nodes which are collected via $\left\{ j : \omega_{ij} = 1 \right\}$. Moreover, the notion of network-driven predictability is captured by a set of node-specific variables, denoted by $X_{i(t)} \in \mathbb{R}^p$ which represent regressors of abstract degree of persistence generated by local-to-unit root processes.

Our research objective is to study the relationship between the stationary node-specific regressands and a set of node-specific nonstationary regressors  that capture cointegration dynamics under network dependence. Our proposed framework aims to study economic phenomena under network dependence which implies that time series observations from the infinite past evolve conditioning on initial values. Thus we provide an analysis of discrete time, time invariant, and causal dynamic systems where time, following a finite sequence of initial values, is explicitly confined to positive integers. In other words, the unobserved random disturbances  determine how random influences enter the multivariate system. We can think of the random disturbances  of our system to be determining the shocks that enter the system under the simultaneous presence of cointegration dynamics and network dependence\footnote{Regardless of the presence of network dependence we don't study temporal network dynamics under nonstationarity but the simultaneous presence of nonstationarity and network dependence. Although, we don't extend our framework into a high dimensional setting, some relevant studies include \cite{adamek2022local}, \cite{chen2023community},  \cite{krampe2023structural} and \cite{zhang2023statistical} as well as \cite{barigozzi2023fnets}, \cite{cho2023high} and  \cite{fang2023determination}.} (see, also \cite{basu2023graphical}). Recently, \cite{bykhovskaya2022time} consider an estimation and prediction framework where the model specifies the temporal evolution of a weighted network that combines classical autoregression with non-negativity, a positive probability of vanishing and peer effect interactions between weights assigned to edges in the process.

\newpage

Although nonstationarity in time series can be interpreted as time-varying model parameters, we shall focus on cointegrating and unit root dynamics when the underline data structure is defined across nodes of a single network. We are interested for conditions of network stationarity against explosiveness of the underline network evolution process. Regarding suitable mixing condition, we shall conjecture whether the network effect dominate the nonstationary property of regressors or whether the network effects appear in the limiting distribution of the $\mathsf{NVAR}$ estimator.  

Generally, we aim to investigate aspects related to the relation between the network topology and the nonstationarity properties of regressors such as under which conditions the degree of centrality  as captured by the network-covariates and the nonstationarity as captured by the LUR parametrization asymptotically dominates when establishing the asymptotic theory of model parameters. In particular, we shall aim to derive related stability conditions of the system representation based on the network topology and the properties of the nonstationary time series regressors. In other words, some of the challenges that we need to address include the presence of network dependence and the long memory properties (e.g., see \cite{schennach2018long}) of the system under time series nonstationarity. Some specific questions of interest include the study of asymptotics and stochastic behaviour of processes in the case of  explosive regressors as well as the asymptotic theory analysis of the IVX filtration under the presence of network dependence. 

To examine both aspects in a unified framework, we begin by studying the limiting behaviour of the classical least squares estimator within the stationary framework under the assumption of network dependence. In particular, the presence of network effects (e.g., the centrality measure as a covariate), the inclusion of model intercepts (to capture mean nodal effects), the lagged response variable as well as the nonstationary regressors (such as persistent regressors), can induce different convergence rates which require to develop suitable asymptotic theory that includes these features. Firstly, we need to examine whether the network dependence affects the limiting behaviour of the OLS type estimator.  A starting point is to verify a Gaussian random variant as a limiting distribution for the stationary framework, in a similar fashion as in \cite{zhu2017network}). The stationary framework, allows to simplify the econometric specification by assuming that the nodes' regressors are stationary, so that the LUR process is not employed to model the regressors. 

Secondly, for the nonstationary framework, which is the case that the nodes' regressors are generated by the LUR specification we need to examine how the presence of network dependence affects the asymptotic theory of the OLS estimator. A non-Gaussian limiting distribution can indicate that the presence of both network effects and nonstationarity requires a different methodology to robustify inference. To this direction, we can further examine the implementation of the IVX instrumentation of \cite{PM2009econometric}, which has been proven to be robust under the assumption of temporal dependence, and provides a statistical methodology for filtering the unknown degree of persistence. Thus, by demonstrating that the asymptotic distribution of an IVX-type estimator under the assumption of both network dependence and nonstationarity, converges to a mixed Gaussian distribution, then this provides a robust methodology for estimation and inference for the NVAR model within the aforementioned framework.     


\newpage

\subsection{Contributions and Outline of the paper}

To the best of our knowledge our study is the first to consider a network-type of predictability in the same spirit as the conventional predictability literature based on  nonstationary time series regressions (see, \cite{kostakis2015Robust, kostakis2018taking}. The LUR parametrization is employed to capture the unknown form of persistence that these nonstationary regressors exhibit  and along with the network-dependent covariates, our functional form specification allows to model cointegration dynamics in such settings. Our proposed system representation is novel and thus the closer to our identification and estimation strategy is the framework presented in the study of \cite{magdalinos2021least}. In particular, within the predictive regression literature the regressand corresponds to a continuous stationary covariate such as stock returns while regressors correspond to persistent data\footnote{ A different stream of literature considers the case in which the response variable represents a time series of counts, which is useful when modelling corporate defaults as in \cite{agosto2016modeling}) and \cite{armillotta2022nonlinear} who consider modelling count data time series reponsens using nonlinear network vector autoregressions.}. Motivated by these observations we propose a network-dependent autoregressive representation under the presence of time-series nonstationarity while we consider suitable stability conditions for an increasing network size. We believe that our proposed system representation as well as identification and estimation strategy are of relevance both from the theoretical and applied econometrics perspective. 

Extending existing econometric specifications and estimation approaches to the case of $\mathsf{NVAR}(1)$ with nonstationary regressors  implies overcoming several challenges to ensure robust implementation approaches. First, it is not immediately clear which is the most suitable system representation that incorporates both nonstationary regressors and network dynamics, in the form of network covariates induced from an non-stochastic adjacency matrix.  Second, while asymptotic theory results for the case of nonstationary predictive regression models are well developed (see   \cite{jansson2006optimal}, \cite{phillips2013predictive}, \cite{breitung2015instrumental}, \cite{lee2016predictive}, \cite{kostakis2015Robust}, \cite{kasparis2015nonparametric}, \cite{andersen2021consistent} and \cite{liu2023robust} among others) and systems of predictive and cointegrated regressors (see \cite{Phillips2008limit}, \cite{magdalinos2009limit} and \cite{magdalinos2021least}), as well as limit results for the case of network vector autoregressive models under time series stationarity (see \cite{zhu2017network}) and inference techniques under cluster dependence in two or more dimensions (see \cite{menzel2021bootstrap}, \cite{olmo2023nonparametric}); there is no related asymptotic theory that combines these features in a unified framework (see, also discussion in \cite{katsouris2023limit, katsouris2023optimal}).  Third, and perhaps the most profound challenge that would need to be addressed is the fact that the presence of stationary and nonstationary regressors, in the form of partially nonstationary and cointegration dynamics (see, \cite{ahn1990estimation},  \cite{toda1995statistical},  \cite{cavanagh1995inference}, \cite{paruolo1997asymptotic} and \cite{poskitt2006identification}) along with near unit roots (see, \cite{phillips2007limit}), satisfy certain properties and regularity conditions which would otherwise be violated under the presence of network dependence, if these combined features are not being taken care properly\footnote{Suitable markovian conditions are commonly used to preserve the time series structure when network dependence is jointly modelled in the time series dimension. Although the markovian property, as captured by the transition matrix. can be expressed in terms of spatial dependence, in this paper we consider a more general form of dependence without restricting the possible dependence structures only to spatio-temporal processes.} (see, also \cite{white2000asymptotic}).

\newpage

The aforementioned challenges have an impact on inferential procedures and especially due to the well-known problems with obtaining uniform inference results to the conventional literature of predictive and cointegrating regression models (see,     \cite{mikusheva2007uniform} and \cite{phillips2014confidence}). Thus, our contributions in this paper are summarized as below:
\begin{itemize}

\item Our econometric environment corresponds to multivariate time series processes under network dependence. We develop asymptotic theory and estimation techniques for inference under two large-sample regimes such that $(i)$ with increasing time sample size, $T \to \infty$, and fixed network dimension, denoted by $N$, and $(ii)$ with $N \to \infty$ and $T_N \to \infty$, where the temporal size depends on $N$. In other words, the case in which both indices tend to infinity has certain challenges when deriving the asymptotic properties or such complex tail dependent processes. 

\item Our asymptotic theory analysis is developed based on the classical joint weak convergence arguments of functionals of Brownian motion equipped with $J_1$ topology of $\mathcal{D}_{ \mathbb{R}^p } \left( [0,1] \right)$, which holds regardless of the presence of network dependence. Moreover, we carefully explain the additional necessary conditions to ensure such that such topological convergence results are still valid in a more general autoregressive environment.  

\end{itemize}
The organization of the paper is as follows. Section 2 introduces the Network Vector Autoregressive Model with nonstationary Regressors. Section 3 develops large-sample theory for the proposed modelling environment. Section 4 explains the Monte Carlo Simulation study of the paper and Section 5 explains the empirical application. Section 6 concludes. Technical proofs can be in the \hyperref[appA]{Appendix} of the paper.
For any real arbitrary matrix $\boldsymbol{A}$, the norm is denoted by $\norm{ \boldsymbol{A} }$ and corresponds to the Frobenius norm defined by $\norm{ \boldsymbol{A} } = \sqrt{ \mathsf{trace} ( \boldsymbol{A}^{\prime} \boldsymbol{A} ) }$. Let $\lambda_i ( \boldsymbol{M} )$ denote the $i-$th largest eigenvalue of an $( n \times n)$ symmetric matrix $\boldsymbol{M}$ with its eigenvalues such that $\lambda_1 ( \boldsymbol{M} ) \geq ... \geq  \lambda_n ( \boldsymbol{M} ) $. The spectral norm of $\boldsymbol{A} $ is denoted by $\norm{ \boldsymbol{A}  }_2$, such that, $\norm{ \boldsymbol{A}  }_2 = \sqrt{ \lambda_1 ( \boldsymbol{A}^{\prime} \boldsymbol{A}  )  }$, is maximum column sum norm is denoted by $\norm{ \boldsymbol{A} }_1$, such that, $\norm{ \boldsymbol{A} }_1 = \mathsf{max}_{ 1 \leq j \leq n}  \sum_{i = 1}^m | A_{ij} |$ and its maximum row sum norm is denoted by $\norm{ \boldsymbol{A} }_{ \infty }$, such that, $\norm{ \boldsymbol{A} }_{ \infty } = \mathsf{max}_{ 1 \leq i \leq n}  \sum_{i = 1}^m | A_{ij} |$.  Moreover, the operator $\overset{P}{\to}$ denotes convergence in probability, and $\overset{D}{\to}$ denotes convergence in distribution.

\subsection{Prior and related work}

A large stream of literature methodologies for modelling cross sectional dependence and heterogeneity via the use of dynamic panel models. In particular,  \cite{kapetanios2014nonlinear} develop asymptotic theory for  nonlinear panel models with cross-sectional dependence. Moreover, \cite{huang2020two} propose a network autoregressive model for two-mode networks where the econometric specification allows for different network autocorrelation coefficients such that 
\begin{align}
Y_1 &= \rho_{12} W_{12} Y_2 + X_1 \beta_1 + \varepsilon_{1},
\\
Y_2 &= \rho_{21} W_{21} Y_1 + X_2 \beta_2 + \varepsilon_{2}, 
\end{align}

\newpage

where $\varepsilon_{1}$ and $\varepsilon_{2}$ are assumed to be independent. In other words, the particular structure measures the response collected for the $k-$th group of nodes. Then, it can be shown that an equivalent model representation is given by $Y = \left(  I_N - \mathbb{W}_{\rho} \right)^{-1} ( \mathbb{X} \beta + \varepsilon )$, where $N = n_1 + n_2$ corresponds to the network size as the sum of the sample size across the two groups. The estimation of the model parameters is achieved using the quasi-maximum likelihood approach although robustness is achieved when heteroscedasticity is properly modelled (see, also  \cite{anufriev2015connecting}).  On the other hand, \cite{huang2020two} propose an approximation obtained via the least squares estimator to handle the computational complexity due to the large-scale network while in our case the IVX estimator is known to handle the aspects of endogeneity and unknown form of persistence. 

Further examples include the multivariate spatial autoregressive model for large scale social networks proposed by \cite{zhu2020multivariate}.  All aforementioned studies correspond to modelling the relationship between the node-specific responses and a set of exogenous covariates for each node. Recently, \cite{meitz2022subgeometrically} propose subgeometrically ergodic autoregressions with autoregressive conditional heteroscedasticity. The particular time series approach is applicable for modelling univariate nonlinear autoregressions with autoregressive conditional heteroscedasticity of the form: $y_t = \alpha_1 y_{t-1} + ... + \alpha_p y_{t-p} + g( u_{t-1}) + \sigma_t \varepsilon_t$. The types of models we consider in this paper are of the form: $y^{(i)}_t =  \beta_0 + \beta_1 n_i^{-1} \sum_{j=1}^N \alpha_{ij} y^{(j)}_{t-1} + \beta_2 y_{t-1}^{(i)} + \varepsilon_t^{(i)}$  as well as $y^{(i)}_t =  \beta_0 + \beta_1 n_i^{-1} \sum_{j=1}^N \alpha_{ij} y^{(j)}_{t-1} + \beta_2 y_{t-1}^{(i)} + X^{\prime}_i \gamma +  \varepsilon_t^{(i)}$, with the only difference that any node-specific covariates, as  specified by \cite{zhu2017network} is replaced by a set of nonstationary regressors. Another relevant framework is proposed by \cite{nicholson2017varx} who consider a high-dimensional VAR model with exogenous variables with a vector of parameters estimated based on the objective function $\underset{ \boldsymbol{\mu}, \boldsymbol{\Phi}, \boldsymbol{\beta}  }{ \mathsf{arg \ min} } \sum_{t=1}^T \norm{ \boldsymbol{y}_t - \boldsymbol{\mu} - \sum_{i=1}^p \boldsymbol{\Phi}_{(i)} \boldsymbol{y}_{t-j} - \sum_{j=1}^q \boldsymbol{\beta}_{(j)} \boldsymbol{x}_{t-j} }$. Moreover, the authors consider an extension to unit-root dynamics but they tackle the presence of nonstationarity using stationarity transformations which have the disadvantage of destroying information about the long-run dynamics of regressors with possible cointegration and nonstationarity. 

In our study we study the presence of nonstationarity via the local-to-unity parametrization to model persistent data which we convert to mildly integrated using the IVX filtration which controls the degree of endogeneity between the innovation term of predictors and the innovation term of the predictive regression.  Our proposed approach aims to unify these two aspects; (i) the network structure, characterized by the adjacency matrix, which allows network dependence in Vector Autoregression models, and (ii) the persistence properties of regressors pioneered with the seminal work of \cite{Phillips1987time, Phillips1987towards} and \cite{phillips2007limit}. To the best of our knowledge, the inclusion of persistence properties of regressors for the development of the asymptotic theory of the NVAR model its a novel contribution to the literature regardless of employing limit theory and stochastic calculus algebra from existing studies for the development of our asymptotic theory. The studies of \cite{katsouris2021optimal} and \cite{katsouris2023estimating, katsouris2023statistical}, motivated from the aspects of financial connectedness (see, \cite{diebold2014network} and \cite{barunik2018measuring}) and systemic risk were the first to bridge the gap in these two streams of literature by modelling the persistence when estimating systemic risk measures as the tail risk measures in \cite{Adrian2016covar} and \cite{hardle2016tenet}.

\newpage

\section{Econometric Identification and  Estimation }

\subsection{Network Vector Autoregression}
\label{section2}

Consider a high dimensional network of size $N$ and $Y_{i(t)}$ to be the response variable which corresponds to node $i$ obtained at time $t$. For each node $i$, we assume the existence of a $p-$dimensional node specific random vector of regressors $X_{i(t)} = \left( X_{i1}, ...,  X_{ip} \right)^{\prime} \in \mathbb{R}^p$. Therefore, to model $Y_{i(t)}$, we propose the following $\mathsf{NVAR}(1)$ model 
\begin{align}
\label{model1}
y_{i(t)} &= \beta_0 + \underbrace{ \beta_1 n_i^{-1} \sum_{j=1}^N \omega_{ij} y_{j(t-1)} + \beta_2 y_{i(t-1)} }_{(\text{network effect})}  + A X_{i(t-1)} + \epsilon_{i(t)}
\\
\label{model2}
\boldsymbol{X}_{i(t)} &= \boldsymbol{R}_n \boldsymbol{X}_{i(t-1)}  + \boldsymbol{U}_{i(t)}
\end{align}
with $1 \leq t \leq T$, where the autoregressive coefficient matrix $\boldsymbol{R}_n$ is expressed as below
\begin{align}
\label{model3}
\boldsymbol{R}_n &= \left( \boldsymbol{I}_p - \frac{\boldsymbol{C}_p }{ T} \right),
\end{align} 
with $\boldsymbol{C}_p = \text{diag} \left\{ c_1,..., c_p \right\}$ such that $c_i$ denotes the unknown coefficient of persistence such that $n_i = \sum_{ j \neq i} \omega_{ij}$ is the total number of nodes that $i$ follows, which is the out-degree measure. 
\begin{align}
\begin{bmatrix}
Y_{1}^{i}  \\
\vdots    \\
\vdots     \\
Y_{T}^{i} 
\end{bmatrix}
=
\begin{bmatrix}
1   & 1   &    & ... & 0   \\
... & ... &    & ... & ... \\
... & ... &    & ... & ... \\
1   & 1   &    & ... & 0 
\end{bmatrix}
\begin{bmatrix}
X_{1(t-1)}^{i}  \\
\vdots    \\
\vdots     \\
X_{p(t-1)}^{i} 
\end{bmatrix}^{\prime} +
\begin{bmatrix}
U_{1(t)}^{i}  \\
\vdots    \\
\vdots     \\
U_{p(t)}^{i} 
\end{bmatrix}
\end{align}
In matrix form we have that the regressors are generated via 
\begin{align}
\begin{bmatrix}
X_{1(t)}^{i}  \\
\vdots    \\
\vdots     \\
X_{p(t)}^{i} 
\end{bmatrix}^{\prime}
=
\begin{bmatrix}
\left( 1 - \frac{c_1}{ T} \right) & 0 & ... & ... & 0 \\
0 & \left( 1 - \frac{c_2}{ T } \right) & 0 & ... & 0 \\
0 & ... & .... & ... & 0 \\
0 & ... & .... & ... &  \left( 1 - \frac{c_p}{ T } \right)
\end{bmatrix}
\begin{bmatrix}
X_{1(t-1)}^{i}  \\
\vdots    \\
\vdots     \\
X_{p(t-1)}^{i} 
\end{bmatrix}^{\prime} +
\begin{bmatrix}
U_{1(t)}^{i}  \\
\vdots    \\
\vdots     \\
U_{p(t)}^{i} 
\end{bmatrix}
\end{align}
for $i \in \left\{ 1,..., N \right\}$. The NVAR(1) model given by \eqref{model1}-\eqref{model3} implicitly assumes that a particular node $i$ can be affected by another node $j$, if and only if the pair $(i,j)$ are interconnected as described by the nonstochastic adjacency matrix $ \left[ \boldsymbol{\Omega} \right]_{ij}$. Therefore, in this paper we propose a network vector autoregression (NVAR) model with nonstationary regressors. The $\mathsf{NVAR}(1)$ model assumes that each node's response at a given time time point is a linear combination of (a) its lag value, (b) the average of its connected neighbours, (c) a set of node-specific regressors; and (d) an independent noise. 

\newpage

\subsection{Statistical Framework}

Recall that $N$ is the network size and $Y_{it}$ is the stationary regressand that corresponds to the $i-$th subject at time point $t$. In particular, we focus on the statistical estimation of $\mathsf{NVAR}$ models with nonstationary regressors for the case that their asymptotic properties depend on both $N$ (size) and $T$ (time). Therefore, when the dependence of asymptotic approximations to sample moments relies on both indices, such that $N$ and $T$, tend to infinity, it requires us to employ a double asymptotic regime. At the same time an additional challenge we need to address is the presence of nonstationary regressors when deriving asymptotic theory. However, within our proposed econometric framework and in contrast to the study of \cite{zhu2017network} and \cite{zhu2020grouped}, we replace the $p-$dimensional node-specific random vector with a $p-$dimensional node-specific time indexed vector of nonstationary regressors, which is assumed to be generated using a local-to-unity parametrization. Specifically, the LUR parametrization for the nonstationary regressors of the model along with the network-dependent covariates aims to model cointegration dynamics.  

To the best of our knowledge our study is the first to consider a network-type of predictability in the same spirit as the conventional predictability literature based on  nonstationary time series regressions. In particular, within the predictive regression literature the regressand corresponds to a continuous stationary covariate such as stock returns while regressors correspond to persistent data\footnote{ A different stream of literature considers the case in which the response variable represents a time series of counts, which is useful when modeling for example corporate defaults as in \cite{agosto2016modeling}) as well as \cite{armillotta2022nonlinear} who consider modelling count data time series reponsens using nonlinear NVARs.}. Motivated from both of these observations we propose a network-dependent representation of network interactions under the presence of time-series nonstationarity while we consider suitable stability conditions for increasing network size. We believe that our proposed system representation as well as identification and estimation strategy are of relevance both from the theoretical and applied econometrics perspective. In terms of asymptotic theory properties we consider sufficient conditions for the information matrix to be stochastic equicontinuous. Moreover,  the main challenge is that the presence of the nuisance parameter of persistence cannot be consistently estimated which is an aspect discussed in several papers found in the predictive regression literature.  In particular, for a sequence of parameters $\boldsymbol{R}(n) = \boldsymbol{I}_p - \boldsymbol{C}_p / n$, where the real part of the eigenvalues of $\boldsymbol{C}_p  \in \mathbb{R}^{ p \times p }$ are all strictly negative, then the statistical problem is equivalent to estimating $\boldsymbol{C}_p = n \big( \boldsymbol{I}_p - \boldsymbol{R}(n) \big)$ which implies that in this setting the matrix $ \boldsymbol{R}(n) $ can only be estimated at a rate of $O ( n^{-1} )$. 

Within our framework we consider the uniform convergence of random variables. Then, for a $\mathsf{VAR}(1)$ process of the form $\boldsymbol{X}_t = \boldsymbol{R}(n) \boldsymbol{X}_{t-1} + \boldsymbol{\varepsilon}_t$, we denote with $\boldsymbol{\vartheta} := ( \boldsymbol{R}, \boldsymbol{\Sigma} )$ where $\boldsymbol{\Sigma} := \mathbb{E} \left(  \boldsymbol{\varepsilon}_t \boldsymbol{\varepsilon}_t^{\prime} | \mathcal{F}_{t-1} \right)$.  Notice that for for every set $\boldsymbol{\vartheta}$, there are infinitely many processes $\boldsymbol{X}_t$ and $\boldsymbol{\varepsilon}_t$ satisfying the assumptions above. Consequently, by imposing certain requirements on the parameter $\boldsymbol{\vartheta}$ and specifically on the real-valued matrix $\mathbb{R}$, then all these processes are ensured to exhibit the same limiting behaviour.

\newpage 

We shall clarify that this is not a parameter restriction (as commonly done in the SVAR literature), but simply an a priori assumption regarding the asymptotic behaviour of certain stochastic processes (e.g., see ).  Moreover, for every $\boldsymbol{\vartheta} \in \Theta$, suppose that there exists matrices $\boldsymbol{\Gamma} \in \mathbb{C}^{ p \times p  }$ and $\boldsymbol{J} \in \mathbb{C}^{ p \times p  }$ such that $\boldsymbol{J} $ is a Jacobian matrix and it holds that $\boldsymbol{R}(n) \equiv \boldsymbol{\Gamma} \boldsymbol{J} \boldsymbol{\Gamma}^{-1}$. Then, up to the reordering of the eigenvalues, the matrix $\boldsymbol{J} $ is unique and satisfies $\boldsymbol{J}  \in \mathcal{J} \big(  |  \lambda_1   | \big)$ which is called the Jordan canonical form of the LUR matrix $\boldsymbol{R}(n)$ (see, \cite{dou2021generalized}).   

Consider the LUR parametrization of a $\mathsf{VAR}(1)$ model expressed as below
\begin{align*}
\boldsymbol{X}_t = \left(  \boldsymbol{I}_p - \frac{ \boldsymbol{C}  }{ n } \right) \boldsymbol{X}_{t-1} + \boldsymbol{u}_t
\end{align*}
Notice that we denote with $\boldsymbol{\Xi}_{xx} := \mathbb{E} \big[  \boldsymbol{X}_t  \boldsymbol{X}_t   ^{\prime}  |  \mathcal{F}_{t-1}  \big]$, which satisfies 
\begin{align*}
\boldsymbol{\Xi}_{xx} = \xi^2 \left(  \boldsymbol{I}_{p^2} - \left(  \boldsymbol{I}_p - \frac{ \boldsymbol{C}  }{ n } \right) \otimes \left(  \boldsymbol{I}_p - \frac{ \boldsymbol{C}  }{ n } \right)  \right)^{-1}.
\end{align*}
Moreover, we can express the diagonal matrix that includes the nuisance parameters of persistence 
\begin{align*}
\boldsymbol{C} &\equiv \boldsymbol{\Gamma} \boldsymbol{J} \boldsymbol{\Gamma}^{-1}
\\
\left(  \boldsymbol{I}_p - \frac{ \boldsymbol{C}  }{ n } \right)  &\equiv  \boldsymbol{\Gamma}   \left(  \boldsymbol{I}_p - \frac{ \boldsymbol{J}_p  }{ n  }  \right) \boldsymbol{\Gamma}^{-1}
\end{align*}

In a similar fashion, lets suppose that we have the local-to-unity model of a univariate time series such that $x_t = \left(  1 - \frac{c}{n} \right) x_{t-1} + u_t$, then if we assume that there exists $p$ autoregressive roots such that 
\begin{align*}
\left(  1 - \rho_1 L \right) \left(  1 - \rho_2 L \right) ... \left(  1 - \rho_p L \right)  \big( x_t - \mu \big) \equiv u_t.
\end{align*}
where there is a sequence of nuisance parameters of persistence $\left\{ c_j \right\}_{j=1}^p$. Notice that the above expression corresponds to the case of a univariate time series which is expressed in the form of $p$ autoregressive near unit roots in the similar spirit as an $AR(p)$ model has $p$ stationary roots. On the other hand, a VAR$(1)-$model corresponds to the case where we have, say, a $d-$dimensional (multivariate) series $\boldsymbol{x}_t$, which has near unit roots. Similarly, a $\mathsf{VAR}(p)-$model implies the presence of $p-$LUR across the $d-$dimensional vector, which is again different than a multivariate $\mathsf{VAR}(p)-$model with stationary roots (i.e., meaning stable without any LUR parametrization). There are of course the econometric specifications of a $\mathsf{SVAR}(1)$, $\mathsf{SVAR}(p)$ under time series stationarity and $\mathsf{SVAR}(p)$ under time series nonstationarity which are all representations beyond the scope of this paper. Next, we shall focus on the required assumptions that we need to impose on the eigenvalues of the sample matrices.  Notice that we assume that uniform convergence of random matrices means uniform convergence of the vectorization of these matrices.  Although it is not trivial to conduct inference on the matrix $\boldsymbol{R}_n$ due to the presence of the nuisance parameter $C_n$, which cannot be uniformly consistently estimated, the IVX estimator of the matrix induces mixed Gaussian asymptotics.

\newpage

\subsection{Main Assumptions}

\begin{assumption}
Within the stationary framework we impose the following assumptions: 
\begin{itemize}
\item[\textbf{(C1)}] \textbf{(Nodal Assumption)} Assume that $Z_i$'s are \textit{i.i.d} random variables, with mean 0 and covariance $\Sigma_z \in \mathbb{R}^{p \times p}$ and finite fourth order moments. Moreover, for this case $\left\{ Z_i \right\}$ and $\left\{ \epsilon_{it} \right\}$ are assumed to be mutually independent.

\item[\textbf{(C2)}] \textbf{(Network Structure)} Assume that $\left\{ \boldsymbol{\Omega}_i \right\}$ is a sequence of nonstochastic matrices such that $i \in \left\{ 1,...,N  \right\}$.

\begin{itemize}
\item[\textbf{(C2.1)}] \textbf{(Connectivity)} Treat $\boldsymbol{\Omega}$ as a transition probability matrix of a Markov Chain, whose state space is defined as the set of all the nodes in the network, for $i \in \left\{ 1,...,N  \right\}$. We assume that the Markov chain is irreducible and aperiodic. Furthermore, we define $\pi = \left( \pi_1,..., \pi_N \right)^{\prime} \in \mathbb{R}^N$ as the stationary distribution of the Markov chain, such that (a) $\pi \geq 0$ and $\sum_{i = 1}^N \pi_i = 1$, (b) $\pi = W^{\prime} \pi$. Moreover, $\sum_{i=1}^N \pi_i^2$ is assumed to converge to 0 as $N \to \infty$.  

\item[\textbf{(C2.2)}] \textbf{(Uniformity)} Define $W^{*} = W + W^{*}$ as a symmetric matrix. Assume that $\lambda_{\text{max}} \left( W^{*} \right) = \mathcal{O} \left( \text{log} N \right)$, where $\lambda_{\text{max}} \left( A \right)$ denotes the largest absolute eigenvalue of an arbitrary symmetric matrix $M$.  
\end{itemize}

\item[\textbf{(C3)}] \textbf{(Law of Large Numbers)} Define $Q := ( I - G )^{-1} ( I - G ^{\prime})^{-1}$, and recall $G = \beta_1 W + \beta_2 I$. Then, assume that the following limit exists: $\kappa_1 = \text{lim}_{ N \to \infty} N^{-1} \text{trace} \left\{ \Gamma(0) \right\}$, $\kappa_2 = \text{lim}_{ N \to \infty} N^{-1} \text{trace} \left\{ W \Gamma(0) \right\}$, $\kappa_3 = \text{lim}_{ N \to \infty} N^{-1} \text{trace} \left\{ (I - G)^{-1} \right\}$ and $\kappa_4 = \text{lim}_{ N \to \infty} N^{-1} \text{trace} \left\{ Q \right\}$, where $\kappa_1, \kappa_2, \kappa_3$ and $\kappa_4$ are fixed constants.
\end{itemize}
\end{assumption}

\begin{remark}
Some relevant remarks regarding our conditions are summarized below: 
\begin{itemize}

\item Condition (C1) provides a basic assumption regarding the nodal regressors and the innovation sequence $\epsilon_{it}$. Moreover, we can check how to correctly impose the covariance structure and the estimation of the corresponding covariance matrices as in PM. The CLT above does not apply in the case of stochastic integrals (only valid for the stationary framework). 

\item  Condition (C2) is related to the network structure, as characterized by the adjacency matrix $W$. Firstly, condition (C2.1) assumes that all nodes are reachable to each other, that is, irreducibility property. More specifically, for two arbitrary nodes  $i$ and $j$ a path of finite length connecting $i$ and $j$ should exist. A simple sufficient condition for both irreducibility and aperiodicity is that the network is always fully connected after a finite number of steps. That is, there exists an $n^{*}$ such that , for any $n \geq n^{*}$, each component in $W^n$ is always positive. Secondly, condition (C2.2) requires that the network structure should admit certain uniformity property so that the diverging rate $\lambda_{ \text{max}} \left( W^{*} \right)$ should be sufficiently slow. 

\end{itemize}
\end{remark}

\newpage

For stationary and ergodic time series models the common practice is to establish the existence and uniqueness of solutions of related recurrent equations of the form $\boldsymbol{X}_t = \boldsymbol{A}_t \boldsymbol{X}_{t-1} + \boldsymbol{B}_t, t \in \mathbb{Z}$, as in \cite{fort2005subgeometric}, \cite{meitz2021subgeometric, meitz2022subgeometrically} and \cite{matsui2022characterization} who are exploiting the property that a class of BEKK-ARCH processes have multivariate stochastic recurrence equation representations to show the existence of strictly stationary solutions under mild conditions (see, also \cite{meitz2008ergodicity, meitz2008stability}). Moreover, \cite{doukhan2023stationarity} study the stationarity and ergodic properties for observation-driven models in random environments using strict exogeneity assumptions. Specifically, the existence of stationary solutions in the sequential exogeneity framework often relies on additional Lipschitz type properties. However, in our study we consider a specific functional form which includes both nonstationary regressors as well as network-dependent covariates, thereby making it more challenging to either establish a direct recurrent equation representation or employing other VAR representations as done in the SVAR literature.

\subsubsection{Stability Conditions}

An alternative way for classifying the stability conditions of the autoregressive processes which is commonly done by the assumption on the persistence properties of regressors as in \cite{kostakis2015Robust} and \cite{magdalinos2021least}, is to consider relevant eigenvalue conditions on the LUR coefficient matrix which are based on whether the ordered eigenvalues are below unity or not (e.g., see \cite{holberg2023uniform}). Nevertheless, such conditions can be also thought as a direct implication of the persistence class given by \cite{PM2009econometric} conditions and the spectral radius conditions in M $\&$ PCB (2020). 
 
Consider the case where we split the parameter space $\Theta$ into two overlapping regions with respect to the eigenvalues such that 
\begin{align*}
R_n^{(1)} := \left\{ \theta \in \Theta  : | \lambda_1  | \leq 1 - \frac{ \mathsf{log}  }{ n }  \right\}, \ \ \  R_n^{(0)} := \left\{ \theta \in \Theta  : | \lambda_1  | \leq 1 - \frac{ \mathsf{log}  }{ n }  \right\}
\end{align*}

Notice that for multivariate time series settings cointegrated systems or almost cointegrated systems in the case where the roots are only close to unity. Therefore, due to the presence of asymptotic distribution discontinuities in order to ensure limit results of known form proper normalizations of the sample covariances are required\footnote{In particular, in Structural Vector Autoregressive models with nonstationary regressors in order to ensure valid uniformly inference over a set of parameters including processes that are cointegrated and with roots arbitrarily close to the unit circle, adaptive estimation and inference procedures are needed, although that case is beyond the scope of this paper which are  currently investigated in a different paper.}. Usually in the literature the assumption regarding the LUR matrix is that an assumption is imposed that $\boldsymbol{R}_n$ is either a drifting sequence of diagonal matrices or a fixed diagonal matrix. We focus on proving that the asymptotic distributions of the relevant sample covariances can be approximated by stochastic integrals of Orneisten-Uhlenbeck processes.  Therefore, we aim to better understand what is the relation between  network dependence and the unknown degree of persistence.

\newpage

In particular, it is the case that the OLS estimator is not standard normal distribution asymptotically, then we can examine the implementation of the IVX estimator as a filtration method for both the persistence and the network dependence.  In other words, we consider robust estimation and inference methods for a Network Vector Autoregression Model with LUR regressors. The model we consider is closely related to the NAR model proposed by \cite{zhu2017network} and \cite{zhu2019network} as well as the frameworks proposed by \cite{magdalinos2009limit}.

\subsection{Limit theory for Vector Autoregression} 

\subsubsection{Weak Convergence Results}

To derive the asymptotic theory we employ the local-unit-root specification proposed by \cite{phillips1987time}. The regressors are assumed to be generated via the LUR process $X_{t} = \left( I_p - \frac{C_p}{T^{\lambda} } \right)$ $X_{t-1} + U_t$, with $\lambda < 1, \lambda \in (0,1) \ \text{or} \ \lambda > 1$. For example, for the case $\lambda = 1$, we consider the following $p-$dimensional Gaussian process
\begin{align}
\displaystyle J_C(r) = \int_0^r e^{C(r-s)} d B_u(s), \ \ \ \ r \in (0,1).
\end{align} 
which satisfies the Black-Scholes differential equation
$d J_c(r) \equiv C J_C(r) + d B_u(r)$, with $J_C(r) =0$, implying also that $J_C(r) \equiv \sigma_v J_C(r)$, where $\displaystyle J_C(r) =  \int_0^r e^{C(r-s)} d W_u(s)$ and $J_C(r)$ the Ornstein-Uhlenbeck, (OU) process,\footnote{The OU is a stationary Gaussian process with an autocorrelation function that decays exponentially over time. Moreover, the continuous time OU diffusion process has a unique solution.}
which encompasses the unit root case such that $J_C(r) \equiv B_u(r)$, for $C = 0$. 
 
\begin{theorem}[Uniform Convergence of Covariance Matrices (see, \cite{holberg2023uniform})]
\

Under assumptions above, and for the enlarged probability space $( \Omega, \mathcal{F}, \mathbb{P}  )$, there exists a standard $d-$dimensional Brownian motion, denoted by $\left\{ \boldsymbol{W}(t) \right\}_{ t \in [0,1]  }$, and a family of stochastic processes $\left\{ \boldsymbol{J}_{\boldsymbol{C}} (t) \right\}_{ t \in [0,1], n \in \mathbb{N} }$ such that 
\begin{align*}
\boldsymbol{J}_{ \boldsymbol{C}  } (t) = \int_0^t e^{ (t-s) \boldsymbol{C}  } \boldsymbol{\Gamma} \boldsymbol{\Sigma}^{1/2} d  \boldsymbol{W}(s), \ \ \text{with} \ \  \boldsymbol{J}_{ \boldsymbol{C} }(0) = \boldsymbol{0}.
\end{align*}
\end{theorem}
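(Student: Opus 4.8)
The plan is to recognize this as a multivariate, parameter-uniform version of the Phillips local-to-unity invariance principle, and to prove it by combining a strong-approximation (Skorokhod-type) embedding for the partial sums of the innovations with a deterministic approximation of the autoregressive kernel. The target object $\boldsymbol{J}_{\boldsymbol{C}}$ is the mild solution of the linear SDE $d\boldsymbol{J}_{\boldsymbol{C}}(t) = \boldsymbol{C}\,\boldsymbol{J}_{\boldsymbol{C}}(t)\,dt + \boldsymbol{\Gamma}\boldsymbol{\Sigma}^{1/2}\,d\boldsymbol{W}(t)$ with $\boldsymbol{J}_{\boldsymbol{C}}(0)=\boldsymbol{0}$, so the real content of the statement is the joint construction of $\boldsymbol{W}$ and the finite-$n$ objects on one enlarged space, together with the representation of the limit as a stochastic convolution against $\boldsymbol{W}$.

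First I would establish the functional CLT for the driving noise. Under the nodal and moment conditions (C1) together with the martingale-difference structure $\mathbb{E}[\boldsymbol{u}_t\boldsymbol{u}_t'\mid\mathcal{F}_{t-1}]=\boldsymbol{\Sigma}$ and the finite fourth moments, the normalised partial-sum process $n^{-1/2}\sum_{s=1}^{\lfloor nt\rfloor}\boldsymbol{u}_s$ obeys an invariance principle in $\mathcal{D}_{\mathbb{R}^p}([0,1])$ under $J_1$, with limiting long-run volatility $\boldsymbol{\Gamma}\boldsymbol{\Sigma}^{1/2}$ (the factor $\boldsymbol{\Gamma}$ arising from the similarity transform $\boldsymbol{C}=\boldsymbol{\Gamma}\boldsymbol{J}\boldsymbol{\Gamma}^{-1}$ that reduces $\boldsymbol{C}$ to its Jordan form). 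Passing to a Komlós--Major--Tusnády-type strong approximation, or at minimum invoking the Skorokhod representation theorem, lets me construct on an enlarged probability space $(\Omega,\mathcal{F},\mathbb{P})$ a single Brownian motion $\boldsymbol{W}$ and copies of the partial sums such that $\sup_{t\in[0,1]}\norm{n^{-1/2}\sum_{s\le nt}\boldsymbol{u}_s - \boldsymbol{\Gamma}\boldsymbol{\Sigma}^{1/2}\boldsymbol{W}(t)}\to 0$ almost surely. This coupling is precisely what the phrase ``enlarged probability space'' buys.

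Next I would unfold the LUR recursion. Iterating $\boldsymbol{X}_t=\boldsymbol{R}_n\boldsymbol{X}_{t-1}+\boldsymbol{u}_t$ yields $\boldsymbol{X}_{\lfloor nt\rfloor}=\sum_{s=1}^{\lfloor nt\rfloor}\boldsymbol{R}_n^{\,\lfloor nt\rfloor-s}\boldsymbol{u}_s + \boldsymbol{R}_n^{\,\lfloor nt\rfloor}\boldsymbol{X}_0$, with $\boldsymbol{R}_n=\boldsymbol{I}_p-\boldsymbol{C}/n$. The deterministic ingredient is the uniform kernel bound $\sup_{0\le s\le t\le 1}\norm{\boldsymbol{R}_n^{\,\lfloor n(t-s)\rfloor}-e^{(t-s)\boldsymbol{C}}}\to 0$, which follows from $\boldsymbol{R}_n^{\lfloor n\tau\rfloor}=\exp\{\lfloor n\tau\rfloor\log\boldsymbol{R}_n\}$, the matrix-logarithm expansion $\lfloor n\tau\rfloor\log\boldsymbol{R}_n\to\tau\boldsymbol{C}$ (in the sign convention of the statement), and continuity of the matrix exponential; uniformity in $\boldsymbol{C}$ over the compact stability region $R_n^{(1)}$ is obtained from uniform continuity on compacta. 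Combining this with the strong approximation of the preceding step and a summation-by-parts manipulation converts the discrete convolution into the stochastic integral $\int_0^t e^{(t-s)\boldsymbol{C}}\boldsymbol{\Gamma}\boldsymbol{\Sigma}^{1/2}\,d\boldsymbol{W}(s)$, while under the maintained initialization the term $\boldsymbol{R}_n^{\lfloor nt\rfloor}\boldsymbol{X}_0$ is negligible at the $n^{-1/2}$ scale.

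The hard part will be justifying the passage from the discrete stochastic convolution to the continuous stochastic integral uniformly in the parameter, since stochastic integration is not a $J_1$-continuous functional under mere weak convergence. I would handle this by summation by parts at the discrete level, moving the increments off the noise and onto the smooth kernel $e^{(t-s)\boldsymbol{C}}$ (whose $s$-derivative $-\boldsymbol{C}e^{(t-s)\boldsymbol{C}}$ is bounded uniformly over the compact parameter set): this rewrites everything as a Riemann--Stieltjes integral of the already-approximated partial-sum path plus boundary terms, a genuinely continuous functional to which the continuous mapping theorem applies. Equivalently one may invoke a Kurtz--Protter convergence-of-stochastic-integrals argument using the martingale and uniform-tightness structure of $\boldsymbol{u}_t$. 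The uniformity over $\boldsymbol{\vartheta}\in\Theta$, equivalently over $\boldsymbol{C}$ in $R_n^{(1)}$, is then secured by stochastic equicontinuity of the integral functional in $\boldsymbol{C}$, which reduces to the uniform kernel bound above and to the continuity of $\boldsymbol{C}\mapsto e^{(t-s)\boldsymbol{C}}$; this final uniform step is where the proof does its real work, since it is what upgrades the pointwise-in-$\boldsymbol{C}$ weak limits into the uniform convergence of the sample covariance matrices named in the statement.
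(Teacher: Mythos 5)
There is no proof in the paper to compare your attempt against: the theorem is stated with the citation to Holberg (2023) in its header and is imported wholesale; neither the main text nor the appendix contains any argument for it. Note also that, as printed, the statement is only an existence claim --- it constructs a Brownian motion $\boldsymbol{W}$ and the family $\boldsymbol{J}_{\boldsymbol{C}}$ on an enlarged probability space --- and, despite its title, asserts no convergence of covariance matrices whatsoever; the substantive uniform-convergence content lives entirely in the cited reference. So the only meaningful review is of your proposal on its own terms.

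Your sketch follows what is almost certainly the right route for this kind of result: an invariance principle for the innovation partial sums, a strong-approximation coupling to a single Brownian motion on an enlarged space, the deterministic kernel bound $\sup_{\tau}\norm{\boldsymbol{R}_n^{\lfloor n\tau\rfloor}-e^{\tau\boldsymbol{C}}}\to 0$ over a compact parameter region, and Abel summation to move the integration onto the smooth kernel so that only pathwise-continuous functionals remain; your identification of that last step as the crux (stochastic integration is not $J_1$-continuous) is correct. Three concrete gaps remain. First, Koml\'os--Major--Tusn\'ady is an i.i.d.\ theorem; with the martingale-difference innovations of condition (C1) you need a martingale Skorokhod embedding or a Strassen-type strong approximation, and the attainable rates are weaker --- you cannot simply cite KMT. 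Second, your placement of $\boldsymbol{\Gamma}$ is wrong as reasoned: a similarity transform of the autoregressive matrix cannot alter the covariance of the innovation partial sums, so $n^{-1/2}\sum_{s\leq nt}\boldsymbol{u}_s$ converges to $\boldsymbol{\Sigma}^{1/2}\boldsymbol{W}(t)$ with no $\boldsymbol{\Gamma}$; in the paper $\boldsymbol{\Gamma}$ is the similarity matrix of the Jordan decomposition $\boldsymbol{C}=\boldsymbol{\Gamma}\boldsymbol{J}\boldsymbol{\Gamma}^{-1}$ and enters through the kernel $e^{(t-s)\boldsymbol{C}}=\boldsymbol{\Gamma}e^{(t-s)\boldsymbol{J}}\boldsymbol{\Gamma}^{-1}$, equivalently through the rotated process $\boldsymbol{\Gamma}^{-1}\boldsymbol{X}_t$, not through the long-run volatility of the noise. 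Third, the step that gives the theorem its name --- uniformity over $\Theta$, i.e.\ stochastic equicontinuity in $\boldsymbol{C}$ of the process approximation and then of the quadratic functionals $n^{-2}\sum_t \boldsymbol{X}_{t-1}\boldsymbol{X}_{t-1}^{\prime}$ --- is asserted rather than executed; once the pathwise construction is in place this is plausible over the compact stability region, but it is exactly the work the cited source exists to do, so as written your argument delivers the pointwise-in-$\boldsymbol{C}$ statement and only gestures at the uniform one.
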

However, for the remaining of this paper, we consider the special case in which $\boldsymbol{\Gamma}_p \equiv \boldsymbol{I}_p$, is the identify matrix.  Moreover, suppose that the LUR matrix $\boldsymbol{R}(n)$ can be decomposed as   $\boldsymbol{R}(n) = \begin{bmatrix}
\lambda & 1
\\
0 & \lambda^{\prime}
 \end{bmatrix}$. As we have previously mentioned, for every $\boldsymbol{X}_t$ generated by $\boldsymbol{R}(n) \in \mathbb{R}^{ p \times p }$, there exists $\boldsymbol{\Gamma} \in \mathbb{C}^{ p \times p }$ and $\boldsymbol{J} \in \mathcal{J}$ such that $\boldsymbol{\Gamma}$ is invertible with $\boldsymbol{\Gamma}^{-1}  \boldsymbol{J} \boldsymbol{\Gamma}^{-1} = \boldsymbol{R}(n)$. We begin our asymptotic theory analysis by considering sequence of parameters in the stationary region of the LUR matrix $\boldsymbol{R}(n)$.

\newpage 
 
\begin{theorem}
Suppose that $\boldsymbol{V} \sim \mathcal{N} \left( \boldsymbol{0}, \boldsymbol{I}_{ p^2 } \right)$. For all $\delta > 0$ and $r \in [0,1]$, it holds that 
\begin{align*}
\underset{ n \to \infty  }{ \mathsf{lim}  } \ \underset{ \theta \in \Theta }{ \mathsf{sup} } \ \mathbb{P} \left( \norm{ \frac{1}{n} \boldsymbol{M}^{-1/2} \left(  \sum_{t=1}^{ \floor{nr} } \boldsymbol{X}_{t-1} \boldsymbol{X}_{t-1}^{\prime}  \right) \boldsymbol{M}^{-1/2} - r \boldsymbol{I}_p    } > \delta \right) = 0. 
\end{align*}
In other words, for the special case that $r = 1$ (full sample sum), then the above equation shows that the matrix $\boldsymbol{S}_{xx}$ converges in probability to the identity matrix. 
\end{theorem}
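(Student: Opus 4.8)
The plan is to read $\boldsymbol{M}$ as the stationary second-moment matrix $\boldsymbol{M} = \mathbb{E}_\theta[\boldsymbol{X}_{t}\boldsymbol{X}_{t}^{\prime}]$, i.e. the solution of the discrete Lyapunov equation $\boldsymbol{M} = \boldsymbol{R}(n)\boldsymbol{M}\boldsymbol{R}(n)^{\prime} + \boldsymbol{\Sigma}$ (the object denoted $\boldsymbol{\Xi}_{xx}$ in the earlier display), so that the standardized vector $\boldsymbol{Z}_t := \boldsymbol{M}^{-1/2}\boldsymbol{X}_t$ satisfies $\mathbb{E}_\theta[\boldsymbol{Z}_t\boldsymbol{Z}_t^{\prime}] = \boldsymbol{I}_p$. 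First I would rewrite the object of interest as $\tfrac{1}{n}\sum_{t=1}^{\floor{nr}}\boldsymbol{Z}_{t-1}\boldsymbol{Z}_{t-1}^{\prime}$ and split it into its deterministic mean and a mean-zero fluctuation $\boldsymbol{D}_n := \tfrac{1}{n}\sum_{t=1}^{\floor{nr}}\bigl(\boldsymbol{Z}_{t-1}\boldsymbol{Z}_{t-1}^{\prime} - \boldsymbol{I}_p\bigr)$. The mean part is immediate, since $\tfrac{1}{n}\sum_{t=1}^{\floor{nr}}\mathbb{E}_\theta[\boldsymbol{Z}_{t-1}\boldsymbol{Z}_{t-1}^{\prime}] = \tfrac{\floor{nr}}{n}\boldsymbol{I}_p \to r\boldsymbol{I}_p$ uniformly in $\theta$; so everything reduces to showing $\boldsymbol{D}_n \overset{P}{\to} \boldsymbol{0}$ uniformly over the relevant (stationary) region.

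I would establish the convergence of $\boldsymbol{D}_n$ by a second-moment (Chebyshev) argument. Since $\norm{\cdot}$ is the Frobenius norm and $\boldsymbol{D}_n$ has mean zero, $\mathbb{E}_\theta\norm{\boldsymbol{D}_n}^2 = \sum_{a,b}\mathrm{Var}_\theta\bigl([\boldsymbol{D}_n]_{ab}\bigr)$, and Markov's inequality gives $\mathbb{P}_\theta\bigl(\norm{\boldsymbol{D}_n} > \delta\bigr) \le \delta^{-2}\,\mathbb{E}_\theta\norm{\boldsymbol{D}_n}^2$, so a $\theta$-free bound on $\mathbb{E}_\theta\norm{\boldsymbol{D}_n}^2$ tending to zero yields the stated uniform limit at once. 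Because $\boldsymbol{V}$ is Gaussian, the entrywise variance expands via the Isserlis/Wick formula into products of two autocovariances of the standardized process: writing $\gamma_{ab}(s,t) := \mathrm{Cov}(Z_{s,a},Z_{t,b})$, one has $\mathrm{Cov}(Z_{s,a}Z_{s,b},\,Z_{t,a}Z_{t,b}) = \gamma_{aa}(s,t)\gamma_{bb}(s,t) + \gamma_{ab}(s,t)\gamma_{ba}(s,t)$, and each factor decays geometrically at the rate $|\lambda_1(\boldsymbol{R}(n))|^{|s-t|}$, so the products decay like $|\lambda_1|^{2|s-t|}$.

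Summing the resulting geometric series over the double index gives a bound of the form $\mathbb{E}_\theta\norm{\boldsymbol{D}_n}^2 \lesssim \tfrac{1}{n}\cdot\tfrac{1}{1-|\lambda_1|^2}$, and here the defining constraint of the stationary region $R_n^{(1)}$ enters decisively: on that region $1-|\lambda_1| \ge \tfrac{\log n}{n}$, whence $\tfrac{1}{n(1-|\lambda_1|^2)} \le \tfrac{1}{n(1-|\lambda_1|)} \le \tfrac{1}{\log n} \to 0$. Crucially this upper bound no longer depends on $\theta$, which is exactly what delivers the uniformity of the $\sup_{\theta\in\Theta}$ in the statement; the cutoff $\log n / n$ is the natural dividing line, being just strong enough that $n(1-|\lambda_1|)\to\infty$ forces the fluctuation to vanish, whereas the pure local-to-unity scaling $1-|\lambda_1|\sim c/n$ would leave a nondegenerate (Ornstein--Uhlenbeck) random limit instead.

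The hard part will be making the geometric decay of the autocovariances genuinely uniform over $\Theta$ in the matrix, non-normal setting: when $\boldsymbol{R}(n)$ carries nontrivial Jordan blocks or complex eigenvalues, the powers $\boldsymbol{R}(n)^{h}$ acquire polynomial-in-$h$ prefactors, and I must verify that these do not spoil summability after the $1/n$ normalization. I would control them by passing through the Jordan decomposition $\boldsymbol{R}(n) = \boldsymbol{\Gamma}\boldsymbol{J}\boldsymbol{\Gamma}^{-1}$ introduced earlier and invoking the uniform representation of Theorem~1, which supplies uniform-in-$\theta$ moment bounds for $\{\boldsymbol{X}_t\}$ and absorbs the polynomial factors into the still-summable rate governed by $|\lambda_1|$ (using that $h^{k}\rho^{h}$ remains geometrically summable for any fixed $k$ whenever $\rho<1$, with the sum controlled by a power of $(1-\rho)^{-1}$). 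A secondary and more routine point is to show that the contribution of the initial condition $\boldsymbol{X}_0$, if it is not drawn from the stationary law, is asymptotically negligible; this again follows from the geometric contraction of $\boldsymbol{R}(n)^{t}$ combined with the same gap condition.
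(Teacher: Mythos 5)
You should know at the outset that the paper contains no proof of this theorem: it is stated without argument, like the theorem immediately preceding it that is explicitly credited to \cite{holberg2023uniform}, so your proposal can only be judged on its own merits rather than against the paper's derivation. Its architecture is the natural and essentially correct one: read $\boldsymbol{M}$ as the stationary Lyapunov solution, standardize $\boldsymbol{Z}_t=\boldsymbol{M}^{-1/2}\boldsymbol{X}_t$, split off the deterministic mean $\tfrac{\floor{nr}}{n}\boldsymbol{I}_p\to r\boldsymbol{I}_p$, and control the mean-zero fluctuation $\boldsymbol{D}_n$ by Chebyshev plus a Wick expansion (note this step requires Gaussian innovations; otherwise a conditional fourth-moment bound must replace Isserlis). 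Your sharpest observation is also correct and worth keeping: with $\sup_{\theta\in\Theta}$ taken over a parameter space containing the genuine local-to-unity scaling $1-|\lambda_1|\sim c/n$, the claim is false (the standardized sample covariance then has a nondegenerate Ornstein--Uhlenbeck functional as its weak limit), so the supremum must be restricted to the stationary region $R_n^{(1)}$ with the gap $1-|\lambda_1|\ge \log n/n$; the theorem as printed is too strong.

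The genuine gap sits exactly in the step you flag as ``the hard part,'' and it is quantitative, not cosmetic. At the edge of $R_n^{(1)}$ the entire budget is a single power of $(1-|\lambda_1|)^{-1}$: the argument closes only because $\tfrac{1}{n}(1-|\lambda_1|)^{-1}\le\tfrac{1}{\log n}$. Every device you propose for the non-diagonalizable case spends at least one power more. Summing polynomial-times-geometric envelopes costs $\sum_h h^k\rho^h\asymp k!\,(1-\rho)^{-(k+1)}$, so for any nontrivial Jordan block ($k\ge1$) your bound becomes of order $\tfrac{1}{n}(1-|\lambda_1|)^{-2}\ge n/(\log n)^2\to\infty$; that route only proves the result when $1-|\lambda_1|\gg n^{-1/2}$, far inside the claimed region. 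The conjugation route fares no better: bounding entries of $\boldsymbol{M}^{-1/2}\boldsymbol{R}(n)^{h}\boldsymbol{M}^{1/2}$ by operator norms introduces $\kappa(\boldsymbol{M})^{1/2}$ or the conditioning of $\boldsymbol{\Gamma}$, and $\kappa(\boldsymbol{M})$ itself grows like a power of $(1-|\lambda_1|)^{-1}$ near the boundary, while the preceding Theorem 1 is a strong approximation by OU processes and supplies no uniform moment inequality of the kind you invoke. What the proof actually needs is an estimate that exploits the standardization itself: writing $v_a=\boldsymbol{M}^{-1/2}e_a$, the $\boldsymbol{M}$-weighted Cauchy--Schwarz inequality gives $|\gamma_{ab}(h)|\le \big(v_a^{\prime}\boldsymbol{R}^{h}\boldsymbol{M}(\boldsymbol{R}^{\prime})^{h}v_a\big)^{1/2}$ for the entries of $\mathbb{E}[\boldsymbol{Z}_{t+h}\boldsymbol{Z}_t^{\prime}]$, and the identity $\boldsymbol{R}^{h}\boldsymbol{M}(\boldsymbol{R}^{\prime})^{h}=\boldsymbol{M}-\sum_{j=0}^{h-1}\boldsymbol{R}^{j}\boldsymbol{\Sigma}(\boldsymbol{R}^{\prime})^{j}$ converts the needed bound $\sum_h|\gamma_{ab}(h)|^2\le C(1-|\lambda_1|)^{-1}$ into a uniform mean-residual-life inequality $\sum_{j\ge0}(j+1)\,v^{\prime}\boldsymbol{R}^{j}\boldsymbol{\Sigma}(\boldsymbol{R}^{\prime})^{j}v\le C(1-|\lambda_1|)^{-1}$ over all $v$ with $v^{\prime}\boldsymbol{M}v=1$ --- i.e., that the unit-variance process decorrelates on the time scale $(1-|\lambda_1|)^{-1}$ with a $\theta$-free constant, uniformly across Jordan structures. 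Proving that inequality (or importing it verbatim from \cite{holberg2023uniform}) is the real content of the theorem; your proposal assumes it away at precisely the point where the $\log n/n$ cutoff makes it delicate.
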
 
Some further useful results include the following 
\begin{align*}
\underset{ t \geq 1 }{ \mathsf{sup}  } \ \underset{ \theta \in \Theta }{ \mathsf{sup} } \ \norm{ \mathbb{E} \big[  ( \boldsymbol{I} - \boldsymbol{R}_n )^{1/2}  \boldsymbol{X}_t \boldsymbol{X}_t^{\prime} ( \boldsymbol{I} - \boldsymbol{R}_n )^{1/2} \big]  } < \infty.
\end{align*} 
Furthermore, it holds that 
\begin{align*}
\underset{ t \geq 1 }{ \mathsf{sup}  } \ \underset{ \theta \in R }{ \mathsf{sup} } \ \norm{  ( \boldsymbol{I} - \boldsymbol{R}_n  )^{1/2}  \boldsymbol{M}  ( \boldsymbol{I} - \boldsymbol{R}_n  )^{1/2}  - \boldsymbol{\Sigma}_x } = 0. 
\end{align*}
where $\mathsf{vec} ( \boldsymbol{\Sigma}_x  ) = ( \boldsymbol{I} - \boldsymbol{R}_n  )^{1/2} \otimes ( \boldsymbol{I} - \boldsymbol{R}_n  )^{1/2 \prime} \big( \boldsymbol{I} - \boldsymbol{\Gamma} \otimes \boldsymbol{\Gamma}^{\prime} \big)^{-1} \mathsf{vec} (\boldsymbol{\Sigma})$.

\newpage

\subsubsection{IVX instrumentation}

\begin{example}
Suppose that $\boldsymbol{X}_t$ is a $\mathsf{VAR}(1)-$process, then we can split $X_t = \left(  Y_t, \tilde{X}_t  \right)^{\prime}$  (see also \cite{magdalinos2021least}) and the error term $u_t = ( u_{yt}, u_{xt} )^{\prime}$ into the first coordinate and their last $(p-1)$ coordinates  
\begin{align*}
\boldsymbol{Y}_t &= \boldsymbol{\beta} \tilde{\boldsymbol{X}}_{t-1} + u_{yt}, \tilde{\boldsymbol{X}}_t := \boldsymbol{\Gamma}  \boldsymbol{X}_t
\\
\tilde{\boldsymbol{X}}_t &= \boldsymbol{R}_n \tilde{\boldsymbol{X}}_{t-1} + \boldsymbol{u}_{xt} 
\end{align*} 
\end{example}

Furthermore, we consider the IVX instrumentation procedure which requires that 
\begin{align}
\widetilde{ \boldsymbol{Z} }_{i(t)} = \sum_{j=0}^{t-1} \left(  \boldsymbol{I}_p + \frac{ \boldsymbol{C}_z }{T^{\lambda}}  \right) \big( \boldsymbol{X}_{i(t-j)} - \boldsymbol{X}_{i(t-j-1)} \big)
\end{align}
for some $\boldsymbol{C}_z = \text{diag} \left\{ c_{z1},..., c_{zp} \right\}$ with $c_{zi} < 0$ and $0 < \kappa < 1$.  Notice that the particular instrumentation method is considered to be a linear filtering transformation of the regressor $X_{i(t)}$ into mildly integrated process, as it is explained in \cite{phillips2007limit} and  \cite{PM2009econometric}. However, the novelty here is the fact that we apply the IVX instrument across the nodes of the network, which ensures that the regressos of the NVAR(1) model have less degree of persistence than the original ones. Therefore, the development of a suitable \textit{FCLT} requires to consider the weakly convergence of the matrix moments which are based on the Network Vector Autoregression.  

Furthermore, due to the presence of the nuisance parameter $c_i$ which appear in the LUR process the OLS estimator $\widehat{\theta}$ will be biased (i.e., second degree bias), especially under the assumption of nonzero covariance terms in the covariance matrix of the innovations of the system. By substituting the IVX instrument, estimated in the second estimation step before obtaining the OLS counterpart in the first step, we allow for the abstract degree of persistence to be filtered out. Moreover, for simplicity in the derivations of the asymptotic theory we assume that the regressors for all nodes in the network are identical and belong to the same persistence class as defined by PM.In particular, the IVX instrumentation corresponds to endogeneously generated instruments to slow down the rate of convergence of the estimator enough to ensure mixed Gaussian limiting distributions, which is based on the assumption that all roots converge to unity at the same rate. This significant restriction since it implies that it excludes cases where parts of the process are stationary and other exhibit unit root behaviour such as mixed integration order, but nevertheless there are some studies which consider regressors of mixed integration order (see, \cite{phillips2013predictive} and M $\&$ PCB (2022)).

\newpage

\subsection{Limit theory for Network Vector Autoregression}

\subsubsection{Time Series Stationarity}

Below we follow the framework proposed by \cite{zhu2017network} to motivate further our study. Therefore, we begin by assuming that the NVAR model has exogenous regressors which are time-invariant; then the estimation problem reduces to fitting a Vector Autoregression model with network dependence. Within this setting, assume for simplicity that we have the following NVAR(1) model:
\begin{align}
\label{model4}
Y_{i(t)} &= \beta_0 + \beta_1 n_i^{-1} \sum_{j=1}^N \omega_{ij} Y_{j(t-1)} + \beta_2 Y_{i(t-1)} + \boldsymbol{Z}_{i}^{\prime} \boldsymbol{\xi} + \epsilon_{i(t)}
\end{align}
Estimating the NVAR(1) model using the econometric specification \eqref{model3}, implies that the pair $(Y_i,X_i)$ represents a sequence of stationary random variables. In this case, we have node-specific covariates which are time-invariant. This simplification allows to use the conventional central limit theorem under network dependence for the development of the asymptotic theory of the model estimates.   

Let $\mathbb{Z} = \left( Z_1,...,Z_N \right)^{\prime} \in \mathbb{R}^{N \times p}$ and $\boldsymbol{ \mathcal{B} }_0 = \left( \beta_{01},..., \beta_{0N} \right)^{\prime} = \beta_0 \mathbf{1} + \mathbb{Z} \boldsymbol{\xi} \in \mathbb{R}^N$, where $\textbf{1} = \left( 1,...,1 \right)^{\prime}$ is the unit vector of the same dimension, $\boldsymbol{\xi} = \left( \xi_1,..., \xi_p \right)^{\prime} \in \mathbb{R}^p$ and the vector of response variables across the nodes is denoted with $\mathbb{Y}_t = \left( Y_{1t},..., Y_{Nt} \right)^{\prime} \in \mathbb{R}^N$. Then, we rewrite the model \eqref{model3} in the matrix companion form as below
\begin{align}
\label{model5}
\mathbb{Y}_t = \boldsymbol{ \mathcal{B} }_0 +  \boldsymbol{G} \mathbb{Y}_{t-1} + \boldsymbol{ \mathcal{E} }_t
\end{align} 
where $\boldsymbol{G} := \beta_1 \widetilde{ \boldsymbol{\Omega} } + \beta_2 \boldsymbol{I}$, with $\widetilde{ \boldsymbol{\Omega} }  := \mathsf{diag} \left\{ n_1^{-1},..., n_N^{-1} \right\} \otimes \boldsymbol{\Omega}$, is the row-normalized adjacency matrix, $\boldsymbol{I}$ is the identity matrix with compatible dimension and $\boldsymbol{ \mathcal{E} }_t = \left( \epsilon_{1t},..., \epsilon_{Nt}  \right) \in \mathbb{R}^N$ is the innovation vector. We assume the existence of a non-random adjacency matrix, that captures the network dependence, therefore both $\boldsymbol{G}$ and $\widetilde{ \boldsymbol{\Omega} }$ are nonstochastic. However, the matrix of intercepts $\boldsymbol{ \mathcal{B} }_0$ is a stochastic quantity which has to be estimated.

\subsubsection{Strict Stationarity}

We consider the stationary properties of the time series $\mathbb{Y}_t$ for both the cases where the network has a fixed structure, that is $N$ is assumed to be fixed, as well as the case where the network has an unbounded size, such that $N \to \infty$. 

\begin{theorem}
\label{theorem1}
Suppose that $\mathbb{E} \norm{ Z_i } < \infty$ and $N$ is fixed. If $| \beta_1 | + | \beta_2 | < 1$, then there exists a unique strictly stationary solution with a finite first-order moment to the NVAR(1) model \eqref{model3}. The solution has the following form
\begin{align}
\label{solution}
\mathbb{Y}_t = \left( \boldsymbol{I} - \boldsymbol{G} \right)^{-1} \boldsymbol{ \mathcal{B} }_0 + \sum_{j=0}^{\infty} \boldsymbol{G}^j \boldsymbol{ \mathcal{E} }_{t-j}. 
\end{align}
\end{theorem}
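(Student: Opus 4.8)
The plan is to reduce the statement to the classical stability theory for a stable $\mathsf{VAR}(1)$ recursion, the only non-standard ingredient being the verification that the companion matrix $\boldsymbol{G} := \beta_1 \widetilde{\boldsymbol{\Omega}} + \beta_2 \boldsymbol{I}$ has spectral radius strictly below one under the scalar condition $|\beta_1| + |\beta_2| < 1$. Once this is in hand, existence, the explicit $\mathsf{MA}(\infty)$ form \eqref{solution}, strict stationarity, and uniqueness all follow by standard arguments applied to the companion-form recursion \eqref{model5}.

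First I would establish that $\rho(\boldsymbol{G}) < 1$, where $\rho$ denotes the spectral radius. Since $\widetilde{\boldsymbol{\Omega}}$ is the row-normalized adjacency matrix, its entries are nonnegative and each of its rows sums to one (because $n_i = \sum_{j \neq i} \omega_{ij}$ is exactly the $i$-th row sum of $\boldsymbol{\Omega}$), so $\norm{ \widetilde{\boldsymbol{\Omega}} }_{\infty} = 1$. By the triangle inequality for the maximum-row-sum norm, $\norm{ \boldsymbol{G} }_{\infty} \leq |\beta_1| \norm{ \widetilde{\boldsymbol{\Omega}} }_{\infty} + |\beta_2| \norm{ \boldsymbol{I} }_{\infty} = |\beta_1| + |\beta_2| < 1$. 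Because the spectral radius is dominated by any submultiplicative matrix norm, $\rho(\boldsymbol{G}) \leq \norm{ \boldsymbol{G} }_{\infty} < 1$; in particular $\boldsymbol{I} - \boldsymbol{G}$ is invertible and the Neumann series $\sum_{j=0}^{\infty} \boldsymbol{G}^j$ converges to $(\boldsymbol{I} - \boldsymbol{G})^{-1}$. Next I would verify that the candidate in \eqref{solution} is well defined in $L^1$: by Gelfand's formula $\norm{ \boldsymbol{G}^j }^{1/j} \to \rho(\boldsymbol{G}) < 1$, so $\sum_{j=0}^{\infty} \norm{ \boldsymbol{G}^j } < \infty$, and combined with $\mathbb{E}\norm{ \boldsymbol{\mathcal{E}}_{t-j} } = \mathbb{E}\norm{ \boldsymbol{\mathcal{E}}_0 } < \infty$ this gives $\mathbb{E}\norm{ \sum_{j=0}^{\infty} \boldsymbol{G}^j \boldsymbol{\mathcal{E}}_{t-j} } \leq \mathbb{E}\norm{ \boldsymbol{\mathcal{E}}_0 } \sum_{j=0}^{\infty} \norm{ \boldsymbol{G}^j } < \infty$. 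Since $\mathbb{E}\norm{ Z_i } < \infty$ makes $\boldsymbol{\mathcal{B}}_0 = \beta_0 \mathbf{1} + \mathbb{Z}\boldsymbol{\xi}$ integrable, the drift term $(\boldsymbol{I} - \boldsymbol{G})^{-1} \boldsymbol{\mathcal{B}}_0$ is also integrable, so $\mathbb{Y}_t$ has a finite first moment. Substituting \eqref{solution} into the right-hand side of \eqref{model5}, using the identity $(\boldsymbol{I} - \boldsymbol{G})^{-1} = \boldsymbol{I} + \boldsymbol{G}(\boldsymbol{I} - \boldsymbol{G})^{-1}$ and an index shift in the innovation sum, confirms that it solves the recursion; strict stationarity then follows because $\{ \boldsymbol{\mathcal{E}}_t \}$ is i.i.d. and both the coefficients $\boldsymbol{G}^j$ and the drift are time-invariant, so the finite-dimensional laws of $(\mathbb{Y}_{t_1 + h}, \ldots, \mathbb{Y}_{t_k + h})$ do not depend on $h$.

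For uniqueness I would take any two strictly stationary $L^1$ solutions $\mathbb{Y}_t$ and $\mathbb{Y}_t'$ and note that their difference $\boldsymbol{D}_t := \mathbb{Y}_t - \mathbb{Y}_t'$ obeys the homogeneous recursion $\boldsymbol{D}_t = \boldsymbol{G}\boldsymbol{D}_{t-1}$, hence $\boldsymbol{D}_t = \boldsymbol{G}^m \boldsymbol{D}_{t-m}$ for every $m \geq 1$. Taking expectations of norms and invoking stationarity yields $\mathbb{E}\norm{ \boldsymbol{D}_t } \leq \norm{ \boldsymbol{G}^m } \, \mathbb{E}\norm{ \boldsymbol{D}_0 } \to 0$, so $\boldsymbol{D}_t = \boldsymbol{0}$ almost surely.

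The main obstacle, though a mild one, is the first step: the hypothesis $|\beta_1| + |\beta_2| < 1$ is a statement about two scalars, and it is the row-normalization of $\widetilde{\boldsymbol{\Omega}}$, equivalently the fact that it is a stochastic matrix with $\norm{ \widetilde{\boldsymbol{\Omega}} }_{\infty} = 1$, that converts it into the operator bound $\rho(\boldsymbol{G}) < 1$ driving the entire argument. One caveat worth flagging is that this requires $n_i \geq 1$ for every node (no isolated vertices); otherwise $\widetilde{\boldsymbol{\Omega}}$ is ill defined and the norm bound must be argued on the subnetwork of non-isolated nodes.
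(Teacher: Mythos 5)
Your proof is correct, and it is essentially the standard argument underlying this result: the paper itself states Theorem \ref{theorem1} without reproducing a proof (it is adapted from the NAR framework of \cite{zhu2017network}, whose proof proceeds exactly as you do — bound $\rho(\boldsymbol{G})$ via the row-stochasticity of $\widetilde{\boldsymbol{\Omega}}$, so that $\norm{\boldsymbol{G}}_{\infty} \leq |\beta_1| + |\beta_2| < 1$, construct the $\mathsf{MA}(\infty)$ solution by Neumann-series convergence, and get uniqueness from the homogeneous recursion). Your one minor imprecision — in the uniqueness step the difference $\boldsymbol{D}_t$ need not itself be stationary since the two solutions need not be jointly stationary, but $\mathbb{E}\norm{\boldsymbol{D}_{t-m}}$ is uniformly bounded by the sum of the two first moments, which is all the argument needs — does not affect correctness, and your caveat about isolated nodes ($n_i \geq 1$) is a legitimate point the paper leaves implicit.
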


\newpage

Assuming the existence of the strictly stationary solution \eqref{sol}, we are interested to obtain its conditional distribution given the nodal information set that characterizes the nodes, denoted with $\mathbb{Z}$. Define with $ \mathbb{E}^{*}(.) = \mathbb{E} \big( .| \mathbb{Z} \big)$ and $ \mathsf{cov}^{*} =  \mathsf{cov}(. | \mathbb{Z} )$. For any integer $h$, we denote the conditional auto-covariance function as $\boldsymbol{\Gamma}(h) = \mathsf{cov}^{*} \left( \mathbb{Y}_t, \mathbb{Y}_{t-h} \right)$. Moreover, it holds that $\boldsymbol{\Gamma}(h) = \boldsymbol{\Gamma}^h (0) \boldsymbol{\Gamma} (0)$ for $h > 0$, and $\boldsymbol{\Gamma}(h) = \boldsymbol{\Gamma}(0) \left( \boldsymbol{\Gamma}^{\prime} \right)^{-h}$ for $h < 0$.  Then, the conditional mean and covariance of $\mathbb{Y}_t$ are obtained by Proposition \ref{proposition1}. 

\begin{proposition}
\label{proposition1}
Assume the same conditions as in Theorem \ref{theorem1}. Then, given $\mathbb{Z}$, the strictly stationary solution of \eqref{solution} converges to a normal distribution with mean and covariance given as below
\begin{align}
\boldsymbol{\mu} &= \left( \boldsymbol{I} - \boldsymbol{G} \right)^{-1} \mathcal{B}_0 = \left( \boldsymbol{I} - \beta_1 \boldsymbol{\Omega} - \beta_2 I \right)^{-1} \mathcal{B}_0
\\
\mathsf{vec} \big\{  \boldsymbol{\Gamma}(0) \big\} &= \sigma^2 \big( \boldsymbol{I} -  \boldsymbol{G} \otimes \boldsymbol{G} \big)^{-1} \mathsf{vec} (  \boldsymbol{I}  ).  
\end{align}
\end{proposition}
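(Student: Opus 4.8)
The plan is to work directly from the causal $\mathsf{MA}(\infty)$ representation of the strictly stationary solution established in Theorem \ref{theorem1}, namely $\mathbb{Y}_t = (\boldsymbol{I} - \boldsymbol{G})^{-1}\boldsymbol{\mathcal{B}}_0 + \sum_{j=0}^{\infty}\boldsymbol{G}^j\boldsymbol{\mathcal{E}}_{t-j}$, and to condition throughout on the nodal information set $\mathbb{Z}$. First I would compute the conditional mean. Since $\boldsymbol{\mathcal{B}}_0 = \beta_0\mathbf{1} + \mathbb{Z}\boldsymbol{\xi}$ and $\boldsymbol{G} = \beta_1\widetilde{\boldsymbol{\Omega}} + \beta_2\boldsymbol{I}$ are measurable with respect to $\mathbb{Z}$ (the adjacency matrix being nonstochastic), while the innovations $\boldsymbol{\mathcal{E}}_{t-j}$ are independent of $\mathbb{Z}$ with mean zero, applying $\mathbb{E}^{*}(\cdot) = \mathbb{E}(\cdot\,|\,\mathbb{Z})$ term by term annihilates the stochastic sum and leaves $\boldsymbol{\mu} = (\boldsymbol{I} - \boldsymbol{G})^{-1}\boldsymbol{\mathcal{B}}_0$; substituting $\boldsymbol{G} = \beta_1\widetilde{\boldsymbol{\Omega}} + \beta_2\boldsymbol{I}$ then yields the second displayed form.

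For the conditional covariance I would exploit the orthogonality of the innovations. Writing $\mathsf{cov}^{*}(\boldsymbol{\mathcal{E}}_{t-j}, \boldsymbol{\mathcal{E}}_{t-k}) = \sigma^2\boldsymbol{I}\,\delta_{jk}$, the cross terms vanish and $\boldsymbol{\Gamma}(0) = \mathsf{cov}^{*}(\mathbb{Y}_t, \mathbb{Y}_t) = \sigma^2\sum_{j=0}^{\infty}\boldsymbol{G}^j(\boldsymbol{G}^{\prime})^j$. Vectorizing with the identity $\mathsf{vec}(\boldsymbol{A}\boldsymbol{X}\boldsymbol{B}) = (\boldsymbol{B}^{\prime}\otimes\boldsymbol{A})\mathsf{vec}(\boldsymbol{X})$ applied with $\boldsymbol{A} = \boldsymbol{G}^j$, $\boldsymbol{X} = \boldsymbol{I}$, $\boldsymbol{B} = (\boldsymbol{G}^j)^{\prime}$, and then invoking the mixed-product rule $\boldsymbol{G}^j\otimes\boldsymbol{G}^j = (\boldsymbol{G}\otimes\boldsymbol{G})^j$, gives $\mathsf{vec}\{\boldsymbol{\Gamma}(0)\} = \sigma^2\sum_{j=0}^{\infty}(\boldsymbol{G}\otimes\boldsymbol{G})^j\mathsf{vec}(\boldsymbol{I})$. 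Summing the resulting Neumann series produces $\mathsf{vec}\{\boldsymbol{\Gamma}(0)\} = \sigma^2(\boldsymbol{I} - \boldsymbol{G}\otimes\boldsymbol{G})^{-1}\mathsf{vec}(\boldsymbol{I})$, as claimed.

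The normality claim follows once the innovations are taken Gaussian: each truncated partial sum $\sum_{j=0}^{M}\boldsymbol{G}^j\boldsymbol{\mathcal{E}}_{t-j}$ is, conditionally on $\mathbb{Z}$, a finite linear image of jointly Gaussian vectors and hence Gaussian, and the limit inherits normality because $L^2$ convergence implies convergence in distribution and the Gaussian family is closed under such limits. The main obstacle, and the step I would treat most carefully, is the analytic justification underpinning all three computations, namely the convergence of the matrix series together with its Kronecker counterpart. This rests on the spectral radius bound $\rho(\boldsymbol{G}) < 1$: since $\widetilde{\boldsymbol{\Omega}}$ is row-normalized, $\norm{\boldsymbol{G}}_{\infty} \le |\beta_1| + |\beta_2| < 1$ by the hypothesis of Theorem \ref{theorem1}, so $\rho(\boldsymbol{G}) \le \norm{\boldsymbol{G}}_{\infty} < 1$ and therefore $\rho(\boldsymbol{G}\otimes\boldsymbol{G}) = \rho(\boldsymbol{G})^2 < 1$. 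This simultaneously guarantees that $\sum_{j}\boldsymbol{G}^j\boldsymbol{\mathcal{E}}_{t-j}$ converges in $L^2$ (legitimizing the term-by-term conditioning and the interchange of $\mathsf{cov}^{*}$ with the infinite sum), that $\boldsymbol{I} - \boldsymbol{G}$ and $\boldsymbol{I} - \boldsymbol{G}\otimes\boldsymbol{G}$ are invertible, and that both Neumann series sum to the stated inverses. I would also emphasize that these formulas are exact at finite $N$; any genuinely asymptotic reading of \emph{converges} would instead invoke the conditional central limit theorem under the network-dependence conditions (C1)--(C3), but the mean and covariance structure derived above would be unchanged.
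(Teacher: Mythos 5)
Your proof is correct, and it is essentially the argument the paper leaves implicit: the paper states Proposition \ref{proposition1} without providing any proof of it (the result is imported from the stationary network autoregression framework of \cite{zhu2017network}, and the only related material in the source, a sketch of the proof of that paper's estimator asymptotics, simply re-uses the same decomposition $\mathbb{Y}_t = (\boldsymbol{I}-\boldsymbol{G})^{-1}\boldsymbol{\mathcal{B}}_0 + \sum_{j\geq 0}\boldsymbol{G}^j\boldsymbol{\mathcal{E}}_{t-j}$ that you start from). Relative to that implicit route, your write-up supplies exactly the justifications the paper skips: (i) measurability of $\boldsymbol{\mathcal{B}}_0$ and $\boldsymbol{G}$ with respect to $\mathbb{Z}$, plus independence and mean-zero innovations, which legitimizes term-by-term conditioning; (ii) the computation $\boldsymbol{\Gamma}(0)=\sigma^2\sum_{j\geq 0}\boldsymbol{G}^j(\boldsymbol{G}^{\prime})^j$ followed by vectorization via $\mathsf{vec}(\boldsymbol{A}\boldsymbol{X}\boldsymbol{B})=(\boldsymbol{B}^{\prime}\otimes\boldsymbol{A})\mathsf{vec}(\boldsymbol{X})$ and the mixed-product rule, which is the standard derivation behind the displayed covariance formula; and (iii) the spectral bound $\rho(\boldsymbol{G})\leq\norm{\boldsymbol{G}}_{\infty}\leq|\beta_1|+|\beta_2|<1$ from row-normalization of $\widetilde{\boldsymbol{\Omega}}$, hence $\rho(\boldsymbol{G}\otimes\boldsymbol{G})<1$, which simultaneously validates the $L^2$ limit, both Neumann series, and both matrix inversions. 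Your closing clarification is also the right reading of the statement: the word \emph{converges} in the proposition is loose, and under Gaussian innovations (which the paper invokes in its parameter-estimation section) the conditional law of $\mathbb{Y}_t$ given $\mathbb{Z}$ \emph{is} exactly normal with the stated moments, whereas a genuinely asymptotic claim for non-Gaussian innovations would need a conditional CLT under (C1)--(C3) with unchanged mean and covariance. One cosmetic point: the paper's first display writes $\left(\boldsymbol{I}-\beta_1\boldsymbol{\Omega}-\beta_2 I\right)^{-1}\mathcal{B}_0$ with the raw adjacency matrix $\boldsymbol{\Omega}$; your substitution $\boldsymbol{G}=\beta_1\widetilde{\boldsymbol{\Omega}}+\beta_2\boldsymbol{I}$ with the row-normalized matrix is the one consistent with the paper's definition of $\boldsymbol{G}$, so that discrepancy is a typo on the paper's side rather than an error on yours.
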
 
The implications of Proposition \ref{proposition1} is that we can determine the factors that affect the conditional mean of $\mathbb{Y}_t$, which are: (i) the nodal impact $\boldsymbol{ \mathcal{B} }_0$, (ii) the network effect $\beta_1$, (iii) the momentum effect $\beta_2$, and (iv) the network structure\footnote{Note that we assume that the adjacency matrix can take either binary values indicating the existence of a link between the pair $(i,j)$ or represent a weight of the strength of connection.} given by the adjacency matrix $\widetilde{ \boldsymbol{\Omega} }$.

\subsubsection{Parameter estimation}  

The parameters of the NVAR(1) model can be estimated by assuming a Gaussian innovation process. Let $\boldsymbol{\beta} = \left( \beta_0, \beta_1, \beta_2 \right)^{\prime} \in \mathbb{R}^3$ and $\boldsymbol{\theta} = (  \boldsymbol{\theta}_j  )^{\prime} = \big( \boldsymbol{\beta}^{\prime}, \boldsymbol{\xi}^{\prime} \big)^{\prime} \in \mathbb{R}^{p+3}$. To estimate the unknown parameter $\boldsymbol{\theta}$, we rewrite the $\mathsf{NVAR}(1)$ model as below
\begin{align}
\label{model6a}
Y_{i(t)} 
= 
\beta_0 + \beta_1 w_i^{\prime} \mathbb{Y}_{t-1} + \beta_2 Y_{i(t-1)} + \boldsymbol{Z}_i^{\prime} \boldsymbol{\xi} + \epsilon_{i(t)} 
=  
\boldsymbol{ \mathcal{X} }_{i(t-1)}^{\prime} \boldsymbol{\theta} +  \boldsymbol{\epsilon}_{i(t)},
\end{align}  
where $\boldsymbol{ \mathcal{X} }_{i(t-1)} = \left( 1, w_i^{\prime} \mathbb{Y}_{t-1}, Y_{i(t-1)}   , Z_i^{\prime} \right)^{\prime} \in \mathbb{R}^{p+3}$, and $w_i = \left( \omega_{ij} / n_i \right)^{\prime} \in \mathbb{R}^N$ for $1 \leq j \leq N$ is the $i-$ row vector of $W$. 

Moreover, denote with $\mathbb{X}_t = \left( \boldsymbol{ \mathcal{X} }_{1t}, \boldsymbol{ \mathcal{X} }_{2t},..., \boldsymbol{ \mathcal{X} }_{Nt}  \right)^{\prime} \in \mathbb{R}^{N \times (p+3)}$. Then, the NVAR(1) model \eqref{model6a} can be rewritten in vector form $\mathbb{Y}_t = \mathbb{X}_{t-1}  \boldsymbol{\theta} + \boldsymbol{ \mathcal{E} }_t$. Therefore, an ordinary least squares type estimator can be obtained by the following expression 
\begin{align}
\widehat{ \boldsymbol{\theta} }_{ols} = \left( \sum_{t=1}^T \mathbb{X}_{t-1}^{\prime} \mathbb{X}_{t-1} \right)^{-1} \left( \sum_{t=1}^T \mathbb{X}_{t-1}^{\prime} \mathbb{Y}_{t} \right),
\end{align}
whose asymptotic properties are to be investigated subsequently. The following conditions are imposed to allow the development of the asymptotic theory. Within the stationary framework, the asymptotic behaviour of the OLS estimator for the $\mathsf{NVAR}(1)$ model is only affected by the given structure of the model, which implies no presence of nuisance parameters such as the unknown coefficient of persistence.

\newpage

\begin{theorem}
\label{theorem2}
Assume that the stationary condition $| \beta_1 | + | \beta_2 | < 1$ and technical conditions (C1)-(C3) hold, we then have that 
\begin{align*}
\sqrt{NT} \left( \widehat{\theta} - \theta \right) \to \mathcal{N} \left( 0, \sigma^2 \Sigma^{-1} \right)
\end{align*}
as min$\left\{ N, T \right\} \to \infty$, where $\Sigma$.  
\end{theorem}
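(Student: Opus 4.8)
The plan is to follow the standard two-step strategy for asymptotic normality of a least squares estimator: first a law of large numbers for the normalized Gram (information) matrix, then a central limit theorem for the normalized score, and finally combine the two via Slutsky's theorem. Working from the regression representation \eqref{model6a} and subtracting $\boldsymbol{\theta}$, I would begin from the exact algebraic decomposition
\begin{align*}
\sqrt{NT}\left(\widehat{\boldsymbol{\theta}}-\boldsymbol{\theta}\right) = \left(\frac{1}{NT}\sum_{t=1}^T \mathbb{X}_{t-1}^{\prime}\mathbb{X}_{t-1}\right)^{-1} \frac{1}{\sqrt{NT}}\sum_{t=1}^T \mathbb{X}_{t-1}^{\prime}\boldsymbol{\mathcal{E}}_t,
\end{align*}
so that the limiting law is pinned down by the probability limit of the first factor and the weak limit of the second.

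For the first factor I would show
\begin{align*}
\frac{1}{NT}\sum_{t=1}^T \mathbb{X}_{t-1}^{\prime}\mathbb{X}_{t-1} \overset{P}{\to} \boldsymbol{\Sigma},
\end{align*}
where $\boldsymbol{\Sigma} := \lim_{N\to\infty} N^{-1}\mathbb{E}^{*}\big[\mathbb{X}_{t-1}^{\prime}\mathbb{X}_{t-1}\big]$ and $\mathbb{E}^{*}$ denotes conditioning on the nodal information $\mathbb{Z}$. Existence of this limit is exactly what Condition (C3) supplies: the distinct blocks of the Gram matrix, generated by the regressor $\boldsymbol{\mathcal{X}}_{i(t-1)}=(1,w_i^{\prime}\mathbb{Y}_{t-1},Y_{i(t-1)},Z_i^{\prime})^{\prime}$, reduce to normalized traces of $\boldsymbol{\Gamma}(0)$, $W\boldsymbol{\Gamma}(0)$, $(I-G)^{-1}$ and $Q$, whose limits are the constants $\kappa_1,\ldots,\kappa_4$. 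Here I would invoke the strictly stationary MA$(\infty)$ representation \eqref{solution} of Theorem \ref{theorem1} and the conditional second-moment formula of Proposition \ref{proposition1} to obtain $\mathbb{E}^{*}[\mathbb{X}_{t-1}^{\prime}\mathbb{X}_{t-1}]$ in closed form, then apply an ergodic theorem in the time dimension — valid because $|\beta_1|+|\beta_2|<1$ forces geometric decay of the powers $\boldsymbol{G}^j$ with $\boldsymbol{G}=\beta_1\widetilde{\boldsymbol{\Omega}}+\beta_2\boldsymbol{I}$ — together with the finite fourth moments of (C1) to upgrade the conditional mean to convergence in probability of the sample average. Positive definiteness, and hence invertibility, of $\boldsymbol{\Sigma}$ would follow from the irreducibility in (C2.1) and the nondegeneracy of the nodal covariance $\Sigma_z$ in (C1).

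For the second factor I would apply a martingale central limit theorem in the time index. Conditional on $\mathbb{Z}$, the array $\{\mathbb{X}_{t-1}^{\prime}\boldsymbol{\mathcal{E}}_t\}_t$ is a martingale difference sequence with respect to the natural filtration, since $\boldsymbol{\mathcal{E}}_t$ is independent of the past and of $\mathbb{Z}$ with conditional covariance $\sigma^2 I$; consequently its predictable quadratic variation is $\tfrac{\sigma^2}{NT}\sum_{t=1}^{T}\mathbb{X}_{t-1}^{\prime}\mathbb{X}_{t-1}$, which by the previous step converges to $\sigma^2\boldsymbol{\Sigma}$. It then remains to verify a Lindeberg (or Lyapunov) negligibility condition for the doubly indexed array, and this is where the finite fourth moments of (C1) and, crucially, the network uniformity Condition (C2.2) enter: the bound $\lambda_{\max}(W^{*})=\mathcal{O}(\log N)$ controls the cross-sectional variance so that no single node dominates the aggregate, while $\sum_i \pi_i^2\to 0$ from (C2.1) delivers the required asymptotic negligibility across nodes. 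A Cram\'er--Wold device reduces the multivariate statement to scalar martingale CLTs, yielding $\tfrac{1}{\sqrt{NT}}\sum_t \mathbb{X}_{t-1}^{\prime}\boldsymbol{\mathcal{E}}_t \overset{D}{\to}\mathcal{N}(0,\sigma^2\boldsymbol{\Sigma})$, and Slutsky's theorem then gives the stated $\mathcal{N}(0,\sigma^2\boldsymbol{\Sigma}^{-1})$ limit.

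The hard part will be the central limit theorem step, specifically the interplay between the two indices. Because the regressors $\mathbb{X}_{t-1}$ inherit cross-sectional dependence through the network operator $\boldsymbol{G}$ in the solution \eqref{solution}, the summands are neither independent across $i$ nor across $t$, so the joint passage $\min\{N,T\}\to\infty$ must be arranged so that the time-dimension martingale structure and the cross-sectional network aggregation are controlled simultaneously rather than sequentially. The technical crux is establishing that the Lindeberg condition holds uniformly as both indices diverge; this is precisely where the slow divergence rate $\mathcal{O}(\log N)$ of (C2.2) is indispensable, since it prevents the spectral influence of the network from overwhelming the $\sqrt{NT}$ normalization and thereby guarantees that the aggregate remains in the Gaussian domain of attraction.
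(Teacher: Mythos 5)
Your proposal is correct and follows essentially the same route as the paper: the paper's proof (adapted from Zhu et al.'s Theorem 2) writes $\widehat{\theta} = \theta + \widehat{\Sigma}^{-1}\widehat{\Sigma}_{xe}$ with $\widehat{\Sigma} = (NT)^{-1}\sum_{t}\mathbb{X}_{t-1}\mathbb{X}_{t-1}^{\prime}$ and $\widehat{\Sigma}_{xe} = (NT)^{-1}\sum_{t}\mathbb{X}_{t-1}\mathcal{E}_{t}$, then establishes $\widehat{\Sigma}\overset{P}{\to}\Sigma$ block-by-block using the strictly stationary solution $\mathbb{Y}_t = c_{\beta}\mathbf{1} + (I-G)^{-1}\mathbb{Z}\gamma + \widetilde{\mathbb{Y}}_t$ and conditions (C1)--(C3), and finally shows $\sqrt{NT}\,\widehat{\Sigma}_{xe}\to\mathcal{N}(0,\sigma^2\Sigma)$, exactly your Gram-matrix LLN plus score CLT plus Slutsky scheme. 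Your treatment of the martingale CLT step (Lindeberg verification via the fourth moments in (C1) and the spectral bound in (C2.2)) is in fact more explicit than what the paper records.
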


\medskip

\begin{proposition}
Assume that $T$ is fixed and conditions in Theorem 3 hold. Then, we have that
\begin{align*}
\sqrt{N} \left( \widehat{\theta} - \theta \right) \to \mathcal{N} \left(  0, \sigma^2 T^{-1} \Sigma^{-1} \right)
\end{align*}
as $N \to \infty$. 
\end{proposition}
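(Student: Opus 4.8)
The plan is to exploit the same two building blocks used in the proof of Theorem~\ref{theorem2}---a law of large numbers for the sample information matrix and a central limit theorem for the score---but to track how the normalization changes when $T$ is held fixed and only $N$ diverges. Writing the estimator in the stacked form $\mathbb{Y}_t = \mathbb{X}_{t-1}\boldsymbol{\theta} + \boldsymbol{\mathcal{E}}_t$, I would start from the exact identity
\begin{align*}
\sqrt{N}\left( \widehat{\boldsymbol{\theta}} - \boldsymbol{\theta} \right) = \left( \frac{1}{N}\sum_{t=1}^T \mathbb{X}_{t-1}^{\prime}\mathbb{X}_{t-1} \right)^{-1}\left( \frac{1}{\sqrt{N}}\sum_{t=1}^T \mathbb{X}_{t-1}^{\prime}\boldsymbol{\mathcal{E}}_t \right),
\end{align*}
and note that $\mathbb{X}_{t-1}^{\prime}\mathbb{X}_{t-1} = \sum_{i=1}^N \boldsymbol{\mathcal{X}}_{i(t-1)}\boldsymbol{\mathcal{X}}_{i(t-1)}^{\prime}$, so that both factors are genuine cross-sectional sums of $N$ network-dependent terms for each of the finitely many slices $t = 1,\dots,T$.

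First I would establish the law of large numbers for the information matrix. By strict stationarity of $\{\mathbb{Y}_t\}$ (Theorem~\ref{theorem1}) each slice average $\frac{1}{N}\mathbb{X}_{t-1}^{\prime}\mathbb{X}_{t-1}$ shares the same conditional limit, and conditions (C1)--(C3)---in particular the trace limits $\kappa_1,\dots,\kappa_4$ of (C3) together with the uniformity bound $\lambda_{\max}(W^{*}) = \mathcal{O}(\log N)$ of (C2.2)---give $\frac{1}{N}\mathbb{X}_{t-1}^{\prime}\mathbb{X}_{t-1}\overset{P}{\to}\Sigma$ as $N\to\infty$ for each fixed $t$. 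Summing over the $T$ slices yields $\frac{1}{N}\sum_{t=1}^T \mathbb{X}_{t-1}^{\prime}\mathbb{X}_{t-1}\overset{P}{\to} T\Sigma$, which is exactly $T$ times the per-slice limit that enters Theorem~\ref{theorem2} under the $\frac{1}{NT}$ normalization.

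The main work, and the step I expect to be the obstacle, is the central limit theorem for the score $\frac{1}{\sqrt{N}}\sum_{t=1}^T \mathbb{X}_{t-1}^{\prime}\boldsymbol{\mathcal{E}}_t$ under fixed $T$. Because time averaging is unavailable, the entire asymptotic Gaussianity must be extracted from the cross-section, so the delicate point is to verify that the network dependence encoded in $W$ is weak enough for a CLT for network-dependent arrays to apply. Here the irreducibility and aperiodicity of (C2.1) and, crucially, the slow divergence $\lambda_{\max}(W^{*})=\mathcal{O}(\log N)$ of (C2.2) are what control the accumulation of cross-sectional covariances and deliver a Lindeberg-type condition. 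Using that $\mathbb{E}^{*}(\epsilon_{i(t)}\epsilon_{j(s)})=\sigma^2$ when $(i,t)=(j,s)$ and $0$ otherwise, the asymptotic variance of the score is $\sigma^2\sum_{t=1}^T\lim_{N}\frac{1}{N}\mathbb{E}^{*}[\mathbb{X}_{t-1}^{\prime}\mathbb{X}_{t-1}] = \sigma^2 T\Sigma$, so that $\frac{1}{\sqrt{N}}\sum_{t=1}^T\mathbb{X}_{t-1}^{\prime}\boldsymbol{\mathcal{E}}_t\overset{D}{\to}\mathcal{N}(\boldsymbol{0},\sigma^2 T\Sigma)$.

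Finally I would combine the two pieces by Slutsky's theorem and the continuous mapping theorem, reading off the limiting covariance from the sandwich
\begin{align*}
(T\Sigma)^{-1}\big(\sigma^2 T\Sigma\big)(T\Sigma)^{-1} = \sigma^2 (T\Sigma)^{-1} = \sigma^2 T^{-1}\Sigma^{-1},
\end{align*}
which gives $\sqrt{N}(\widehat{\boldsymbol{\theta}}-\boldsymbol{\theta})\overset{D}{\to}\mathcal{N}(\boldsymbol{0},\sigma^2 T^{-1}\Sigma^{-1})$ as claimed. As a consistency check this matches Theorem~\ref{theorem2}: rescaling the left-hand side by the extra $\sqrt{T}$ turns the variance into $\sigma^2\Sigma^{-1}$, confirming that the fixed-$T$ statement is the natural restriction of the joint limit.
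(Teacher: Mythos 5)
Your proposal is correct and follows essentially the same route as the paper: the decomposition $\widehat{\theta} - \theta = \widehat{\Sigma}^{-1} \widehat{\Sigma}_{xe}$ into a Gram matrix handled by a cross-sectional law of large numbers under conditions (C1)--(C3) and a score term handled by a CLT for network-dependent arrays, combined via Slutsky's theorem. The only change relative to the paper's treatment of the joint limit in Theorem \ref{theorem2} is the bookkeeping of the fixed-$T$ normalization, which you track correctly to obtain the $\sigma^{2} T^{-1} \Sigma^{-1}$ covariance.
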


\medskip

\begin{remark}
For example, recently \cite{fan2023estimation} consider conditions for covariance stationarity of a quantile vector autoregressive model such that 
\begin{align}
\frac{1}{ \sqrt{T} } \sum_{t=1}^T \big( Y_t - \mu_Y \big) \sim \mathcal{N} \left( 0, \underset{ T \to \infty }{ \mathsf{lim}  } \ \sum_{t=1}^T \mathbb{E} \big( Y_t - \mu_Y \big) \big( Y_t - \mu_Y \big)^{\prime} \right).   
\end{align}
\end{remark}
Furthermore, in the absence of network dependence our econometric specification reduces to the predictive regression model  around the general vicinity of unity as in \cite{PM2009econometric}. Usually, in those frameworks the study of vector autoregressive processes of order 1 implies that are integrated of order 1 and cointegrated, that is, processes for which the first difference is stationary and there exists some linear combinations of the coordinate processes that are stationary.

\newpage 

\subsubsection{Time Series Nonstationarity}

We are interested to develop the asymptotic theory for the case where we have time varying regressors with certain nonstationary properties. Therefore, we aim to consider a NVAR model with exogenous nonstationary regressors under a known network structure. In our case, we are interested to examine the conditional distribution given the nonstationary regressors of the nodes.

Consider again the model specification with the LUR process as below
\begin{align}
\label{model21}
Y_{i(t)} &= \alpha_0 + \alpha_1 n_i^{-1} \sum_{j=1}^N \omega_{ij} Y_{j(t-1)} + \alpha_2 Y_{i(t-1)} + B X_{i(t-1)}  + \epsilon_{i(t)}
\\
\label{model22}
X_{i(t)} &= \boldsymbol{R} X_{i(t-1)}  + U_{i(t)}
\end{align}
where $\boldsymbol{R}_{T} = \left( \boldsymbol{I}_p + \frac{ \boldsymbol{C}_p }{T^{\lambda }} \right)$, the autocorrelation coefficient matrix, $\boldsymbol{C}_p = \mathsf{diag} \left\{ c_1,..., c_p \right\}$ with $c_i$ the unknown persistence coefficient, $\lambda < 1, \lambda \in (0,1)$ or $\lambda > 1$, the exponent rate and $\alpha_0, \alpha_1, \alpha_2$ and $B = \text{diag} \left\{ \beta_1,..., \beta_p \right\}$   are the model parameters to be estimated.

\subsubsection{Parameter estimation}

Denote with $\widetilde{X}_{i(t-1)} = \left( 1, w_i^{\prime} \mathbb{Y}_{t-1}, Y_{i(t-1)}, \widetilde{Z}_{i(t)}^{\prime} \right)^{\prime} \in \mathbb{R}^{p+3}$, and $w_i = \left( \omega_{ij} / n_i \right)^{\prime} \in \mathbb{R}^N$ for $1 \leq j \leq N$ is the $i-$ row vector of $W$. Moreover, denote with $\widetilde{\mathbb{X}}_t = \left( \widetilde{X}_{1t}, \widetilde{X}_{2t},..., \widetilde{X}_{Nt} \right)^{\prime} \in \mathbb{R}^{N \times (p+3)}$ the vector composing the design matrix that aligns with $T$ time observations. Notice also that in the case where the NVAR model is expressed in a similar manner as the predictive regression system the design matrix includes the IVX instrument which is mildly integrated version of the original time varying regressor. 

Then, the NVAR(1) model \eqref{model21} can be rewritten in vector form $\mathbb{Y}_t = \widetilde{\mathbb{X}}_{t-1} \widetilde{\theta} + \mathcal{E}_t$. Therefore, the IVX type estimator of the NVAR(1) model can be obtained by
\begin{align}
\widetilde{\theta}_{\text{IVX}} = \left( \sum_{t=1}^T \widetilde{\mathbb{X}}_{t-1}^{\prime} \widetilde{\mathbb{X}}_{t-1} \right)^{-1} \sum_{t=1}^T \widetilde{\mathbb{X}}_{t-1}^{\prime} \mathbb{Y}_{t},
\end{align}

We aim to show that the asymptotic distribution of the $\widetilde{\theta}_{\text{IVX}}$ is mixed Gaussian and to determine the stochastic variance term of its limiting distribution. 
\begin{theorem}
\label{theorem3}
Under Assumption 2 (innovation covariance structure), we then have that
\begin{align*}
\sqrt{NT} \big( \widehat{\widetilde{\theta}}_{\text{IVX}} - \widetilde{\theta}_{\text{IVX}} \big) \to \mathcal{MN} \left( 0, \sigma^2 \mathbb{V}^{-1} \right)
\end{align*}
as min$\left\{ N, T \right\} \to \infty$, where $\mathbb{V}$ a positive covariance matrix.   
\end{theorem}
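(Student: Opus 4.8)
The plan is to linearize the centered estimator as a ratio of sample moments and then analyze numerator and denominator separately under a block normalization that reflects the different convergence rates of the stationary network regressors and the mildly integrated IVX instruments. Substituting the data-generating process $\mathbb{Y}_t = \mathbb{X}_{t-1}\widetilde{\theta} + \mathcal{E}_t$ into the normal equations, where $\mathbb{X}_{t-1}$ stacks the original regressors (constant, network average $w_i^{\prime}\mathbb{Y}_{t-1}$, own lag $Y_{i(t-1)}$ and near-unit-root $X_{i(t-1)}$) and $\widetilde{\mathbb{X}}_{t-1}$ stacks their IVX-instrumented counterparts (with $X_{i(t-1)}$ replaced by the mildly integrated $\widetilde{Z}_{i(t)}$), I would write
\[
\widehat{\widetilde{\theta}}_{\text{IVX}} - \widetilde{\theta} = \left( \sum_{t=1}^T \widetilde{\mathbb{X}}_{t-1}^{\prime} \mathbb{X}_{t-1} \right)^{-1} \left( \sum_{t=1}^T \widetilde{\mathbb{X}}_{t-1}^{\prime} \mathcal{E}_t \right).
\]
I would then introduce a diagonal normalizer $\boldsymbol{D}_{NT}$ that scales the stationary sub-block by $(NT)^{1/2}$ and the instrument sub-block by its own IVX rate, and insert $\boldsymbol{D}_{NT}^{-1}\boldsymbol{D}_{NT}$ around the inverse so that each factor admits a nondegenerate limit; self-normalization by the limiting information matrix then collapses the heterogeneous rates into the single $\sqrt{NT}$ scaling reported in the statement.

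For the denominator I would prove that $\boldsymbol{D}_{NT}^{-1}\big(\sum_t \widetilde{\mathbb{X}}_{t-1}^{\prime}\mathbb{X}_{t-1}\big)\boldsymbol{D}_{NT}^{-1}$ converges to a block matrix $\mathbb{V}$. The stationary diagonal sub-block is handled exactly as in the OLS analysis of Theorem \ref{theorem2}, using the strictly stationary representation of $\mathbb{Y}_t$ from Theorem \ref{theorem1} and the law-of-large-numbers constants $\kappa_1,\dots,\kappa_4$ of condition (C3), yielding a deterministic positive definite limit. The instrument sub-block $\widetilde{Z}^{\prime}X$ is the source of the mixing: by the weak-convergence theorem stated above, the normalized near-unit-root partial sums embed into the Ornstein-Uhlenbeck family $\boldsymbol{J}_{\boldsymbol{C}}$, so this sub-block converges to a \emph{random} matrix measurable with respect to $\boldsymbol{J}_{\boldsymbol{C}}$. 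I expect the cross-blocks between stationary and instrumented coordinates to be asymptotically negligible after normalization, which delivers the block structure of $\mathbb{V}$ with a deterministic stationary corner and a $\boldsymbol{J}_{\boldsymbol{C}}$-measurable nonstationary corner.

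For the numerator I would establish a joint functional CLT for $\boldsymbol{D}_{NT}^{-1}\sum_t \widetilde{\mathbb{X}}_{t-1}^{\prime}\mathcal{E}_t$ conditionally on the $\sigma$-field generated by the regressor innovations. Since $\mathcal{E}_t$ is orthogonal to $\widetilde{\mathbb{X}}_{t-1}$, the summand is a martingale difference array in $t$, and I would verify the Lindeberg and conditional-variance conditions of a martingale CLT jointly in $N$ and $T$, invoking Assumption 2 to pin down the limiting variance as $\sigma^2\mathbb{V}$ and the uniformity control (C2.2) on $\lambda_{\text{max}}(W^{*})$ to bound the cross-sectional accumulation of the network weights. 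The mixed-Gaussian conclusion then follows by \emph{stable} convergence: the instrument block of the score is, conditionally on $\boldsymbol{J}_{\boldsymbol{C}}$, asymptotically $\mathcal{N}(0,\sigma^2\mathbb{V})$, so combining it with the $\boldsymbol{J}_{\boldsymbol{C}}$-measurable denominator via the continuous mapping and stable convergence theorems gives $\mathcal{MN}(0,\sigma^2\mathbb{V}^{-1})$ rather than an ordinary normal.

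The hard part will be the joint central limit theorem for the numerator: I need the martingale-difference structure in $t$ to survive the simultaneous passage $N\to\infty$ along the path $T_N$, which requires that the cross-sectional dependence induced by $\widetilde{\boldsymbol{\Omega}}$ does not inflate the conditional variance beyond the $\kappa$-governed limits, and I need convergence to hold \emph{stably} (not merely in distribution) so that the random limit of the $\widetilde{Z}^{\prime}X$ moment in the denominator and the conditionally Gaussian limit of the score combine into a single coherent mixed-normal law. Securing stable convergence under the double-index regime, where the network size and the time horizon diverge together, is the genuinely new step relative to the pure predictive-regression argument of \cite{PM2009econometric} and the stationary NVAR limit theory of \cite{zhu2017network}.
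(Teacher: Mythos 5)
Your proposal shares the paper's overall skeleton---write the estimation error as (sample moment matrix)$^{-1}$(score), analyze the moment matrix block by block, then appeal to a CLT for the score---but the two proofs diverge in substance, largely because the paper's own ``Sketch Proof of Theorem 3'' stops far short of what you outline: it only defines $\widehat{\Sigma}=\frac{1}{NT}\sum_{t}\widetilde{\mathbb{X}}_{t-1}\widetilde{\mathbb{X}}_{t-1}^{\prime}$ and $\widehat{\Sigma}_{xe}=\frac{1}{NT}\sum_{t}\widetilde{\mathbb{X}}_{t-1}\mathcal{E}_{t}$, displays the blocks $\mathcal{S}_{jk}$, and verifies convergence of $\mathcal{S}_{12}$ and $\mathcal{S}_{13}$ by recycling the stationary decomposition of \cite{zhu2017network}; there is no score CLT, no treatment of the instrument blocks, and no stable-convergence step. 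Your two departures from that template are improvements rather than mere differences. First, your matrix normalizer $\boldsymbol{D}_{NT}$ is genuinely needed: under the paper's uniform scaling the instrument Gram block (its $\mathcal{S}_{44}=\frac{1}{T}\sum_t \widetilde{Z}_t\widetilde{Z}_t^{\prime}$) diverges, since a mildly integrated process requires the normalization $T^{-(1+\lambda)}$, so the limit matrix displayed in the paper is degenerate as written. Second, your IV cross-moment decomposition $\big(\sum_t\widetilde{\mathbb{X}}_{t-1}^{\prime}\mathbb{X}_{t-1}\big)^{-1}\sum_t\widetilde{\mathbb{X}}_{t-1}^{\prime}\mathcal{E}_t$ is what makes the stated \emph{mixed} Gaussian limit attainable at all: the random element of the limit comes from the cross moment $\widetilde{Z}^{\prime}X$ converging to an Ornstein--Uhlenbeck functional (exactly the paper's own supplementary display $\frac{1}{T^{(1+\lambda)/2}}\sum_t z_{t-1}x_{t-1}\Rightarrow \frac{\sigma_v^2}{c_z}\big(B_c^2(1)-\int_0^1 B_c(s)dB_c(s)\big)$), whereas the instrument Gram matrix that the paper's estimator $\big(\sum_t\widetilde{\mathbb{X}}_{t-1}^{\prime}\widetilde{\mathbb{X}}_{t-1}\big)^{-1}\sum_t\widetilde{\mathbb{X}}_{t-1}^{\prime}\mathbb{Y}_t$ is built on has a deterministic limit, which would yield an ordinary Gaussian law. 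The price is that you are proving the theorem for the standard IVX estimator rather than for the estimator literally defined in the paper; you should either flag this or add a step establishing asymptotic equivalence of the two.

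The one step in your outline that would fail as stated is the claim that the ``stationary diagonal sub-block is handled exactly as in the OLS analysis of Theorem \ref{theorem2}.'' In the nonstationary framework \eqref{model21}--\eqref{model22}, the responses are a stable network filter of a near-integrated input, $\mathbb{Y}_t=\sum_{j\geq 0}G^{j}\big(\mathcal{B}_0+B\mathbb{X}_{t-1-j}+\mathcal{E}_{t-j}\big)$, so $\mathbb{Y}_t$ is $O_p(\sqrt{T})$ and is \emph{not} stationary: the blocks involving $w_i^{\prime}\mathbb{Y}_{t-1}$ and $Y_{i(t-1)}$ diverge under the $(NT)^{-1}$ normalization, and their limits are Ornstein--Uhlenbeck functionals rather than the constants $\kappa_1,\ldots,\kappa_4$ of condition (C3). (The paper's sketch suffers from the identical inconsistency---its $\mathcal{S}_{12},\mathcal{S}_{13}$ computations import the strictly stationary representation of Theorem \ref{theorem1} with time-invariant $Z_i$, which does not hold once $X_{i(t)}$ is local-to-unity.) To close this gap, your normalizer $\boldsymbol{D}_{NT}$ must also rescale the lagged-response coordinates at the nonstationary rate, with the corresponding corners of $\mathbb{V}$ becoming $\boldsymbol{J}_{\boldsymbol{C}}$-measurable as well, or you need an argument---absent from both your proposal and the paper---for why the network filter $(\boldsymbol{I}-\boldsymbol{G})^{-1}$ annihilates the nonstationary component of the lagged responses.
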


\newpage 

\subsection{General NVAR(p) Model}


In the previous sections, we consider the NVAR(1) model, however we can easily extend the model in the case where $p > 1$ while concentrating on the nonstationary case. Therefore, the  NVAR$(p)$ model is expressed as below:
\begin{align}
\label{model5}
Y_{i(t)} &= \beta_0 + \sum_{s=1}^p \alpha_{s} n_i^{-1} \sum_{j=1}^N \omega_{ij} Y_{j(t-s)} + \sum_{j=1}^p \beta_{s} Y_{i(t-s)} + \Xi X_{i(t-1)} + \epsilon_{i(t)}
\\
X_{i(t)} &= \mathcal{R}_{T} X_{i(t-1)}  + U_{i(t)}
\end{align}
where $\mathcal{R}_{T} = \left( I_p + \frac{C_p}{T^{\lambda }} \right)$, with $\lambda < 1, \lambda \in (0,1)$ or $\lambda > 1$ and $C_p = \text{diag} \left\{ c_1,..., c_p \right\}$ and $\Xi = \text{diag} \left\{ \xi_1,...,\xi_p \right\}$ is the coefficient matrix for the vector of regressors. Moreover, we define with $\mathbb{Y}_t = \left( \mathbb{Y}_t^{\prime}, \mathbb{Y}_{t-1}^{\prime}, ...., \mathbb{Y}_{t-p+1}^{\prime} \right)^{\prime} \in \mathbb{R}^{Np}$. Then, we express the NVAR$(p)$ model in the the matrix companion form as below
\begin{align}
\label{model6}
\mathbb{Y}_t^{*} = \mathcal{B}_0^{*} + G^{*} \mathbb{Y}_{t-1}^{*} + \mathcal{E}^{*}_{t},
\end{align}
with $\mathcal{B}_0^{*} = \left( \mathcal{B}_0^{\prime}, \mathbf{0}^{\prime}_{N(p-1)} \right) \in \mathbb{R}^{Np}$, $\mathcal{E}^{*}_{t} = \left( \mathcal{E}_t^{\prime}, \mathbf{0}^{\prime}_{N(p-1)} \right)^{\prime} \in \mathbb{R}^{Np}$ where $G^{*}$ is defined below 
\begin{align}
G^{*} = 
\begin{pmatrix}
\mathcal{K} \ \ & \ \ \alpha_p W + \beta_p I_N \\
I_{N(p-1)} \ \ & \ \  \mathbf{0}_{N(p-1),N}    \\
\end{pmatrix},
\end{align} 
such that $\mathcal{K} = \left( \alpha_{1} W + \beta_1 I_N ,...., \alpha_{p-1} W + \beta_{p-1} I_N  \right) \in \mathbb{R}^{N \times N(p-1)}$, $\mathbf{0}_{n}$ the n-dimensional zero vector and $I_n$ the identity matrix. Furthermore, in the case that we replace the vector of time-varying regressors $X_{i(t-1)}$ with $Z_i$ some time-invariant node-specific characteristic, then we can consider the strictly stationary solution of the NVAR$(p)$ model as we presented with Theorem \ref{theorem1} above in the case of the NVAR(1) model. 

\begin{theorem}
\label{theorem4}
Consider the case where $X_{i(t-1)} \equiv Z_i$. Assume that $\textbf{E} \norm{ Z_i } < \infty$ and $N$ is fixed. If $\sum_{s=1}^p \left( | \alpha_s | + | \beta_s | \right) < 1$, then there exists a unique strictly stationary solution with a finite first-order moment to the NVAR(p) model \eqref{model5} of the form:
\begin{align}
\label{sol}
\mathbb{Y}_t = \mathcal{J} \mathbb{Y}_t^{*}, \ \ \text{where} \ \   \mathbb{Y}_t^{*} = \left( I_p - G^{*} \right)^{-1} \mathcal{B}_0^{*} + \sum_{j=0}^{\infty} G^{*j} \mathcal{E}_{t-j}^{*}
\end{align}
\end{theorem}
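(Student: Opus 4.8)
The plan is to reduce the NVAR$(p)$ problem to the companion VAR$(1)$ representation \eqref{model6}, which is already available, and then to prove that the spectral radius of the companion matrix $G^{*}$ is strictly less than one under the summability condition $\sum_{s=1}^{p}(|\alpha_s|+|\beta_s|)<1$; everything else reduces to the Neumann-series and backward-substitution machinery already used for Theorem \ref{theorem1}. First I would record that, since $\widetilde{\boldsymbol{\Omega}}$ is row-normalised, $\norm{W}_{\infty}=1$, so each autoregressive block $A_s:=\alpha_s W+\beta_s I_N$ obeys $\norm{A_s}_{\infty}\le |\alpha_s|\norm{W}_{\infty}+|\beta_s|=|\alpha_s|+|\beta_s|$.

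The key step, and the main obstacle in the sense that it is the only place where the network structure and the coefficient restriction genuinely enter, is the eigenvalue bound. Rather than manipulate $G^{*}$ directly, I would argue through the associated matrix polynomial: an eigenvalue $\lambda$ of $G^{*}$ is a root of $\det\!\big(\lambda^{p}I_N-\sum_{s=1}^{p}A_s\lambda^{p-s}\big)=0$, so there is a nonzero $v$ with $\lambda^{p}v=\sum_{s=1}^{p}A_s\lambda^{p-s}v$, that is $v=\sum_{s=1}^{p}A_s\lambda^{-s}v$ for $\lambda\neq 0$. Suppose for contradiction that $|\lambda|\ge 1$. Taking the maximum-row-sum norm and using $|\lambda|^{-s}\le 1$ gives $\norm{v}_{\infty}\le\sum_{s=1}^{p}\norm{A_s}_{\infty}|\lambda|^{-s}\norm{v}_{\infty}\le\big(\sum_{s=1}^{p}(|\alpha_s|+|\beta_s|)\big)\norm{v}_{\infty}<\norm{v}_{\infty}$, a contradiction since $\norm{v}_{\infty}>0$. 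Hence every eigenvalue of $G^{*}$ lies strictly inside the unit disc, i.e.\ $\rho(G^{*})<1$.

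Given $\rho(G^{*})<1$, I would invoke Gelfand's formula to obtain $\norm{G^{*j}}\le C r^{j}$ for some $r\in(\rho(G^{*}),1)$, so that $I-G^{*}$ is invertible with $(I-G^{*})^{-1}=\sum_{j\ge 0}G^{*j}$, and the random series $\sum_{j\ge 0}G^{*j}\mathcal{E}_{t-j}^{*}$ converges in $L^{1}$ and almost surely, since $\sum_{j\ge 0}\mathbb{E}\norm{G^{*j}\mathcal{E}_{t-j}^{*}}\le C\,\mathbb{E}\norm{\mathcal{E}_{0}^{*}}\sum_{j\ge 0}r^{j}<\infty$ under the innovation assumption, while $\mathbb{E}\norm{\mathcal{B}_0^{*}}<\infty$ follows from $\textbf{E}\norm{Z_i}<\infty$. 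Setting $\mathbb{Y}_t^{*}:=(I-G^{*})^{-1}\mathcal{B}_0^{*}+\sum_{j\ge 0}G^{*j}\mathcal{E}_{t-j}^{*}$ and substituting into \eqref{model6} verifies the recursion; because this is a fixed measurable function of the time-shifted i.i.d.\ innovation sequence (conditionally on the nodal information carried by $\mathcal{B}_0^{*}$), it is strictly stationary with finite first moment, and applying the selection matrix $\mathcal{J}$ to the leading $N$ coordinates yields the stated form $\mathbb{Y}_t=\mathcal{J}\mathbb{Y}_t^{*}$.

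Finally, for uniqueness I would take any strictly stationary finite-first-moment solution $\{\check{\mathbb{Y}}_t^{*}\}$, iterate the recursion backward $m$ steps to write $\check{\mathbb{Y}}_t^{*}=\sum_{j=0}^{m-1}G^{*j}\mathcal{B}_0^{*}+\sum_{j=0}^{m-1}G^{*j}\mathcal{E}_{t-j}^{*}+G^{*m}\check{\mathbb{Y}}_{t-m}^{*}$, and observe that $\mathbb{E}\norm{G^{*m}\check{\mathbb{Y}}_{t-m}^{*}}\le\norm{G^{*m}}\,\mathbb{E}\norm{\check{\mathbb{Y}}_0^{*}}\to 0$ by strict stationarity and $\norm{G^{*m}}\to 0$. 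Letting $m\to\infty$ forces $\check{\mathbb{Y}}_t^{*}$ to coincide almost surely with $\mathbb{Y}_t^{*}$, which establishes uniqueness and completes the argument.
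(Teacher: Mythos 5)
Your proof is correct, but it cannot be matched against the paper's own argument because there is none: Theorem~\ref{theorem4} is stated without proof, the surrounding text only remarking that when $X_{i(t-1)} \equiv Z_i$ the NVAR$(p)$ case can be treated as in Theorem~\ref{theorem1} (itself adopted from \cite{zhu2017network}), and nothing for the companion system appears in the appendix. Your write-up supplies exactly the step that this implicit reduction glosses over. For NVAR$(1)$ the stationarity argument rests on the bound $\rho(G) \le \norm{G}_{\infty} \le |\beta_1| + |\beta_2| < 1$, which exploits row-normalization of $W$; that bound does \emph{not} transfer to the companion matrix $G^{*}$, whose identity blocks $I_{N(p-1)}$ force $\norm{G^{*}}_{\infty} \ge 1$, so the spectral radius has to be controlled another way. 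Your detour through the block-companion characteristic polynomial $\det \big( \lambda^{p} I_N - \sum_{s=1}^{p} A_s \lambda^{p-s} \big) = 0$, with the contradiction $\norm{v}_{\infty} \le \big( \sum_{s=1}^{p} ( |\alpha_s| + |\beta_s| ) \big) \norm{v}_{\infty} < \norm{v}_{\infty}$ whenever $|\lambda| \ge 1$, is the right repair, and it is precisely what makes $\sum_{s=1}^{p} ( |\alpha_s| + |\beta_s| ) < 1$ the natural generalization of $|\beta_1| + |\beta_2| < 1$; the downstream steps (Gelfand bound, a.s.\ and $L^{1}$ convergence of the Neumann series, verification of the recursion, backward iteration for uniqueness) are the standard machinery underlying Theorem~\ref{theorem1}. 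Three small points to tighten: (i) $\norm{W}_{\infty} = 1$ presumes every node has out-degree $n_i \ge 1$, otherwise row-normalization is undefined; (ii) the strict stationarity you obtain is conditional on $\mathbb{Z}$, or joint upon invoking the independence of $\left\{ Z_i \right\}$ and $\left\{ \epsilon_{it} \right\}$ in (C1), which you gesture at but should state explicitly; (iii) your uniqueness argument lives at the level of the companion system, so add the one-line observation that any strictly stationary, integrable solution of \eqref{model5} stacks into such a solution of \eqref{model6}, whence uniqueness transfers back through $\mathbb{Y}_t = \mathcal{J} \mathbb{Y}_t^{*}$.
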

such that the following hold
\begin{align*}
\mathcal{J} \left( I - G^{*} \right)^{-1} \mathcal{B}_0^{*} = \left( I - \widetilde{G} \right)^{-1} \mathcal{B}_0, \ \widetilde{G} = \sum_{s=1}^p \left( \alpha_s W + \beta_s I_N  \right) \ \text{and} \ \mathcal{J} = \left[ I_N \ \mathbf{0}_{N(p-1),N} \right].
\end{align*}

\subsubsection{Parameter Estimation for Stationary framework}

Assume that the node-specific covariates $Z_i$ is a $d-$dimensional vector. Then, we write $\mathcal{X}^{*}_{i(t-1)} = \left( 1, w_i^{\prime} \mathbb{Y}_{t-1},..., \mathbb{Y}_{t-p}, Y_{i(t-p)}, Z_i^{\prime} \right)^{\prime} \in \mathbb{R}^{2p + d + 1}$ and $\mathbb{X}^{*}_{t-1} = \left( \mathcal{X}^{*}_{1(t-1)},..., \mathcal{X}^{*}_{N(t-1)} \right) \in \mathbb{R}^{N \times (2p + d + 1)}$. Moreover, denote the parameter vector with $\theta^{*} = \left( \beta_0, \alpha^{\prime}, \beta^{\prime}, \xi^{\prime} \right)^{\prime} \in \mathbb{R}^{2p + d + 1}$, where $\alpha = \left( \alpha_1,..., \alpha_p \right)^{\prime}$ and $\beta = \left( \beta_1,..., \beta_p \right)^{\prime}$. Then, model \eqref{model6} can be written as $\mathbb{Y}_t = \mathbb{X}^{*}_{t-1} \theta^{*} + \mathcal{E}_t$. Then, the ordinary least squares type estimator can be obtained by
\begin{align}
\widehat{\theta}_{\text{OLS}}^{*} = \left( \sum_{t= p+1}^T \mathbb{X}_{t-1}^{* \prime} \mathbb{X}^{*}_{t-1} \right)^{-1} \sum_{t= p+1}^T  \mathbb{X}_{t-1}^{*\prime} \mathbb{Y}_{t},
\end{align}

In order to investigate the asymptotic properties of $\widehat{\theta}_{\text{OLS}}^{*}$, we define $\Gamma^{*}(h) = \text{cov}^{*} \left( \mathbb{Y}_t, \mathbb{Y}_{t-h} \right)$ to be the conditional auto-covariance function for the NVAR$(p)$ model under the assumption of strict stationarity. Theorem \ref{theorem5} gives the limiting distribution of the estimator. 

\medskip

\begin{theorem}
\label{theorem5}
Assume that $\sum_{s=1}^p \left( | \alpha_s | + | \beta_s | \right) < 1$ and technical conditions (C1),(C2), (C4) hold. We then have that 
\begin{align*}
\sqrt{NT} \left( \widehat{\theta}^{*}_{\text{OLS}} - \theta^{*}_{\text{OLS}} \right) \to \mathcal{N} \left( 0, \sigma^2 \Sigma^{*-1} \right)
\end{align*}
as min$\left\{ N, T \right\} \to \infty$, where $\Sigma^{*}$. 
\end{theorem}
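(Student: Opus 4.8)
The plan is to reduce the NVAR$(p)$ problem to the companion NVAR$(1)$ form established in Theorem \ref{theorem4}, and then to adapt the two-step argument (a law of large numbers for the sample design moment, plus a martingale central limit theorem for the score) that underlies Theorem \ref{theorem2}. Writing the least squares error in the usual sandwich form,
\begin{align*}
\sqrt{NT}\big(\widehat{\theta}^{*}_{\text{OLS}} - \theta^{*}\big) = \left(\frac{1}{NT}\sum_{t=p+1}^T \mathbb{X}^{*\prime}_{t-1}\mathbb{X}^{*}_{t-1}\right)^{-1} \frac{1}{\sqrt{NT}}\sum_{t=p+1}^T \mathbb{X}^{*\prime}_{t-1}\mathcal{E}_t,
\end{align*}
the whole statement follows from (i) $\frac{1}{NT}\sum_t \mathbb{X}^{*\prime}_{t-1}\mathbb{X}^{*}_{t-1}\overset{P}{\to}\Sigma^{*}$ with $\Sigma^{*}$ positive definite, and (ii) $\frac{1}{\sqrt{NT}}\sum_t \mathbb{X}^{*\prime}_{t-1}\mathcal{E}_t \overset{D}{\to}\mathcal{N}(0,\sigma^2\Sigma^{*})$, after which Slutsky's theorem and the continuous mapping theorem deliver $\mathcal{N}(0,\sigma^2\Sigma^{*-1})$.

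For step (i) I would first invoke Theorem \ref{theorem4}: since $\sum_{s=1}^p(|\alpha_s|+|\beta_s|)<1$ forces the spectral radius of the block-companion matrix $G^{*}$ below unity, the stacked process admits the geometrically decaying representation $\mathbb{Y}^{*}_t = (I-G^{*})^{-1}\mathcal{B}^{*}_0 + \sum_{j\ge 0}G^{*j}\mathcal{E}^{*}_{t-j}$, so the conditional auto-covariances $\Gamma^{*}(h)$ exist and are absolutely summable. Conditionally on $\mathbb{Z}$, for fixed $N$ the strict stationarity and ergodicity of $\mathbb{Y}^{*}_t$ give, by the ergodic theorem, $\frac{1}{T}\sum_t \mathcal{X}^{*}_{i(t-1)}\mathcal{X}^{*\prime}_{i(t-1)}\overset{P}{\to}\mathbb{E}^{*}[\mathcal{X}^{*}_{i(t-1)}\mathcal{X}^{*\prime}_{i(t-1)}]$ as $T\to\infty$. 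The remaining cross-sectional averaging over $i$ is then handled by the network conditions: the blocks of $N^{-1}\sum_i\mathbb{E}^{*}[\mathcal{X}^{*}_{i(t-1)}\mathcal{X}^{*\prime}_{i(t-1)}]$ are built from traces of the form $N^{-1}\mathsf{trace}\{W^{a}\Gamma^{*}(h)W^{b}\}$, and condition (C4), the NVAR$(p)$ analogue of the limits $\kappa_1,\dots,\kappa_4$ in (C3), guarantees these converge to the entries of $\Sigma^{*}$ as $N\to\infty$.

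For step (ii) I would exploit that, conditionally on $\mathbb{Z}$ and relative to the filtration $\mathcal{F}_{t-1}=\sigma(\mathbb{Z},\{\mathcal{E}_s\}_{s\le t-1})$, each summand $\mathcal{X}^{*}_{i(t-1)}\epsilon_{i(t)}$ is a martingale difference, because $\mathcal{X}^{*}_{i(t-1)}$ is $\mathcal{F}_{t-1}$-measurable and $\mathbb{E}[\epsilon_{i(t)}\mid\mathcal{F}_{t-1}]=0$. I would then invoke a martingale central limit theorem, verifying its two hypotheses: the conditional-variance condition $\frac{1}{NT}\sum_t\mathbb{E}[\mathbb{X}^{*\prime}_{t-1}\mathcal{E}_t\mathcal{E}_t^{\prime}\mathbb{X}^{*}_{t-1}\mid\mathcal{F}_{t-1}]\overset{P}{\to}\sigma^2\Sigma^{*}$, which follows from $\mathbb{E}[\mathcal{E}_t\mathcal{E}_t^{\prime}\mid\mathcal{F}_{t-1}]=\sigma^2 I_N$ together with the LLN of step (i); and a Lindeberg/Lyapunov condition controlling the tails, which follows from the finite fourth moments assumed in (C1) and the geometric decay of the $G^{*j}$.

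The main obstacle I anticipate is the genuinely joint, rather than sequential, regime $\min\{N,T\}\to\infty$ combined with the cross-sectional dependence that the network injects. The regressors $w_i^{\prime}\mathbb{Y}_{t-s}$ make node $i$'s design row depend on its neighbours' responses, so the cross-sectional sum $\frac{1}{\sqrt N}\sum_i\mathcal{X}^{*}_{i(t-1)}\epsilon_{i(t)}$ is not a sum of independent terms and its variance must be controlled through the spectral structure of $W$. Here the uniformity condition (C2.2), $\lambda_{\max}(W^{*})=\mathcal{O}(\log N)$, together with the summability $\sum_i\pi_i^2\to 0$ of (C2.1), is what keeps the induced dependence weak enough that the off-diagonal contributions to the variance are asymptotically negligible and that a single limit is obtained regardless of the order in which $N$ and $T$ diverge. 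Verifying this negligibility uniformly, and checking that the positive-definiteness of $\Sigma^{*}$ survives the $N\to\infty$ limit so that the inverse in the sandwich formula stays well-behaved, are the points that will require the most care.
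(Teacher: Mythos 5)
Your proposal follows exactly the route the paper takes for its stationary NVAR$(1)$ analogue (Theorem \ref{theorem2}): write the estimator in sandwich form, prove (i) convergence in probability of the sample design moment $\frac{1}{NT}\sum_t \mathbb{X}^{*\prime}_{t-1}\mathbb{X}^{*}_{t-1}$ to $\Sigma^{*}$ via the strictly stationary solution and the trace-limit conditions, and (ii) a central limit theorem for the martingale score term $\frac{1}{\sqrt{NT}}\sum_t \mathbb{X}^{*\prime}_{t-1}\mathcal{E}_t$, then combine by Slutsky's theorem. The paper gives no separate argument for Theorem \ref{theorem5} beyond indicating that it mirrors this two-step scheme (with the companion-form reduction through Theorem \ref{theorem4} handling the order-$p$ structure), so your proposal --- including your identification of the joint $\min\{N,T\}\to\infty$ regime and the network-induced cross-sectional dependence, controlled through conditions (C2.1)--(C2.2), as the delicate points --- is essentially the paper's intended approach.
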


\medskip

\newpage 

\subsubsection{Parameter Estimation for Nonstationary framework}

Under the assumptions we impose for the nonstationary framework, we utilize the models \eqref{model5}-\eqref{model6} and in particularly we assume the existence of a vector of nonstationary regressors generated by the LUR specification. Similarly, as in the previous section we aim to investigate the asymptotic distribution of the IVX estimator for the NVAR$(p)$.   

Denote with $\widetilde{X}_{i(t-1)}^{*} = \left( 1, w_i^{\prime} \mathbb{Y}_{t-1},..., \mathbb{Y}_{t-p}, Y_{i(t-p)}, \widetilde{Z}_i^{\prime} \right)^{\prime} \in \mathbb{R}^{2p + d + 1}$ where the elements $w_i = \left( \omega_{ij} / n_i \right)^{\prime} \in \mathbb{R}^N$ for $1 \leq j \leq N$ represent the $i-$ row vector of $W$. Moreover, denote with 
\begin{align*}
\widetilde{\mathbb{X}}^{*}_{t-1} = \left( \widetilde{X}^{*}_{1(t-1)}, \widetilde{X}^{*}_{2(t-1)},..., \widetilde{X}^{*}_{N(t-1)} \right)^{\prime} \in \mathbb{R}^{N \times (2p + d + 1)},
\end{align*}
the vector that corresponds to the design matrix of the model that aligns with $T$ time observations. We also, denote the parameter vector with $\theta^{*} = \left( \beta_0, \alpha^{\prime}, \beta^{\prime}, \Xi \right)^{\prime} \in \mathbb{R}^{2p + d + 1}$, where $\alpha = \left( \alpha_1,..., \alpha_p \right)^{\prime}$ and $\beta = \left( \beta_1,..., \beta_p \right)^{\prime}$. Therefore, the NVAR$(p)$ model \eqref{model4} within the nostationary framework can be rewritten in vector form $\mathbb{Y}_t = \widetilde{\mathbb{X}}^{*}_{t-1} \widetilde{\theta}^{*} + \mathcal{E}_t$. Therefore, the IVX type estimator of the NVAR(1) model can be obtained by
\begin{align}
\widehat{ \widetilde{\theta} }^{*}_{\text{IVX}} = \left( \sum_{t=1}^T \widetilde{\mathbb{X}}_{t-1}^{*\prime} \widetilde{\mathbb{X}}^{*}_{t-1} \right)^{-1} \sum_{t=1}^T \widetilde{\mathbb{X}}_{t-1}^{*\prime} \mathbb{Y}_{t},
\end{align}

\begin{theorem}
\label{theorem6}
Under Assumption 2 (innovation covariance structure), we then have that 
\begin{align*}
\sqrt{NT} \big( \widehat{ \widetilde{\theta}}^{*}_{\text{IVX}} - \widetilde{\theta}^{*}_{\text{IVX}} \big) \to \mathcal{MN} \left( 0, \sigma^2 \mathbb{V}^{*-1} \right)
\end{align*}
as min$\left\{ N, T \right\} \to \infty$, where $\mathbb{V}^{*}$ is defined to be...   
\end{theorem}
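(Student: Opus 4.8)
The plan is to follow the canonical ``sandwich'' route for an IVX-type least squares estimator, exploiting the block structure of the design matrix $\widetilde{\mathbb{X}}^{*}_{t-1}$. Writing the estimation error in the usual form
\begin{align*}
\widehat{\widetilde{\theta}}^{*}_{\text{IVX}} - \widetilde{\theta}^{*}_{\text{IVX}} = \left( \sum_{t=1}^T \widetilde{\mathbb{X}}_{t-1}^{* \prime} \widetilde{\mathbb{X}}^{*}_{t-1} \right)^{-1} \left( \sum_{t=1}^T \widetilde{\mathbb{X}}_{t-1}^{* \prime} \mathcal{E}_t \right),
\end{align*}
I would first partition the rows of $\widetilde{X}^{*}_{i(t-1)}$ into a \emph{stationary} block (the intercept, the network term $w_i^{\prime}\mathbb{Y}_{t-1}$ and the lagged responses, which are stationary under the paper's maintained assumption that the regressand $\mathbb{Y}_t$ is stationary) and a \emph{nonstationary} block (the IVX-filtered instruments $\widetilde{Z}_i$ built from the local-to-unit-root regressors $X_{i(t-1)}$). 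Because the two blocks converge at different rates, I would introduce a block-diagonal normalizer $\boldsymbol{D}_{NT}$ and study $\boldsymbol{D}_{NT}^{-1}\big(\sum_t \widetilde{\mathbb{X}}_{t-1}^{*\prime}\widetilde{\mathbb{X}}^{*}_{t-1}\big)\boldsymbol{D}_{NT}^{-1}$ and $\boldsymbol{D}_{NT}^{-1}\sum_t \widetilde{\mathbb{X}}_{t-1}^{*\prime}\mathcal{E}_t$ separately before recombining; the common $\sqrt{NT}$ rate appearing in the statement is the effective self-normalized rate that the IVX construction is designed to deliver.

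For the information matrix I would treat the two diagonal blocks and the off-diagonal block in turn. On the stationary block, conditions (C1)--(C3) together with the moment and ergodicity arguments behind Theorem \ref{theorem5} give, by the law of large numbers under network dependence in the spirit of \cite{zhu2017network}, convergence in probability to a deterministic positive-definite matrix after dividing by $NT$. On the IVX block, the key is that the filtration $\widetilde{Z}_{i(t)}$ is by construction a \emph{mildly integrated} transformation of $X_{i(t-1)}$, so its normalized sample second-moment matrix converges to a stochastic integral of Ornstein--Uhlenbeck processes via Theorem \ref{theorem1}, and in the self-normalized form of Theorem \ref{theorem2} behaves asymptotically like the identity. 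The crucial sub-step is to show that the off-diagonal (stationary $\times$ IVX) block is asymptotically negligible under $\boldsymbol{D}_{NT}$, i.e.\ that the IVX instruments are asymptotically orthogonal to the stationary network regressors; this is where the across-the-nodes construction of the instrument must be controlled so that the limiting $\mathbb{V}^{*}$ is block-diagonal and positive-definite.

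For the score vector $\sum_t \widetilde{\mathbb{X}}_{t-1}^{*\prime}\mathcal{E}_t$ I would use the fact that $\mathcal{E}_t$ is a martingale difference sequence with respect to the natural filtration and that every component of $\widetilde{\mathbb{X}}_{t-1}^{*}$ is predictable, so the whole sum is a martingale array in both $t$ and $N$. I would establish joint weak convergence of the stationary part of the score and of the partial-sum processes driving the nonstationary instruments using the functional central limit theorem equipped with the $J_1$ topology on $\mathcal{D}_{\mathbb{R}^p}([0,1])$, realized on the enlarged probability space of Theorem \ref{theorem1}. Conditioning on the $\sigma$-field generated by the limiting Brownian motions, a martingale CLT for network-dependent arrays then yields that the normalized score converges to $\mathbb{V}^{*1/2}\,\boldsymbol{\xi}$ with $\boldsymbol{\xi}$ standard normal and independent of the (random) $\mathbb{V}^{*}$, the randomness entering only through the IVX block. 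Combining with the information-matrix limit through the continuous mapping theorem and Slutsky's lemma delivers $\boldsymbol{D}_{NT}\big(\widehat{\widetilde{\theta}}^{*}_{\text{IVX}}-\widetilde{\theta}^{*}_{\text{IVX}}\big) \to \mathbb{V}^{*-1/2}\boldsymbol{\xi}$, which is exactly $\mathcal{MN}(0,\sigma^2\mathbb{V}^{*-1})$.

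The hardest part will be the joint handling of the double asymptotic regime $\min\{N,T\}\to\infty$ together with the simultaneous presence of network dependence and near-unit-root persistence. Two points require genuine care: first, verifying a Lindeberg-type condition for the martingale CLT uniformly across both the cross-sectional ($N$) and the temporal ($T$) dimensions, so that the order in which the indices diverge is immaterial; and second, proving the asymptotic orthogonality of the IVX block to the stationary block, since the network term $w_i^{\prime}\mathbb{Y}_{t-1}$ itself loads on lagged values that are correlated with $X_{i(t-1)}$ through the common innovation structure of Assumption 2. Once the mildly-integrated property of the IVX instrument is shown to dominate and to decouple these two sources of dependence, the mixed-Gaussian conclusion follows in parallel with the NVAR(1) case of Theorem \ref{theorem3}.
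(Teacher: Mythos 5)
Your overall architecture --- the sandwich decomposition, the partition of $\widetilde{\mathbb{X}}^{*}_{t-1}$ into a stationary block and an IVX-instrumented block, term-by-term convergence of the information matrix, and a martingale CLT conditioned on the limiting Brownian motions to obtain mixed Gaussianity --- is the same route the paper intends: the paper gives no proof of Theorem \ref{theorem6} beyond the remark that it ``should be similar to the proof of Theorem \ref{theorem3}'', and the appendix sketch for Theorem \ref{theorem3} performs exactly your decomposition of $\widehat{\Sigma} = (NT)^{-1}\sum_{t} \widetilde{\mathbb{X}}_{t-1}\widetilde{\mathbb{X}}_{t-1}^{\prime}$ into entries $\mathcal{S}_{ij}$ with term-by-term limits. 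In that sense your proposal is, if anything, more developed than what the paper supplies: it makes explicit the off-diagonal negligibility requirement, the conditioning argument for mixed normality, and the uniform Lindeberg condition across both indices, none of which the paper's sketch addresses.

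There is, however, a genuine gap in your rate accounting, and it is not repaired by the sentence claiming that ``the common $\sqrt{NT}$ rate \ldots is the effective self-normalized rate that the IVX construction is designed to deliver.'' The instrument $\widetilde{Z}_{i(t)}$ is by construction mildly integrated with exponent $\lambda \in (0,1)$, so its sample second-moment matrix per node is of order $T^{1+\lambda}$ rather than $T$; consequently the instrument-block coordinates of the estimation error are $O_p\big(N^{-1/2}T^{-(1+\lambda)/2}\big)$, and scaling them by $\sqrt{NT}$ gives $O_p\big(T^{-\lambda/2}\big) \to 0$. Under a uniform $\sqrt{NT}$ normalization those coordinates are therefore degenerate in the limit, and the conclusion cannot be a nondegenerate $\mathcal{MN}\left(0,\sigma^2 \mathbb{V}^{*-1}\right)$ with $\mathbb{V}^{*}$ positive definite. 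Your own device --- the block normalizer $\boldsymbol{D}_{NT}$ --- is the correct one, but the conclusion must then be stated either as $\boldsymbol{D}_{NT}\big(\widehat{\widetilde{\theta}}^{*}_{\mathrm{IVX}} - \widetilde{\theta}^{*}_{\mathrm{IVX}}\big)$ converging to a mixed normal, with the stationary block normalized at $\sqrt{NT}$ and the instrument block at $\sqrt{N}\,T^{(1+\lambda)/2}$, or in self-normalized (Wald-type) form in which the heterogeneous rates cancel internally. As written, your final step silently replaces $\boldsymbol{D}_{NT}$ by $\sqrt{NT}\,\boldsymbol{I}$, and that substitution is precisely what fails. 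To be fair, the same tension sits inside the paper's own statements of Theorems \ref{theorem3} and \ref{theorem6}; but a complete proof has to resolve it rather than inherit it.
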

The proof of this theorem should be the second theoretical contribution of the paper to the literature. The proof should be similar to the proof of Theorem \ref{theorem3}.

\newpage 

\section{Asymptotic Theory}
\label{section3}

\subsection{Assumptions on Network Dependence}

To develop the asymptotic theory of our framework we shall discuss some relevant assumptions and limit theorems from the network analysis perspective.

\subsubsection{Asymptotic Uncorrelation}

Let $G_n = \left( V_n, E_n \right)$ be an undirected network and let $\left( Z_i \right)_{ i \in G_n }$ be a set of random variables indexed by the edges. We assume that $\left( Z_i \right)_{ i \in G_n }$ is jointly exchangeable in the vertices, that is, $\sigma: V_n \to V_n$ be a permutation of the vertex set. For an edge $i = (v_1, v_2) \in G_n$, we denote by $\sigma_{(i)} := \left( \sigma_{ (v_1)},  \sigma_{ (v_2) } \right)$ the permuted edge. The network is called exchangeable if $\left( Z_i \right)_{ i \in G_n }$ and $\left( Z_{ \sigma(i) } \right)_{ i \in G_n }$ have the same joint distribution for all permutations of the vertex set $\sigma$. In particular, this means that $Z_i$ and $Z_j$ are identically distributed if there is a permutation $\sigma$ such that $i = \sigma(j)$. Notice that this is always the case because we have defined networks as having no loops, that is, for $i = ( v, v^{\prime} )$ we always assume that $v \neq v^{\prime}$. Next, we want to describe under which circumstances two pairs $\left(  Z_{ i_1 }, Z_{ j_1 } \right)$ and  $\left(  Z_{ i_2 }, Z_{ j_2 } \right)$ for $i_1, j_1, i_2, j_2 \in G_n$ have the same distribution. Denote with $\kappa(i,j) := | e_i \cap e_j | \in \left\{ 0,1,2 \right\}$ be the number of common vertices of $i$ and $j$. The following lemma is easy to prove. 

\begin{lemma}
Let $i_1, j_1, i_2, j_2 \in G_n$. There is a permutation of the vertices $\sigma$ such that $( i_1, j_1 ) = \left( \sigma(i_2) , \sigma(j_2)   \right)$ if and only if $\kappa(i_1, j_1) = \kappa(i_2, j_2)$. 
\end{lemma}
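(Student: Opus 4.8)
The plan is to prove the two implications separately, with the forward direction being essentially immediate and the reverse direction reducing to an explicit construction of a permutation from the intersection data. Throughout I write $e_i$ for the two-element vertex set underlying the edge $i$, so that $\kappa(i,j) = |e_i \cap e_j|$, and I read the equation $(i_1,j_1) = (\sigma(i_2), \sigma(j_2))$ as the two edge identities $e_{i_1} = \sigma(e_{i_2})$ and $e_{j_1} = \sigma(e_{j_2})$ of unordered vertex sets.

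For the forward direction, suppose such a $\sigma$ exists. Since $\sigma$ is a bijection on $V_n$ it commutes with intersection, giving $\sigma(e_{i_2} \cap e_{j_2}) = \sigma(e_{i_2}) \cap \sigma(e_{j_2}) = e_{i_1} \cap e_{j_1}$, and being a bijection it preserves cardinalities, so $\kappa(i_1,j_1) = |e_{i_1} \cap e_{j_1}| = |e_{i_2} \cap e_{j_2}| = \kappa(i_2,j_2)$. This settles $(\Rightarrow)$ with no case analysis.

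For the reverse direction I would assume $\kappa(i_1,j_1) = \kappa(i_2,j_2) =: k$ and split into the three cases $k \in \{0,1,2\}$, in each exhibiting a partial injection $\sigma_0$ between the relevant vertices that respects the intersection pattern. When $k = 0$ the four vertices of $e_{i_1}, e_{j_1}$ are distinct, as are those of $e_{i_2}, e_{j_2}$, so I map the two vertices of $e_{i_2}$ onto those of $e_{i_1}$ and the two of $e_{j_2}$ onto those of $e_{j_1}$, a partial bijection on four-element sets. When $k = 1$ each pair shares exactly one vertex; I send the common vertex of $e_{i_2}, e_{j_2}$ to the common vertex of $e_{i_1}, e_{j_1}$ and match the two private vertices on each side, a partial bijection on three-element sets. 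When $k = 2$ the two edges on each side coincide, so it suffices to map $e_{i_2}$ bijectively onto $e_{i_1}$. In every case $\sigma_0$ is a bijection between subsets $A, B \subseteq V_n$ with $|A| = |B|$.

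The final step, and the only one requiring care, is extending $\sigma_0$ to a permutation of all of $V_n$: since $V_n$ is finite and $|A| = |B|$, the complements satisfy $|V_n \setminus A| = |V_n \setminus B|$, so any bijection between them combines with $\sigma_0$ to yield a permutation $\sigma$ acting as prescribed on the edges. I expect the main obstacle to be the bookkeeping in the $k = 1$ case — one must check that mapping shared vertex to shared vertex and private vertices to private vertices is genuinely well-defined, which follows because $e_i$ and $e_j$ are distinct two-element sets whenever $\kappa(i,j) = 1$, forcing the common vertex to be unique and $\sigma_0$ to be consistent, so that $\sigma$ indeed sends $i_2 \mapsto i_1$ and $j_2 \mapsto j_1$ as unordered edges.
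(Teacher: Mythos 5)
Your proof is correct and follows essentially the same route as the paper: the forward direction is the same direct computation using the fact that a bijection preserves intersections and cardinalities, and the reverse direction is the same explicit construction the paper sketches as ``mapping the corresponding vertices onto each other.'' Your treatment is in fact slightly more careful than the paper's, since extending the partial map by a bijection between the complements $V_n \setminus A$ and $V_n \setminus B$ (rather than literally fixing all other vertices, which can fail to be a permutation when $A \neq B$) is the correct way to complete the construction.
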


\begin{proof}
Let $i = ( v_{i_1}, v^{\prime}_{i_1} )$ and analogously for $i_2, j_1$ and $j_2$. Let $\sigma$ be a permutation such that $( i_1, j_1 ) = \left( \sigma( i_2 ), \sigma( j_2 ) \right)$. Then, we have that
\begin{align*}
\kappa (i_1, j_1 ) = \kappa \left( \sigma( i_2 ), \sigma( j_2 )  \right) = \big| \left\{ \sigma(v_{i_2} ),   \sigma(v^{\prime}_{i_2} )  \right\}  \cap    \left\{ \sigma(v_{j_2} ),   \sigma(v^{\prime}_{j_2} )  \right\} \big| = | e_{i_2} \cap  e_{j_2} | = \kappa (i_2, j_2 ). 
\end{align*}  
Thus, if $\kappa(i_1, j_1) =  \kappa(i_2, j_2)$ we can easily construct $\sigma$ with $(i_1, j_1) = \left( \sigma(i_2), \sigma(j_2) \right)$ just by mapping the corresponding vertices onto each other and letting the other vertices unchanged. 
\end{proof}

Next, we obtain the following corollary which characterizes when two pairs of random variables are identically distributed. 

\begin{corollary}
Let $\left( Z_{ \sigma(i) } \right)_{ i \in G_n }$ be a sequence of exchangeable random variables indexed by the edges of a network $G_n$. For any vertices $i_1, j_1, i_2, j_2 \in G_n$, we have that $\left( Z_{i_1},  Z_{j_1} \right)$ and $\left( Z_{i_2},  Z_{j_2} \right)$ are identically distributed if  $\kappa(i_1, j_1) = \kappa(i_2, j_2)$. 
\end{corollary}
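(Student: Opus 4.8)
The plan is to read this off directly from the preceding Lemma together with the definition of edge-exchangeability, so that essentially no new work is required beyond a careful bookkeeping of indices. The only hypothesis supplied is $\kappa(i_1,j_1)=\kappa(i_2,j_2)$, which is precisely the condition appearing in the ``if and only if'' of the Lemma. Since we only need the sufficient direction here, I would invoke the Lemma to extract a permutation $\sigma$ of the vertex set with
\begin{align*}
(i_1,j_1)=\big(\sigma(i_2),\sigma(j_2)\big), \qquad \text{equivalently} \qquad i_1=\sigma(i_2),\ \ j_1=\sigma(j_2).
\end{align*}
This converts the purely combinatorial shared-vertex hypothesis into an explicit relabelling of vertices that carries the pair $(i_2,j_2)$ to the pair $(i_1,j_1)$.

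Next I would apply the exchangeability assumption to this particular $\sigma$. By definition the families $\left(Z_i\right)_{i\in G_n}$ and $\left(Z_{\sigma(i)}\right)_{i\in G_n}$ have the same joint law, and therefore so does every finite subcollection obtained by fixing a common set of indices. Fixing the two indices $i_2$ and $j_2$, the coordinates of $\left(Z_{\sigma(i)}\right)_{i\in G_n}$ at those positions are $Z_{\sigma(i_2)}$ and $Z_{\sigma(j_2)}$, which by the choice of $\sigma$ equal $Z_{i_1}$ and $Z_{j_1}$. Matching these against the coordinates of $\left(Z_i\right)_{i\in G_n}$ at the same positions $i_2,j_2$ yields
\begin{align*}
\big(Z_{i_1},Z_{j_1}\big)=\big(Z_{\sigma(i_2)},Z_{\sigma(j_2)}\big) \ \overset{d}{=}\ \big(Z_{i_2},Z_{j_2}\big),
\end{align*}
which is exactly the claimed equality in distribution.

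The only step that warrants care, and which I would regard as the ``obstacle'' (a mild one), is the passage from the full-family distributional identity granted by exchangeability to the distributional identity of the two-dimensional marginal at the selected index pair. This is legitimate because the joint law of $\left(Z_i\right)_{i\in G_n}$ determines all of its finite-dimensional marginals, and selecting the coordinates indexed by $i_2$ and $j_2$ is the same projection applied to both families; I would state this explicitly rather than leave it implicit, since it is the one place where the definition of exchangeability is actually used. No moment or continuity conditions are needed, so the exchangeability hypothesis alone suffices and the argument closes immediately.
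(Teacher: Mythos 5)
Your proof is correct and follows exactly the route the paper intends: the corollary is stated as an immediate consequence of the preceding Lemma (which converts $\kappa(i_1,j_1)=\kappa(i_2,j_2)$ into a vertex permutation $\sigma$ with $(i_1,j_1)=(\sigma(i_2),\sigma(j_2))$) combined with the definition of exchangeability, which is precisely your argument. Your explicit remark that equality of the full joint laws passes to the two-dimensional marginal at the indices $i_2,j_2$ fills in the one step the paper leaves implicit, and is a worthwhile clarification rather than a deviation.
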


\newpage

\begin{corollary}
For all $n \in \mathbb{N}$, let $G_n = ( V_n, E_n )$ be undirected and complete networks and assume that $\left( Z_{ \sigma(i) } \right)_{ i \in G_n }$ are interchangeable and square integrable. Recall that $r_n = | G_n | = \frac{n(n-1)}{2}$ is the number of edges. Then, for pairwise different vertices $v_1, v_2, v_3, v_4 \in V_n$, 
\begin{align*}
Var \left( \frac{1}{v_n} \sum_{i \in G_n } Z_{n,i} \right) 
&=  \frac{1}{v_n^2} \sum_{i \in G_n } \text{Var} \left( Z_{n,i} \right) + \frac{1}{v_n^2} \sum_{ \substack{ i, j \in G_n \\ \kappa(i,j) = 1} } Cov \left( Z_{n,i}, Z_{n,j} \right)  + \frac{1}{v_n^2} \sum_{ \substack{ i, j \in G_n \\ \kappa(i,j) = 0} } Cov \left( Z_{n,i}, Z_{n,j} \right)
\\
&= r_n^{-1} Var \left( Z_{n, v_1v_2 } \right)  + \mathcal{O} \left( r_n^{ -1/2} \right) Cov \left( Z_{n,v_1v_2}, Z_{n,v_2v_3} \right) + \mathcal{O}(1) Cov \left( Z_{n,v_1v_2}, Z_{n,v_3v_4} \right) 
\end{align*}
\end{corollary}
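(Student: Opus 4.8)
The plan is to reduce the entire computation to a combinatorial count of ordered edge-pairs, using the preceding Corollary so that the value of each covariance depends only on the overlap number $\kappa(i,j)$; throughout I take the normalizing constant to be $v_n = r_n$, the number of edges, which is what makes the leading term equal $r_n^{-1}\,\text{Var}(Z_{n,v_1 v_2})$. First I would expand the variance in the standard way,
\begin{align*}
\text{Var}\left( \frac{1}{v_n} \sum_{i \in G_n} Z_{n,i} \right) = \frac{1}{v_n^2} \sum_{i \in G_n} \text{Var}(Z_{n,i}) + \frac{1}{v_n^2} \sum_{\substack{ i,j \in G_n \\ i \neq j }} \text{Cov}(Z_{n,i}, Z_{n,j}),
\end{align*}
and then partition the off-diagonal sum according to $\kappa(i,j) \in \{0,1,2\}$. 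Because two distinct edges of a simple, loop-free network cannot share both endpoints, the case $\kappa(i,j)=2$ forces $i=j$, so the off-diagonal sum splits exactly into the $\kappa=1$ and $\kappa=0$ blocks displayed in the statement.

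Next I would invoke the preceding Corollary: exchangeability of $\left( Z_i \right)_{i \in G_n}$ implies that $(Z_{n,i}, Z_{n,j})$ is identically distributed whenever $\kappa(i,j)$ is held fixed. Consequently every diagonal summand equals $\text{Var}(Z_{n,v_1 v_2})$, every $\kappa=1$ summand equals $\text{Cov}(Z_{n,v_1 v_2}, Z_{n,v_2 v_3})$, and every $\kappa=0$ summand equals $\text{Cov}(Z_{n,v_1 v_2}, Z_{n,v_3 v_4})$. The problem thus collapses to multiplying each representative (co)variance by the cardinality of its block and dividing by $v_n^2 = r_n^2$.

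The counting step is the only genuine work. There are $r_n = \tfrac{n(n-1)}{2}$ edges, so the diagonal contributes $r_n^{-1}\,\text{Var}(Z_{n,v_1 v_2})$ after normalization. For the $\kappa=1$ block, a fixed edge $(v_1,v_2)$ meets exactly $2(n-2)$ other edges in a single endpoint (namely $n-2$ through $v_1$ and $n-2$ through $v_2$), giving $2(n-2)\,r_n$ ordered pairs, so its normalized coefficient is $\tfrac{2(n-2)}{r_n}$. The remaining ordered pairs form the $\kappa=0$ block, of cardinality $r_n(r_n-1) - 2(n-2)\,r_n$, whose normalized coefficient $\tfrac{r_n - 1 - 2(n-2)}{r_n}$ tends to $1$ and is therefore $\mathcal{O}(1)$.

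The one point requiring care, and the source of the $r_n^{-1/2}$ rate in the statement, is the order of the $\kappa=1$ coefficient. Since $r_n \sim n^2/2$ one verifies that $\tfrac{2(n-2)}{r_n} = \Theta(n^{-1}) = \Theta(r_n^{-1/2})$, so this block is $\mathcal{O}(r_n^{-1/2})$ rather than the naive $\mathcal{O}(r_n^{-1})$: intuitively, the number of one-overlap neighbours of an edge grows like $\sqrt{r_n}$, which is exactly what inflates the rate from $r_n^{-1}$ to $r_n^{-1/2}$. Confirming this matching of orders is the main obstacle, with everything else being bookkeeping built on the Corollary.
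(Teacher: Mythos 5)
Your proof is correct and is precisely the argument the paper intends: the paper states this corollary without an explicit proof, and your combination of the standard variance expansion, the preceding corollary (joint distributions, hence covariances, depend only on $\kappa(i,j)$), and the edge-pair counts $2(n-2)\,r_n$ for $\kappa=1$ and $r_n\left(r_n-1-2(n-2)\right)$ for $\kappa=0$ supplies exactly the missing steps. Your two points of care are also the right ones: reading the normalization as $v_n = r_n$ (which the leading term $r_n^{-1}\mathrm{Var}(Z_{n,v_1v_2})$ forces), and verifying that $\kappa(i,j)=2$ implies $i=j$ in a loop-free simple graph together with the rate identification $2(n-2)/r_n = \Theta(n^{-1}) = \Theta(r_n^{-1/2})$.
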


\begin{remark}
For instance, if we assume that $Z_{n,i}$ and $Z_{n,j}$ are uncorrelated when $i \neq j$, the covariances vanish and we see that $Var \left( \frac{1}{v_n} \sum_{i \in G_n } Z_{n,i} \right) \to 0$ as $n \to \infty$ if   
$r_n^{-1} Var \left( Z_{n,v_1v_2} \right) \to 0$  as $n \to \infty$. Notice that as we have motivated in the previous corollary we do not need asymptotic uncorrelation for all edges, we need it only for edges $i$ and $j$ with $\kappa(i,j) = 0$. Furthermore, for edges $i$ and $j$ with $\kappa(i,j) \leq 1$ we merely need that the covariances do not grow too fast. Thus, we make an assumption on the average behaviour of disjoint edges, For  $\kappa(i,j) = 0$, we argue that $Cov \left( Z_{n,v_1v_2}, Z_{n,v_3v_4} \right) \to 0$ is a reasonable assumption, because we believ that most edges are separated and do not strongly influence each other.  
\end{remark}

\medskip

Consequently, to correctly establish the asymptotic theory within our proposed econometric environment  we shall consider the implementation of suitable stochastic approximations for functionals of Brownian motion under the presence of network (graph) dependence. We discuss two relevant studies, namely the framework proposed by \cite{laurent2022unit} as well as the framework of \cite{gobet2017parameter}. In particular, the latter considers parameter estimation of Orneisten-Uhlenbeck processes generating a stochastic graph while the former considers the development of unit root testing for high-dimensional data.  Notice that there are several issues we need to make sure are clarified. In particular, a more complex scenario would be the case we allow for an increasing sequence of graphs which have some form of temporal dependence. On the other hand, to simplify our setting we assume that we have a fix time length $T$ and consider a random graph $A_T$. For instance, if we would allow for the evolution of the network through time then we will need to impose related conditions on the correlation structure between graphs at fixed time length increments. 

Suppose that we have a set of stochastic graphs such as the adjacency value between vertices $i$ and $j$ is given by $\boldsymbol{\Omega}_{ij} := \boldsymbol{1} \left\{  \boldsymbol{J}_C (t) \in S_{ij} \right\}$. In other words, we can consider those network-dependent covariates as the jump-components in a time series regression for high dimensional data formulated to represent a stochastic process with unit roots. There is also a growing literature which considers unstable networks when modeling interactions of economic agents as in the study of \cite{badev2021nash}.

\newpage

\begin{wrap}
\subsection{Quasi-Maximum Likelihood Estimation Approach}

Estimation for the unknown parameter vector $\theta$ is developed by means of QMLE. We define the quasi-log-likelihood function for $\boldsymbol{\theta}$ as below
\begin{align}
\ell_{NT} (\boldsymbol{\theta}) = \sum_{t=1}^T \sum_{i=1}^N \ell_{i,t} (\boldsymbol{\theta}),    
\end{align}
where $\ell_{i,t} (\boldsymbol{\theta})$ is the log-likelihood contribution of a single network node which depends on the data, that is on the persistence properties of regressors in our case. 

We consider the following QMLE estimator 
\begin{align}
\ell_{NT} (\boldsymbol{\theta}) = \sum_{t=1}^T \sum_{i=1}^N \big( Y_{i,t} \mathsf{log} \lambda_{i,t}(\boldsymbol{\theta}) - \lambda_{i,i}(\boldsymbol{\theta}) \big),     
\end{align}
which is the log-likelihood obtained if all time series were contemporaneously independent. 

The particular property simplifies computations allowing to establish consistency and asymptotic normality of the resulting estimator. Some of the challenges include the fact that non standard proofs are required for establishing stationarity or infinite-dimensional processes in order to establish robust inference procedures within the double asymptotic regime. Suppose that the functional form of the model is described by $\lambda_t ( \theta)$, then $\hat{\theta}$ corresponds to the maximizer of the least squares criterion such that 
\begin{align}
\ell_{NT} (\theta) = - \sum_{t=1}^{T} \big( Y_t - \lambda_t ( \boldsymbol{\theta} ) \big)^{\prime} \big( Y_t - \lambda_t ( \boldsymbol{\theta} ) \big).    
\end{align}
Then, it follows that 
\begin{align}
\mathcal{S}_{NT}(\theta) = \sum_{t=1}^T \frac{ \partial \lambda_t (\boldsymbol{\theta}) }{ \partial \boldsymbol{\theta} } \big( Y_t - \lambda_t ( \boldsymbol{\theta} ) \big) = \sum_{t=1}^T s_{NT} (\boldsymbol{\theta}).      
\end{align}
Furthermore, the empirical Hessian and information matrices are respectively as below
\begin{align}
H_{NT}(\boldsymbol{\theta}) 
&= 
\sum_{t=1}^T \sum_{i=1}^N \frac{ \partial \lambda_{i,t} (\boldsymbol{\theta}) }{ \partial \boldsymbol{\theta} } \cdot \frac{ \partial \lambda_{i,t} (\boldsymbol{\theta}) }{ \partial \boldsymbol{\theta}^{\prime} } - \sum_{t=1}^T \sum_{i=1}^N \big( Y_{i,t} - \lambda_{i,t} ( \boldsymbol{\theta} ) \big) \frac{ \partial^2 \lambda_{i,t} (\boldsymbol{\theta}) }{ \partial \boldsymbol{\theta} \partial \boldsymbol{\theta}^{\prime} }
\\
\nonumber
\\
B_{NT}(\boldsymbol{\theta}) 
&= 
\sum_{t=1}^T \frac{ \partial \lambda_{t}^{\prime} (\boldsymbol{\theta}) }{ \partial \boldsymbol{\theta} } \boldsymbol{\Sigma}_t (\boldsymbol{\theta}) \frac{ \partial \lambda_{t}^{\prime} (\boldsymbol{\theta}) }{ \partial \boldsymbol{\theta}^{\prime} }. 
\end{align}
Moreover, we have that 
\begin{align}
H_{N}(\theta) &= \mathbb{E} \left( \frac{ \partial \lambda_{t}^{\prime} (\boldsymbol{\theta}) }{ \partial \boldsymbol{\theta} } \cdot  \frac{ \partial \lambda_{t} (\boldsymbol{\theta}) }{ \partial \boldsymbol{\theta}^{\prime} }  \right)
\\
B_N &= \mathbb{E} \left( \frac{ \partial \lambda_{t}^{\prime} (\boldsymbol{\theta}) }{ \partial \boldsymbol{\theta} } \cdot \xi_t(\theta) \cdot \xi^{\prime}_t(\theta)  \frac{ \partial \lambda_{t} (\boldsymbol{\theta}) }{ \partial \boldsymbol{\theta}^{\prime} }  \right)  
\end{align}

There, there exists a fixed open neighborhood $\mathcal{D} (\theta_0) = \big\{ \theta : \left| \theta - \theta \right|_2 \big\}$ of $\theta_0$ such that with probability tending to 1 as $\left\{ N, T_N \right\} \to \infty$, then the score function equation $\mathcal{S}_{NT} (\theta) = 0$, has a unique solution which is denoted by $\hat{\theta} \overset{p}{\to} \theta_0$ and $\sqrt{ N T_N } \left( \hat{\theta} - \theta_0 \right) \overset{d}{\to} \mathcal{N} \left( 0, B^{-1}  \right)$, where $B = \sigma^2 H$ and $H$ is defined, with $\Sigma_t = D_t = I$.

\end{wrap}


\newpage 

\begin{wrap}

\subsection{Theoretical Proofs}
\label{Section3.1}

\paragraph{Proof of Theorem 2 of \cite{zhu2017network}}
\
\color{blue}

Write the estimator of $\widehat{\theta}$ such that $\widehat{\theta} = \theta + \widehat{\Sigma}^{-1} \widehat{\Sigma}_{xe}$, where  
\begin{align}
\widehat{\Sigma} = (NT)^{-1} \sum_{t=1}^T \mathbb{X}_{t-1} \mathbb{X}_{t-1} ^{\prime} \ \ \text{and} \ \ \widehat{\Sigma}_{xe} = (NT)^{-1} \sum_{t=1}^T \mathbb{X}_{t-1} \mathcal{E}_{t}
\end{align}
As a result the conclusion of Theorem 3 holds if, 
\begin{align}
\label{result1}
\widehat{\Sigma} &\to \Sigma \\
\label{result2}
\sqrt{NT} \widehat{\Sigma}_{xe} &\to \mathcal{N} \left( 0 , \sigma^2 \Sigma \right)
\end{align}
as min$\left\{ N, T \right\} \to \infty$.  

Below, we prove \eqref{result1} in Step 1 and \eqref{result2} in Step 2. 

\underline{Step 1.} In this step, we prove the following result
\begin{align}
\widehat{\Sigma} = \frac{1}{NT} \sum_{t=1}^T \mathbb{X}_{t-1}^{\prime} \mathbb{X}_{t-1} 
=
\begin{pmatrix}
1 & \mathcal{S}_{12} & \mathcal{S}_{13} & \mathcal{S}_{14} \\
 & \mathcal{S}_{22} & \mathcal{S}_{23} & \mathcal{S}_{24} \\
 &  & \mathcal{S}_{33} & \mathcal{S}_{34} \\
 &  &  & \mathcal{S}_{44} \\
\end{pmatrix}
\to 
\begin{pmatrix}
1 & c_{\beta} & c_{\beta} & 0 \\
 & \Sigma_1 & \Sigma_2 & \kappa_3 \gamma^{\prime} \Sigma_z  \\
 &  & \Sigma_3 & \kappa_8 \gamma^{\prime} \Sigma_z  \\
 &  &  & \Sigma_z \\
\end{pmatrix}
\end{align}
where  
\begin{align}
\mathcal{S}_{12} &= \frac{1}{NT} \sum_{t=1}^T \sum_{i=1}^N w_i^{\prime} \mathbb{Y}_{t-1}, \ \ \mathcal{S}_{13} = \frac{1}{NT} \sum_{t=1}^T \sum_{i=1}^N Y_{i(t-1)}, \ \ \mathcal{S}_{14} = \frac{1}{N} \sum_{i=1}^N Z_{i}^{\prime} 
\nonumber
\\
\nonumber
\\
\mathcal{S}_{22} &= \frac{1}{NT} \sum_{t=1}^T \sum_{i=1}^N \left( w_i^{\prime} \mathbb{Y}_{t-1} \right)^2, \ \ \mathcal{S}_{23} = \frac{1}{NT} \sum_{t=1}^T \sum_{i=1}^N w_i^{\prime} \mathbb{Y}_{t-1} Y_{i(t-1)}
\nonumber
\\
\nonumber
\\
\mathcal{S}_{24} &= \frac{1}{NT} \sum_{t=1}^T \sum_{i=1}^N  w_i^{\prime} \mathbb{Y}_{t-1} Z_i^{\prime} , \ \ \mathcal{S}_{33} = \frac{1}{NT} \sum_{t=1}^T \sum_{i=1}^N Y^2_{i(t-1)}
\nonumber
\\
\nonumber
\\
\mathcal{S}_{34} &= \frac{1}{NT} \sum_{t=1}^T \sum_{i=1}^N Y_{i(t-1)} Z_i^{\prime}, \ \ \mathcal{S}_{44} = \frac{1}{N} \sum_{i=1}^N Z_i Z_i^{\prime}
\end{align}


\newpage 

Therefore, we have that 
\begin{align}
\mathbb{Y}_t = c_{\beta} \mathbf{1} + (I-G)^{-1} \mathbb{Z} \gamma + \widetilde{\mathbb{Y}}_t,
\end{align}
almost surely. 
where $\widetilde{\mathbb{Y}}_t = \sum_{j=0}^{\infty} G^j \mathcal{E}_{t-j}$, $\mathbb{Z} \gamma = \left( Z_1^{\prime}, ... ,  Z_N^{\prime} \right)^{\prime}$

For the case where no information regarding the persistence properties of regressors (that is, we assume that $X_{i(t)}$ are not generated via the LUR specification), then we see that the limiting distribution of the model estimate for the NVAR(1) model weakly converges to a Normal random variable.

\end{wrap}


\subsection{Nonstationary Framework}

In order to establish the asymptotic theory in our framework we shall establish the weak convergence of functionls of Brownian motion in graphs into OU processes under graph dependence. In this Section we focus on the proofs of the related theorems that give the asymptotic distribution of the estimators based on the IVX instrumentation. We follow similar derivations as in Section \ref{Section3.1}, but for the nonstationary framework we assume that the vector of regressors $X_{i(t)}$ is generated by the LUR process.

Let $\widetilde{X}_{i(t-1)} = \left( 1, w_i^{\prime} \mathbb{Y}_{t-1}, Y_{i(t-1)}, \widetilde{Z}_{i(t)}^{\prime} \right)^{\prime} \in \mathbb{R}^{p+3}$, and $w_i = \left( \omega_{ij} / n_i \right)^{\prime} \in \mathbb{R}^N$ for $1 \leq j \leq N$ is the $i-$ row vector of $W$. Moreover, denote with $\widetilde{\mathbb{X}}_t = \left( \widetilde{X}_{1t}, \widetilde{X}_{2t},..., \widetilde{X}_{Nt} \right)^{\prime} \in \mathbb{R}^{N \times (p+3)}$. Then, the NVAR(1) model \eqref{model4} can be rewritten in vector form $\mathbb{Y}_t = \widetilde{\mathbb{X}}_{t-1} \widetilde{\theta} + \mathcal{E}_t$. Therefore, the IVX type estimator of the NVAR(1) model can be obtained by
\begin{align*}
\widetilde{\theta}_{\text{IVX}} = \left( \sum_{t=1}^T \widetilde{\mathbb{X}}_{t-1}^{\prime} \widetilde{\mathbb{X}}_{t-1} \right)^{-1} \sum_{t=1}^T \widetilde{\mathbb{X}}_{t-1}^{\prime} \mathbb{Y}_{t},
\end{align*}

\paragraph{Sketch Proof of Theorem 3}

We define with 
\begin{align}
\widehat{\Sigma} := \frac{1}{NT} \sum_{t=1}^T \widetilde{\mathbb{X}}_{t-1} \widetilde{\mathbb{X}}_{t-1} ^{\prime} \ \ \text{and} \ \ \widehat{\Sigma}_{xe} = \frac{1}{NT} \sum_{t=1}^T \widetilde{\mathbb{X}}_{t-1} \mathcal{E}_{t}
\end{align}
We have that 
\begin{align}
\widehat{\Sigma} = \frac{1}{NT} \sum_{t=1}^T \widetilde{\mathbb{X}}_{t-1} \widetilde{\mathbb{X}}_{t-1} ^{\prime} 
= 
\begin{bmatrix}
1 & \mathcal{S}_{12} & \mathcal{S}_{13} & \mathcal{S}_{14} \\
 & \mathcal{S}_{22} & \mathcal{S}_{23} & \mathcal{S}_{24} \\
 &  & \mathcal{S}_{33} & \mathcal{S}_{34} \\
 &  &  & \mathcal{S}_{44} \\
\end{bmatrix}
\end{align}
where  
\begin{align}
\mathcal{S}_{12} &= \frac{1}{NT} \sum_{t=1}^T \sum_{i=1}^N w_i^{\prime} \mathbb{Y}_{t-1}, \ \ \mathcal{S}_{13} = \frac{1}{NT} \sum_{t=1}^T \sum_{i=1}^N Y_{i(t-1)}, \ \ \mathcal{S}_{14} = \frac{1}{N} \sum_{t=1}^T \widetilde{Z}_{t}^{\prime} 
\nonumber
\\
\nonumber
\\
\mathcal{S}_{22} &= \frac{1}{NT} \sum_{t=1}^T \sum_{i=1}^N \left( w_i^{\prime} \mathbb{Y}_{t-1} \right)^2, \ \ \mathcal{S}_{23} = \frac{1}{NT} \sum_{t=1}^T \sum_{i=1}^N w_i^{\prime} \mathbb{Y}_{t-1} Y_{i(t-1)}
\nonumber
\\
\nonumber
\\
\mathcal{S}_{24} &= \frac{1}{NT} \sum_{t=1}^T \sum_{i=1}^N  w_i^{\prime} \mathbb{Y}_{t-1} \widetilde{Z}_t^{\prime} , \ \ \mathcal{S}_{33} = \frac{1}{NT} \sum_{t=1}^T \sum_{i=1}^N Y^2_{i(t-1)}
\nonumber
\\
\nonumber
\\
\mathcal{S}_{34} &= \frac{1}{NT} \sum_{t=1}^T \sum_{i=1}^N Y_{i(t-1)} \widetilde{Z}_t^{\prime}, \ \ \mathcal{S}_{44} = \frac{1}{T} \sum_{t=1}^T \widetilde{Z}_t \widetilde{Z}_t^{\prime}
\end{align}
where $\widetilde{Z}_t$ denotes the IVX instrument. 

Therefore, we need to examine the weakly convergence for each of the above terms in order to determine the covariance of the limiting distribution of the estimator $\widetilde{\theta}_{\text{IVX}}$. 
\color{red}
\underline{Convergence of $\mathcal{S}_{12}$:}
\color{black}
\begin{align}
\mathcal{S}_{12} = \frac{1}{NT} \sum_{t=1}^T \sum_{i=1}^N w_i^{\prime} \mathbb{Y}_{t-1} = \frac{1}{NT} \sum_{t=1}^T \mathbf{1}^{\prime} W \mathbb{Y}_{t-1} = 
\frac{\beta_0}{1 - \beta_1 - \beta_2} + \mathcal{S}_{12}^{A} + \mathcal{S}_{12}^{B}
\end{align}
where 
\begin{align}
\mathcal{S}_{12}^{A} = \frac{1}{N} \mathbf{1}^{\prime} W \left( I - G \right)^{-1} \mathbb{Z} \Xi \ \ \text{and} \ \ \mathcal{S}_{12}^{B} = \frac{1}{NT} \sum_{t=1}^T \mathbf{1}^{\prime} W \widetilde{ \mathbb{Y}}_{t-1}
\end{align}
\color{red}
To check whether the above terms converge to zero, that is, $\mathcal{S}_{12}^{A} \to 0$ and $\mathcal{S}_{12}^{B} \to 0$.
\color{black}

\color{red}
\underline{Convergence of $\mathcal{S}_{13}$:}
\color{black}

\begin{align}
\mathcal{S}_{13} = \frac{1}{NT} \sum_{t=1}^T \sum_{i=1}^N Y_{i(t-1)} = \frac{1}{NT} \sum_{t=1}^T \mathbf{1}^{\prime} \mathbb{Y}_{t-1} = 
\frac{\beta_0}{1 - \beta_1 - \beta_2} + \mathcal{S}_{13}^{A} + \mathcal{S}_{13}^{B}
\end{align}
where 
\begin{align}
\mathcal{S}_{13}^{A} = \frac{1}{N} \mathbf{1}^{\prime} \left( I - G \right)^{-1} \mathbb{Z} \Xi \ \ \text{and} \ \ \mathcal{S}_{13}^{B} = \frac{1}{NT} \sum_{t=1}^T \mathbf{1}^{\prime} \widetilde{ \mathbb{Y}}_{t-1}
\end{align}
\color{blue}
We have that $N^{-2} \mathbf{1}^{\prime} Q \mathbf{1} \to 0$ and $N^{-1} \sum_{j=0}^{\infty} \left\{ \mathbf{1}^{\prime} G^{j} \mathbf{1} \right\}^{1/2} \to 0$, as $N \to \infty$, which implies that $\mathcal{S}_{13}^{A} \to 0$ and $\mathcal{S}_{14}^{B} \to 0$. Therefore, $\mathcal{S}_{13} \to \frac{\beta_0}{1 - \beta_1 - \beta_2}$.   
\color{black}

\newpage

\subsection{Statistical Hypothesis Testing}

In this Section, we test hypotheses of interest based on the IVX-Wald test for the NVAR(1) model. For instance, we can examine whether a subset of parameters of the NVAR(1) model is zero. We begin, with a more simplified example since we are particularly  interested to check the presence of both predictability and causality of certain regressors. The first example, we consider is to test whether a subset of parameters of the NVAR(1) model are simultaneously zero across the nodes.  

To construct such a test we define $\mathcal{I} \subset \left\{ (j,r,s): j,r \in \left\{ 1,...,p  \right\} \ \text{and} \ s \in \left\{ 1,...,d \right\} \right\}$ be a subset of indices due to the presence of both nonstationary regressors and lagged $Y_{i,(t)}$ variables as predictors. We consider the following testing hypothesis: 
\begin{align*}
H_0&: \theta^{(s)}_{jr} = 0, \text{for all} \ (j,r,s) \in \mathcal{I}  \\
H_1&: \text{There exists at least one pair} \ (j,r,s) \in \mathcal{I} \ \text{such that} \ \theta^{(s)}_{jr} \neq 0. 
\end{align*}

Here we need to check whether the linear restrictions imposed on the parameter space of the NVAR(1) model under both the null and the alternative hypothesis do not violate the stability of the model. For example, whether we can establish that det$\left( \theta^{*} (z) \right) \neq 0$, for all $z \leq 1$. Furthermore, although the presence effects and nonlinearities in network dependence is an interesting research avenue (e.g., see \cite{armillotta2022nonlinear}, \cite{terasvirta1994aspects}), we focus on testing the simultaneous presence of nonstationarity and network dependence, which is a novel test on its own right in the existing literature, using our proposed estimation and inference methodology for NVAR models with nonstationary regressors.

\newpage 


\newpage


\section{Empirical Application}
\label{section4}

We present an empirical implementation based on the dataset of \cite{hardle2016tenet}. The particular dataset contains the stock returns of the top 100 US financial institutions by market capitalization with weekly time observations spanning the trading period from 2007 to 2013 as well as a set of macroeconomic and financial variables with unknown persistence properties. We begin by explaining the procedure to estimate the nonstochastic adjacency matrix for the full sampling period. 

In particular, we construct a tail network which is a suitable to capture tail interconnectedness (see, \cite{chen2019tail}), that is, financial conntectedness estimated with tail risk measures (see, also \cite{daouia2022extremile}). Suppose that $t \in \left\{ 1,...,T \right\}$, then by employing a quantile optimization function\footnote{Consider the conditional quantile function estimated via $Q_{y_i} \left( \tau | x_i \right) = F_{y_i}^{-1} \left( \tau | x_i \right)$. The optimization function to obtain the model estimates is expressed as $Q_{\tau} \left( y_i | x_i \right) = argmin_{q(x)} \ \mathbb{E} \big[ g_{\tau} \left( y_i - q(x_i) \right) \big]$, where $\tau \in (0,1)$ is a specific quantile level, and $g_{\tau}( u ) = u \left( \tau - \mathbf{1} { \left\{ u < 0 \right\} } \right)$ is the check function.} we can estimate the risk measures of the VaR and CoVaR via the following 
\begin{align}
R_{i,t} &= a_i  +  b_i^{\prime} X_{t-1} + u_{i,t} \label{VaR} \\
R_{j,t} &= a_{j|i} + b_{j|i}^{\prime} X_{t-1} + \gamma_{j|i} R_{i,t}  + u_{j|i,t} \label{CoVaR}
\end{align} 
where $\{ R_{i,t} \}_{i=1,...,N}$ is the vector of portfolio returns at time $t$, and $X_{t-1}$ is a vector of exogenous regressors containing macroeconomic characteristics common across assets. Let $Q_{\tau} \left( \cdot\ | \ \mathcal{F}_{t-1} \right)$ denote the quantile operator for $\tau \in (0,1)$ conditional on an information set $\mathcal{F}_{t-1}$. Then, the error terms satisfy $Q_{\tau} \left( u_{i,t} \ | \ X_{t-1} \right) = 0$ for the quantile regression \eqref{VaR}, and $Q_{\tau} \left( u_{j|i,t} \ | \ R_{it}, X_{t-1} \right) = 0$ for the quantile regression  \eqref{CoVaR}. 

Denote $\theta_{(1)} := \left( a_i, b_i^{\prime} \right)$ and $\check{X}_{t-1}^{(1)} := \left( 1, X_{t-1}^{\prime} \right)^{\prime}$ for \eqref{VaR} and $\theta_{(2)} := \left( a_{j|i}, b_{j|i}^{\prime}, \gamma_{j|i} \right)$ and  $\check{X}_{t-1}^{(2)} := \left( 1, X_{t-1}^{\prime}, R_{i,t} \right)^{\prime}$ for \eqref{CoVaR}. Then, the QR  estimator for model \eqref{VaR} is given by
\begin{align}
\label{opt1}
\widehat{ \theta}_{(1)} \left( \tau  \right) := \underset{ \theta_{(1)} \in \mathbb{R}^{p+2} }{  \text{argmin} } \sum_{t=1}^{T} g_{\tau} \left( R_{i,t} - \theta_{(1)}^{\prime} \check{X}_{t-1}^{(1)}  \right)
\end{align}
Similarly, the quantile estimator of \eqref{CoVaR} is obtained by
\begin{align}
\label{opt2}
\widehat{ \theta }_{(2)} \left( \tau  \right) := \underset{ \theta_{(2) } \in \mathbb{R}^{p+1} }{  \text{argmin} } \sum_{t=1}^{T} g_{\tau} \left( R_{j,t} - \theta_{(2)}^{\prime} \check{X}_{t-1}^{(2)}  \right)
\end{align}
The QR estimator from the optimization function \eqref{opt1} allows to construct the one-period ahead forecast for the Value-at-Risk of firm $i$ such that $\hat{ \text{VaR} }_{i,t+1} ( \tau ) = \hat{a}_i + \hat{ b }_i X_{t}$. Similarly, we can obtain the one-period ahead forecast for the Conditional-Value-at-Risk of firm $i$ such that $\hat{ \text{CoVaR} }_{j|i,t+1} ( \tau ) = \hat{a}_{j|i} + \hat{ b }_{j|i} X_{t} +  \hat{\gamma}_{j|i} \hat{VaR}_{i,t+1} ( \tau )$. This procedure allows to construct the VaR-$\Delta$CoVaR risk matrix proposed by \cite{katsourisOlmo20}. In this paper, we consider the construction of a binary adjacency matrix $\Omega_{ij}$, with elements such that $\omega_{ij} = 1$, if $\gamma_{i|j}$ is individually statistical significant and $\omega_{ij} = 0$ otherwise.

\newpage 

The above approach allows to obtain a non-random adjacency matrix which captures the tail connectivity of the nodes for the whole sampling period. The second step, is to estimate the NVAR(1) model assuming the regressors are generated by the local-to-unit root process as described in Section \ref{section2}. The testing hypothesis we are interested to examine is whether there is joint predictability and causality, that is, there exists some $\xi_i \neq 0$, for $i \in \left\{ 1,...,p \right\}$ which is a subset of the parameter space of $\xi$ and $\beta_2 \neq 0$. Under the null hypothesis there is no joint predictability and no causality while under the alternative hypothesis there is simultaneously joint predictability and causality.   


\newpage 

\section{Conclusion}
\label{section5}

In this paper, we consider a high-dimensional network and propose the Network Vector Autoregression model as a suitable statistical methodology to capture the network dynamics. The particular NVAR model we propose, allows to incorporate a nonstochastic adjacency matrix as well as it captures the time series properties of regressors via the local-to-unit root specification. Moreover, we examine the asymptotic theory of the model within the framework of network dependence and conditional neighborhood dependence presented by \cite{kojevnikov2021limit} and   \cite{lee2019stable} respectively. More specifically, we develop the asymptotic theory for the IVX estimator of the NVAR model and show that the estimator is robust to the degree of persistence of regressors, and specify the matrix moments of its limiting distribution.  To summarize, in this paper we consider the persistence properties of the regressors which correspond to each node in the network and develop an asymptotic theory based on the IVX instrumentation. An empirical application and two additional applications demonstrate the usefulness of our framework.

\bigskip

\paragraph{Conflicts of interest}

The author declares that there are no known conflicts of interest.

\paragraph{Data availability}

No data was used for the research described in this article.

\paragraph{Acknowledgements}

I wish to thank Professor Jose Olmo and Professor Tassos Magdalinos from the Department of Economics, University of Southampton for helpful discussions as well as Dr. Julius Vainora from the Faculty of Economics, University of Cambridge. Moreover, I am grateful to Professor Markku Lanne and Professor Mika Meitz from the Faculty of Social Sciences, University of Helsinki for helpful conversations. Financial support from the Research Council of Finland (grant 347986) is gratefully acknowledged.  All remaining errors are my own responsibility.

\newpage

\section{Appendix}



\subsection{Illustrative Example of VAR Representations}

\begin{example}

We consider the multivariate autoregressive index model representation proposed by \cite{reinsel1983some},  which provide a suitable parametrization for demensionality reduction. Specifically, consider an $n-$dimensional stationary vector autoregressive time series denoted by $\boldsymbol{Y}_t = ( y_{1t},..., y_{nt} )^{\prime}$ 
\begin{align}
\boldsymbol{Y}_t - \sum_{j=1}^p \boldsymbol{\Phi}_j \boldsymbol{Y}_{t-j} = \boldsymbol{\varepsilon}_t,    
\end{align}
where $\boldsymbol{\varepsilon}_t$ are \textit{i.i.d} $\mathcal{N} ( 0, \boldsymbol{\Omega} )$. Define with $\boldsymbol{\Phi} ( L ) = \boldsymbol{I} -  \boldsymbol{\Phi}_1 L - ... - \boldsymbol{\Phi}_p L^p$, where $L$ denotes the lag operator, and assume that $\mathsf{det} [ \boldsymbol{\Phi} (z) ] \neq 0$ for all complex numbers $| z | \leq 1$. In addition, we consider that $\boldsymbol{I} - \boldsymbol{\Phi} ( L ) = \sum_{j=1}^p \boldsymbol{\Phi}_j L^j$ can be factorized as $\boldsymbol{I} - \boldsymbol{\Phi} ( L ) \equiv \boldsymbol{A}( L ) \boldsymbol{B}(L)$ such that
\begin{align}
\boldsymbol{A}( L )_{( m \times r)}  &= \boldsymbol{A}_1 L + ... +  \boldsymbol{A}_{p_1} L^{p_1}
\\
\boldsymbol{B}( L )_{( r \times m)}  &= \boldsymbol{B}_0 + \boldsymbol{B}_1 L + ... +  \boldsymbol{B}_{p_2} L^{p_2}
\end{align}
Therefore, we have that 
\begin{align}
\boldsymbol{Y}_t =  \boldsymbol{A}( L ) \boldsymbol{B}(L) + \boldsymbol{\varepsilon}_t 
\equiv 
\sum_{ \mathsf{u} = 1 }^{p_1} \sum_{ \mathsf{v} = 1 }^{p_2} \boldsymbol{A}_{ \mathsf{u} } \boldsymbol{B}_{ \mathsf{v} } \boldsymbol{Y}_{ t - \mathsf{u} - \mathsf{v} } + \boldsymbol{\varepsilon}_t. 
\end{align}
where $p_1 + p_2 = p$. Suppose that $\boldsymbol{\eta}_t$ is an $r-$dimensional series such that $\boldsymbol{\eta}_t = \boldsymbol{B} (L) \boldsymbol{Y}_t \equiv \boldsymbol{B}_0 \boldsymbol{Y}_t + \boldsymbol{B}_1 \boldsymbol{Y}_{t - 1} + ... + \boldsymbol{B}_{ p_2 } \boldsymbol{Y}_{t - p_2}$ then the variables $\left\{ \boldsymbol{\eta}_{t-1},...,  \boldsymbol{\eta}_{t-p_1}  \right\}$, provide an adaptive filtration of all past information required for prediction. In particular, we have that 
\begin{align}
\boldsymbol{Y}_t =  \boldsymbol{A}( L ) \boldsymbol{\eta}_t   + \boldsymbol{\varepsilon}_t \equiv \sum_{ \mathsf{u} = 1 }^{p_1} \boldsymbol{A}_{ \mathsf{u} } \boldsymbol{\eta}_{ t - \mathsf{u}  } + \boldsymbol{\varepsilon}_t, 
\end{align}
Assume that $p_2 = 0$ and $p_1 = p$, then the model can be formulated as below
\begin{align}
 \boldsymbol{Y}_t = \sum_{j=1}^p \boldsymbol{A}_j \boldsymbol{B}_0 \boldsymbol{Y}_{t-j} + \boldsymbol{\varepsilon}_{t}  \equiv \boldsymbol{ \mathcal{A}} \boldsymbol{X}_{t-1} +  \boldsymbol{\varepsilon}_t
\end{align}
where $\boldsymbol{X}_{t-1}^{\prime}  = \big(  \boldsymbol{Y}_{t-1}^{\prime}  \boldsymbol{B}_0^{\prime},..., \boldsymbol{Y}_{t-p}^{\prime}  \boldsymbol{B}_p^{\prime}  \big) = \big( \boldsymbol{Y}_{t-1}^{\prime},..., \boldsymbol{Y}_{t-p}^{\prime}  \big) \big( \boldsymbol{I}_p \otimes \boldsymbol{B}_0^{\prime} \big)$ and $\boldsymbol{ \mathcal{A}} = ( \boldsymbol{A}_1,...., \boldsymbol{A}_p )$. 

In addition, full rank $r$ conditions for these matrices are assumed to hold. Since the elements of $\boldsymbol{A}_j^{\prime}$ and $\boldsymbol{B}_0$ are determined only up to nonsingular linear transformations, that is, $ \boldsymbol{A}_j \boldsymbol{B}_0 \equiv \boldsymbol{A}_j \boldsymbol{P}^{-1} \boldsymbol{P} \boldsymbol{B}_0$, for $j = 1,..., p$ and any $( r \times r)$ nonsingular matrix $\boldsymbol{P}$, we must impose some normalization conditions to ensure uniqueness of the parameters. This example can be extended to more complex functional forms provided that we can obtain observationally equivalent processes. 
\end{example}

\begin{example}[see, \cite{cubadda2022dimension}]
Consider the Dimension-Reducible VAR model as
\begin{align}
\boldsymbol{Y}_t = \sum_{j=1}^p \boldsymbol{A} \alpha_j \boldsymbol{A}^{\prime} \boldsymbol{Y}_{t-j} + \boldsymbol{u}_t,     
\end{align}
Under the assumption that joint DGP of the observed variables $\boldsymbol{Y}_t$ follows that of a FAVAR then the coefficient matrix has the following structure
\begin{align}
\boldsymbol{A} = 
\begin{bmatrix}
\boldsymbol{I}_m & \boldsymbol{0}_{  (n-m) \times m }
\\
\boldsymbol{0}_{ m \times  (n-m) } & \boldsymbol{B}_{ (n-m) \times (r -m ) }
\end{bmatrix}
\end{align}
Therefore, to perform structural analysis through the DRVAR, then the invertability condition for the polynomial VAR coefficient matrix should hold to obtain the Wold representation of the time series $\boldsymbol{Y}_t$. However, we can invert the polynomial coefficient matrix of $x_t = \sum_{j=1}^p \alpha_j x_{t-j} + \xi_t$ and insert the Wold representation of the dynamic components $x_t$ in expression $Y_t = A x_t + \varepsilon_t$ to obtain  
\begin{align}
\boldsymbol{Y}_t = \boldsymbol{A} \boldsymbol{\gamma} (L) \boldsymbol{\xi}_t + \boldsymbol{\varepsilon}_t,     
\end{align}
where $\boldsymbol{\gamma}(L)^{-1} = \boldsymbol{I}_n - \sum_{j=1}^p \alpha_j L^j$. Lastly, by linearly projecting $\varepsilon_t$ on $\xi_t$, we can decompose the static component as $\varepsilon_t = \rho \xi_t + v_t$, where $\rho = A_{\perp} A_{ \perp }^{\prime} \Sigma_u \boldsymbol{A}  ( \boldsymbol{A}^{\prime} \Sigma_u \boldsymbol{A} )^{-1}$ 
\begin{align}
\boldsymbol{Y}_t = \underbrace{ \boldsymbol{C} (L) \boldsymbol{\xi}_t }_{ \textcolor{blue}{ \chi_t } } + \boldsymbol{v}_t,     
\end{align}
where $C_0 = (A + \rho)$ and $C_j = A \gamma_j$ for some $j > 0$.  

The above derivations show how to decompose the underline dynamics of the observable series $\boldsymbol{Y}_t$ into the common component $\boldsymbol{\chi}_t$, and the ignorable errors, $\boldsymbol{v}_t$. Moreover, since the errors $\boldsymbol{\xi}_t$ and $\boldsymbol{v}_t$ are uncorrelated at any lead and lags, we can recover the structural shocks solely by the reduced form errors $\boldsymbol{\xi}_t$ of the common component $\boldsymbol{\chi}_t$ using procedures commonly used in structural VAR analysis. In particular, we can obtain the structural shocks as $\boldsymbol{u}_t = \boldsymbol{C}^{-1} \boldsymbol{D} \boldsymbol{\xi}_t$ and the impulse response functions from $\boldsymbol{\Psi} (L) = \boldsymbol{C}(L) \boldsymbol{D}^{-1} \boldsymbol{C}$, where $\boldsymbol{D}$ is the matrix formed by the first $r$ rows of $\boldsymbol{C}_0$ and $\boldsymbol{C}$ is the lower triangular matrix such that $\boldsymbol{C} \boldsymbol{C}^{\prime} = \boldsymbol{D} \boldsymbol{A}^{\prime} \boldsymbol{\Sigma}_u \boldsymbol{A} \boldsymbol{D}^{\prime}$. Notice that such identification strategy is based on a unique rotation of the reduced form common shocks $\boldsymbol{\xi}_t$, and hence it does not require to endow the dynamic component $\boldsymbol{\chi}_t$ with an economic interpretation.
\end{example}

\newpage

\subsection{Illustrative Examples}

\begin{example}[NVAR$(p,q)$]
The main idea of a NVAR-LUR$(p,q)$ model is as follows. To begin with, innovation transmission across bilateral links takes place with a lag. Moreover, uncertainty occurs due to unequal variation in the frequency of network interactions in comparison to the time series observations. Thus, the proposed econometric functional form specification accommodates characteristics on how innovations transmit through the graph (network) over time. 
\begin{align}
\label{NVAR}
x_t = \alpha_1 A x_{t-1} + ... + \alpha_p A x_{t-p} + v_t \equiv A \sum_{j=1}^p \theta_j x_{t-j} + v_t,  
\end{align}
\end{example}

\begin{proposition}[Granger-Causality in NVAR$(p,1)$]
\

Suppose that $x_t$ is generated by \eqref{NVAR} and assume that $\alpha _{ \ell } \neq 0 \ \forall \ \ell \in \left\{ 1,..., p \right\}$, $x_j$ Granger-causes $x_i$ at horizon $h$ \textit{iff} there exists a connection from $i$ to $j$ of at least one order $k \in \left\{ k^{\star}, k^{\star} + 1,..., h \right\}$, where $k^{\star} = \mathsf{ceil} ( h / p )$.       
\end{proposition}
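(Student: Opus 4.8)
The plan is to pass to the infinite moving-average (impulse-response) representation of \eqref{NVAR} and to read off horizon-$h$ Granger causality from a single matrix entry. Writing the model as a VAR($p$) with coefficient matrices $\Phi_j = \alpha_j A$, there is an MA representation $x_t = \sum_{k\ge 0}\Psi_k v_{t-k}$ with $\Psi_0 = I_N$ and the recursion $\Psi_k = \sum_{j=1}^{\min(k,p)} \Phi_j \Psi_{k-j}$. In the standard $h$-step (multi-horizon) causality sense, $x_j$ Granger-causes $x_i$ at horizon $h$ if and only if $(\Psi_h)_{ij}\neq 0$, so the whole statement reduces to locating the nonzero entries of $\Psi_h$. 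The decisive structural feature of the NVAR$(p,1)$ model is that every lag coefficient is a scalar multiple of the single network matrix $A$; I would exploit this to show, by induction on $k$ using the recursion, that each $\Psi_k$ is a polynomial in $A$ alone.

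Next I would make this polynomial explicit. Unfolding the recursion, every way of decomposing the horizon $h$ as an ordered sum $j_1+\cdots+j_r = h$ with each $j_\ell\in\{1,\dots,p\}$ contributes a term $\alpha_{j_1}\cdots\alpha_{j_r} A^r$, so that
\begin{align*}
\Psi_h = \sum_{r} b_{h,r}\, A^r, \qquad b_{h,r} = \sum_{\substack{j_1+\cdots+j_r=h \\ j_\ell\in\{1,\dots,p\}}} \alpha_{j_1}\cdots\alpha_{j_r}.
\end{align*}
Since each step has size at most $p$ and at least $1$, a composition of $h$ into $r$ admissible parts exists precisely when $k^\star=\lceil h/p\rceil \le r \le h$; this pins down the support of $\Psi_h$ to the powers $A^{k^\star},\dots,A^{h}$ and is exactly where the index set $\{k^\star,\dots,h\}$ enters. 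Finally, recalling that $(A^r)_{ij}\neq 0$ is equivalent to the existence of a walk of length $r$ from $i$ to $j$, i.e.\ a connection of order $r$, the entry $(\Psi_h)_{ij}=\sum_{r=k^\star}^{h} b_{h,r}(A^r)_{ij}$ couples the causality question to the connection orders claimed in the statement.

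The two implications then follow. If there is no connection of any order $k\in\{k^\star,\dots,h\}$, every $(A^r)_{ij}$ in the sum vanishes, hence $(\Psi_h)_{ij}=0$ and $x_j$ does not cause $x_i$ at horizon $h$; this direction is immediate and needs no assumption beyond the support computation. For the converse I would argue that distinct connection orders cannot cancel against one another: $b_{h,r}$ is homogeneous of degree $r$ in $\alpha=(\alpha_1,\dots,\alpha_p)$ with strictly positive integer coefficients (different compositions yield the same monomial only additively), so $b_{h,r}\neq 0$ for every admissible $r$, and under the scaling $\alpha\mapsto s\alpha$ the entry $(\Psi_h)_{ij}$ becomes the degree-$h$ polynomial $\sum_r s^r b_{h,r}(\alpha)(A^r)_{ij}$ in $s$, whose coefficient of $s^{k}$ is nonzero whenever a connection of order $k$ exists. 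Such a polynomial is not identically zero, so $(\Psi_h)_{ij}\neq 0$ for all but finitely many scalings, which yields the causality.

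The main obstacle is precisely this possibility of cancellation among walks of different orders for a fixed vector $\alpha$: the clean equivalence is guaranteed only up to non-pathological choices of coefficients (generic $\alpha$), and I would either state the result in that generic form or, when $A$ has nonnegative entries and the $\alpha_\ell$ share a common sign, upgrade it to an unconditional equivalence, since then all terms $b_{h,r}(A^r)_{ij}$ are of one sign and no cancellation can occur. It is worth emphasising that the whole argument hinges on the $q=1$ feature of the model, which forces exactly one factor of $A$ per lag step and thereby makes the power of $A$ coincide with the number of temporal steps; for $q>1$ the bookkeeping between walk length and horizon would have to be redone.
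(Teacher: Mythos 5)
Your argument is correct in substance, and it takes a genuinely different route from the paper --- in fact, the paper's own proof environment for this proposition never addresses the causality claim at all: it restates the model, constructs the covariate matrix $\boldsymbol{X}_t$, derives the (pooled) OLS estimator $\hat{\boldsymbol{\alpha}}_{ols}$, records its $\sqrt{N}$ and $\sqrt{T}$ Gaussian limits, and closes with a weak-stationarity eigenvalue condition. The combinatorial content of the statement appears in the paper only as the unproved GIRF display following the proposition, which asserts that $\partial x_{i,t+h}/\partial v_{j,t}$ is a sum of terms in $[A_{ij}]$ evaluated at orders $k^{\star},\dots,h$ with coefficients polynomial in $\alpha$. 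Your composition argument supplies exactly the derivation the paper omits: unfolding $\Psi_k=\sum_{j=1}^{\min(k,p)}\alpha_j A\,\Psi_{k-j}$ gives $\Psi_h=\sum_{r=k^{\star}}^{h}b_{h,r}(\alpha)A^r$ with $b_{h,r}$ summing over compositions of $h$ into $r$ parts from $\{1,\dots,p\}$, and the constraint $k^{\star}=\lceil h/p\rceil\le r\le h$ is precisely where the index set in the statement comes from; identifying causality at horizon $h$ with $(\Psi_h)_{ij}\neq 0$ is consistent with the paper's GIRF reading. Two repairs are worth making. First, your scaling-in-$s$ argument for the converse requires $b_{h,k}(\alpha)\neq 0$ at the \emph{given} $\alpha$, which can fail for mixed-sign coefficients; the clean fix is genericity in $\alpha$ itself: monomials arising from compositions with different part counts $r$ have different total degrees, so the polynomial $\alpha\mapsto(\Psi_h)_{ij}$ is not identically zero whenever some $(A^k)_{ij}\neq 0$ with $k^{\star}\le k\le h$, and the exceptional set of $\alpha$ is a proper algebraic (measure-zero) subset. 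Second, your sign-condition upgrade should require $A\ge 0$ entrywise and all $\alpha_{\ell}>0$, not merely a common sign: if all $\alpha_{\ell}<0$, then $b_{h,r}(\alpha)$ carries the sign $(-1)^r$ and cancellation across different orders $r$ reappears. With these qualifications your proof is a complete and rigorous substitute for what the paper provides; indeed, the proposition's ``iff'' is literally false for adversarial choices of $\alpha$, so the generic or sign-restricted formulation you propose is the form in which the result is actually true.
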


Consequently, the GIRF has the following form: 
\begin{align}
\frac{ \partial x_{i,t + h} }{ \partial v_{j,t} } \bigg| \mathcal{F}_t 
:= 
\vartheta_{t = k^{\star} } ( \alpha ) [ A_{ij} ] |_{ t = k^{\star} } + ... +  \vartheta_{t = k^{\star} } ( \alpha ) [ A_{ij} ] |_{ t = h } 
\end{align}
The coefficients $\left\{  \vartheta_{t} ( \alpha )  \right\}_{ t = k^{\star} }^h$ are polynomials of $\left\{ \alpha_j \right\}_{ j = 1}^p$.  Our aim is to bridge the interaction between network connectedness and transmission of shock dynamics under the presence of possibly nonstationarity.

\medskip

\begin{proof}
Consider again the functional form of the NVAR$( p, 1 )$ model as below: 
\begin{align}
y_t = \alpha_1 A y_{t-1} + ... + \alpha_p A y_{t-p} + u_t, 
\end{align}    
Let $\boldsymbol{\alpha} = ( \alpha_1,..., \alpha_p )^{\prime} \in \mathbb{R}^p$ and denote with $\boldsymbol{X}_t$ the covariates matrix which includes information in the lags 1 to $p$ of $y_t$ using first-order network connections,
\begin{align}
\boldsymbol{X}_t = \big[ A y_{t-1},...,  A y_{t-p} \big]    
\end{align}
Since the network adjacency matrix is assumed to be fixed (time invariance in social interactions), then this implies that the dependence of the regressors on the network dependence can suppressed. This implies that given a known adjacency matrix $A$, then the unknown parameter vector $\boldsymbol{\alpha}$ can be estimated by OLS, especially due to the absence of exogenous regressors or nonstationary regressors in the system.

\newpage 

This yields the following optimization problem: 
\begin{align}
\underset{ \boldsymbol{\alpha} \in \mathbb{R}^{p+1} }{ \mathsf{min} } \ \frac{1}{NT} \sum_{t=1}^T \big( y_t - \boldsymbol{X}_t \boldsymbol{\alpha} \big)^{\prime} \boldsymbol{\Sigma} \big( y_t - \boldsymbol{X}_t \boldsymbol{\alpha} \big),
\end{align}
where $\boldsymbol{\Sigma} = \mathsf{Var} ( u_t )$. Therefore, this implies that 
\begin{align}
\hat{\boldsymbol{\alpha}}_{ols}  = \left( \sum_{t=1}^T \boldsymbol{X}_t^{\prime} \boldsymbol{\Sigma}^{-1} \boldsymbol{X}_t \right)^{-1} \left( \sum_{t=1}^T \boldsymbol{X}_t^{\prime} \boldsymbol{\Sigma}^{-1} y_t \right)  
\end{align}
Specifically, when we assume that the covariance matrix of the disturbances is the identity matrix, that is, $\boldsymbol{\Sigma} = \boldsymbol{I}$, then the OLS estimator takes the form of a pooled OLS estimator as below: 
\begin{align}
\hat{\boldsymbol{\alpha}}_{ols} =  \left( \sum_{t=1}^N \sum_{t=1}^T x_{it} x_{it}^{\prime} \right)^{-1} \left( \sum_{t=1}^N \sum_{t=1}^T x_{it} y_t \right)    
\end{align}
As $n \to \infty$, then it holds that 
\begin{align}
\sqrt{N} \left(  \hat{\boldsymbol{\alpha}}_{ols} - \boldsymbol{\alpha} \right) \Rightarrow \mathcal{N} \left(  0, \frac{\sigma^2}{ T } \mathbb{E} \left[ x_{it} x_{it}^{\prime} \right] \right)   
\end{align}
On the other hand, as $T \to \infty$ then the $\hat{\boldsymbol{\alpha}}_{ols}$ estimator is consistent if the model is specified correctly and $y_t$ is ergodic and strictly stationary which implies that
\begin{align}
\sqrt{T} \left(  \hat{\boldsymbol{\alpha}}_{ols} - \boldsymbol{\alpha} \right) \Rightarrow \mathcal{N} \big(  0, \mathbb{E} \left[ \boldsymbol{X}_{t} \boldsymbol{X}_{t}^{\prime} \right]^{-1} \mathbb{E} \left[ \boldsymbol{X}_{t}^{\prime} \boldsymbol{\Sigma}  \boldsymbol{X}_{t}^{\prime} \right]   \mathbb{E} \left[ \boldsymbol{X}_{t} \boldsymbol{X}_{t}^{\prime} \right]^{-1 \prime} \big)   
\end{align}

\medskip

\begin{remark}
Now, there are certain cases in which although the estimator $\boldsymbol{\alpha}$ that corresponds to the NVAR$(p,1)$ model can be obtained via the Expectation-Maximization (EM) algorithm, point identification of the estimator is not guaranteed. This might suggest that the mapping between parameters in the process for $\left\{ y_t \right\}_{ t=1 }^T$ and $\alpha$ is not bijective, similar to the statistical problem of estimating continuous time models using discrete time data. Furthermore, Bayesian methods can be employed especially in cases of lack of point identification by imposing specific distributional assumptions on the Prior and Posterior distributions corresponding to the solution of the EM algorithm for the model estimator. 
\end{remark}

Notice that the process $\left\{ y_t \right\}$ is weakly stationary \textit{iff} for all eigenvalues $\lambda_i$ of $A$ it holds that $| \lambda_i | < 1 / | a |$. 

\end{proof}

\newpage 

\subsection{Impulse Response Functions}

Assume that $y_t$ is stationary, then the long-term response of $y_t$ to a permanent increase in $u_t$ is equivalent to the (contemporaneous) response of $y$ to a disturbance in $\epsilon$, $\partial y / \partial \epsilon$, such that
\begin{align}
R = \underset{ h \to \infty  }{ \mathsf{lim} } \ \left(  \frac{ \partial y_{t+h} }{ \partial u_t  } + \frac{ \partial y_{t+h} }{ \partial u_{t+1}  } + .... +   \frac{ \partial y_{t+h} }{ \partial u_{t+h}  } \right)  \equiv \frac{ \partial  }{ \partial \varepsilon  }
\end{align}
Notice that it holds that $y = \big( I - \alpha A \big)^{ -1 } \varepsilon$. Moreover, since $y_t$ is assumed to be a stationary process then it holds that 
\begin{align}
R 
= 
\underset{ h \to \infty  }{ \mathsf{lim} } \ \sum_{ j = 0 }^{ h + 1 } \frac{ \partial y_{t+h} }{ \partial u_{ t + h - j } } 
=  
\underset{ h \to \infty  }{ \mathsf{lim} } \ \sum_{ j = 0 }^{ h + 1 } \frac{ \partial y_{t+h} }{ \partial u_{ t  } }   =  \sum_{ j = 0 }^{ \infty }  \frac{ \partial y_{t+h} }{ \partial u_{ t  } }.  
\end{align}
Therefore, to obtain the impulse response for $x_t$, one can write the econometric specification in the following companion form: 
\begin{align}
\boldsymbol{z}_t = \boldsymbol{F} \boldsymbol{z}_{t-1} + \boldsymbol{e}_t,     
\end{align}
where the $( N \times N )$ $\boldsymbol{F}$ matrix is expressed as below:
\begin{align}
\boldsymbol{F} =
\begin{bmatrix}
\alpha_1 A \ & \ \alpha_2 A \ & \hdots \ & \ \alpha_{p-1} A \ & \ \alpha_p A 
\\
I_N \ & \ 0_N \ & \ \hdots \  & \ 0_N \ & \ 0_N \ 
\\
0_N & I_N  & \hdots &  0_N  &  0_N
\\
\vdots \ & \ \vdots \  & \  \ddots  \  & \ \vdots \ & \ \vdots
\\
0_N \ & \ 0_N \ & \ \hdots \ & \ I_N \ & 0_N 
\end{bmatrix}
\end{align}
Therefore, the impulse response of $x_t$ to a disturbance in $v_t$ is then given by $( n \times n )$ upper left block in $F^h$ such that:
\begin{align}
\frac{ \partial y_{t+h} }{ \partial u_t } 
=  
\frac{ \partial y_{t+h} }{ \partial z_{t+h} } \cdot  \frac{ \partial z_{t+h} }{ \partial e_t } \cdot \frac{ \partial e_t }{ \partial u_t }   
\end{align}

\newpage

\begin{center}
\textbf{SUPPLEMENTARY APPENDIX} 
\\ 
\vspace{0.5em}
"\textbf{Robust Estimation in Network Vector Autoregression with Nonstationary Regressors}"    
\\
\vspace{1em}
Christis Katsouris 
\\
\textit{University of Southampton} $\&$ \textit{University of Helsinki}
\end{center}

\bigskip

\appendix

\section{Supplementary Results}
\label{Section9}

Let $\mathcal{D}$ be the space of c\`adl\`ag functions $f : [0,1] \to \mathbb{R}$ equipped with the Skorokhod topology. As a measurable structure on $\mathcal{D}$ we consider the corresponding Borel $\sigma-$algebra $\mathcal{B}( D )$.  We write $X_n \overset{ \mathcal{D} }{\to} X$ whenever $X_n$, $X$ are random variables taking values in $\big( \mathcal{D}, \mathcal{B} ( \mathcal{D} ) \big)$ such that $X_n$ converges weakly to $X$ in $\mathcal{D}$ as $n \to \infty$, that is, $\underset{ n \to \infty }{ \mathsf{lim} } \mathbb{E} f (X_n ) \to \mathbb{E} f(X)$ for all bounded continuous functions $f : \mathcal{D} \to \mathbb{R}$.     
\color{black}

\subsection{Matrix Representation of multivariate LUR process}

Consider that the $p-$dimensional vector of the system's predictors $\boldsymbol{X}_t$ is generated via the following multivariate LUR process (see, also \cite{katsouris2023statistical})
\begin{align}
\label{LUR}
\boldsymbol{X}_t = \left( \boldsymbol{I}_p - \frac{ \boldsymbol{C}_p }{ T^{ \upgamma_x } } \right) \boldsymbol{X}_{t-1} + \boldsymbol{u}_t, \ \ \ \ t \in \left\{ 1,...,T \right\} 
\end{align}
Let $\boldsymbol{\Phi}_T ( c_i, \upgamma_x ) := \left( \boldsymbol{I}_p - \frac{ \boldsymbol{C}_p }{ T^{ \upgamma_x } } \right)$, then under the assumption of homogeneous coefficients of persistence, that is,  $c_i \equiv c$ for all $i \in \left\{ 1,..., p \right\}$, then \eqref{LUR}
can be written as below
\begin{align}
\label{le.repr}
\boldsymbol{\Phi}_T ( c_i, \upgamma_x )  \boldsymbol{X} = \boldsymbol{U}
\end{align}
where the non-stochastic $T \times T$ matrix $\boldsymbol{\Phi}_T ( c_i, \upgamma_x )$ takes the bi-diagonal form as below
\begin{align}
\boldsymbol{\Phi}_T ( c_i, \upgamma_x ) = 
\begin{bmatrix}
\textcolor{red}{1} & 0 & 0 & \hdots & 0  & 0 
\\
\textcolor{blue}{- \varphi_T ( c, \upgamma_x )} & \textcolor{red}{1} & 0 & \hdots & 0  & 0 
\\
0 & \textcolor{blue}{- \varphi_T ( c, \upgamma_x )} & \textcolor{red}{1} & \vdots & 0  & 0 
\\
\vdots & \vdots & \vdots & \vdots & \vdots  & \vdots
\\
0 & 0 & 0 & 0 & \textcolor{blue}{- \varphi_T ( c, \upgamma_x )}  & \textcolor{red}{1} 
\end{bmatrix}
\end{align}
with $\textcolor{blue}{- \varphi_T (c_i, \upgamma_x ) = ( 1 -  \frac{c_i}{T^{ \upgamma_x } } )}$. 

\newpage

Denote with $\boldsymbol{K}_T = \boldsymbol{I}_T - \frac{1}{T} \mathbf{1} \mathbf{1}^{\top}$ denote the $T \times T$ centering matrix and $\boldsymbol{X}_K = \boldsymbol{K}_T \boldsymbol{X}$ the corresponding centred big data matrix. Therefore, the corresponding centered covariance matrix can be written in the following form 
\begin{align}
\label{le.covariance}
\boldsymbol{ \mathcal{S} }_T = \frac{1}{p} \boldsymbol{X}_K \boldsymbol{X}_K^{\top}. 
\end{align} 

\begin{remark}
\

\begin{itemize}
\item Firstly, notice that the centering matrix $\boldsymbol{K}_T$ is employed for demeaning the original data in high dimensional settings. One application can be found in the representation of the dynamic time series model (see, Handbook of Econometrics). A different implementation of this property is employed by \cite{maillet2015global} as well as by  \cite{davis2016asymptotic} who study the asymptotic theory for the sample covariance matrix of a heavy-tailed multivariate matrix. Another application is presented by \cite{Phillips2013inconsistent} who derive an expression of the condition number in explosive VAR models.  

\item Secondly, notice that the covariance matrix representation given by expression \eqref{le.covariance} becomes a functional of the nuisance coefficient of persistence, which is a data specific feature. However considering the structure of our proposed novel VaR-CoVaR risk matrix, the asymptotic theory of its largest eigenvalues will be more challenging than conventional approaches. In this case, we will need to consider both the framework proposed by \cite{davis2016asymptotic} as well as the proposed regression-based estimation methodology for the risk matrix, the quantile predictive regressions which capture the possibly nonstationary nature of predictors; in order to derive an analytical tractable limiting distribution with desirable properties for statistical inference. 

\end{itemize}
\end{remark}
Moreover, we have that $\boldsymbol{X} = \boldsymbol{\Phi}_T ( c_i, \upgamma_x )^{-1} \boldsymbol{U}$ which implies that the covariance type matrix $\boldsymbol{ \mathcal{S} }_T$ can be written in the following form
\begin{align}
\boldsymbol{ \mathcal{S} }_T = \frac{1}{p} \left( \boldsymbol{K}_T \boldsymbol{\Phi}_T ( c_i, \upgamma_x )^{-1} \boldsymbol{U} \right) \left( \boldsymbol{K}_T \boldsymbol{\Phi}_T ( c_i, \upgamma_x )^{-1} \boldsymbol{U} \right)^{\top}  
\end{align}
Then, the uncentered matrix can be written as below 
\begin{align}
\boldsymbol{ \mathcal{S} }^o_T  = \frac{1}{p} \boldsymbol{X} \boldsymbol{X}^{ \top }.
\end{align}
and using the multivariate LUR process representation \eqref{le.repr} we obtain
\begin{align}
\boldsymbol{ \mathcal{S} }^o_T = \left( \boldsymbol{\Phi}_T ( c_i, \upgamma_x )^{-1} \boldsymbol{U} \right) \left( \boldsymbol{\Phi}_T ( c_i, \upgamma_x )^{-1} \boldsymbol{U} \right)^{\top}
\end{align}
Denote with $\hat{ \lambda}_k$ and $\hat{ \lambda}^{o}_k$ the $k-$th largest eigenvalues of $\boldsymbol{ \mathcal{S} }_T$ and $\boldsymbol{ \mathcal{S} }^o_T$ respectively. Then, one would be interested to examine the asymptotic behaviour of the largest eigenvalues of the matrices $\boldsymbol{ \mathcal{S} }_T$ and $\boldsymbol{ \mathcal{S} }^o_T$ in a high-dimensional setting.

\newpage

\newpage

\subsection{Predictive Regression with variable addition}

Consider the model 
\begin{align}
y_t &= \beta x_{t-1} + u_t \\
x_t &= \rho  x_{t-1} + v_t
\end{align}
with the IVX instrument defined as 
\begin{align}
z_{t-1} = ( 1 - \rho_z L )^{-1} \Delta x_{t-1}, \ \ \rho_z = \left( 1 - \frac{c_z}{ T^{ \lambda } } \right)
\end{align}
For the IVX estimator it holds that $\hat{\beta}_{ivx} - \beta = \mathcal{O}_p \left(  T^{- \frac{ \lambda + 1 }{2} }  \right)$. Moreover, the limiting distribution of the IVX-based t type statistic $t_{ivx}$ for the null $\beta = 0$, under the sequences of local alternatives of the type $\beta = \beta_T = b T^{ - \frac{ \lambda + 1 }{2} }$, is shown to be
\begin{align}
t_{ivx} \Rightarrow \mathcal{Z} + b \frac{ \sigma_v \sqrt{2} }{ \sigma_u \sqrt{ c_z } } \left( B_c^2(1) - \int_0^1 B_c(s) dB_c(s) \right)
\end{align}
where $dB_c(s) := - c B_c(s) ds + dV(s)$ and $\mathcal{Z}$ is a standard normal variate independent of $V(t)$ and $B_c(s)$. 

\begin{proposition}
Denote by $\hat{\beta}_{ivx}^{*}$ the IVX estimator of $\beta$ with $y_{t-1}$ instrumented by itself and by $t_{ivx}^{*}$ the corresponding t statistic. Let $\sigma_{uv} \neq 0$, then we have that 
\end{proposition}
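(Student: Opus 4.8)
The plan is to reduce the statement to a joint weak-convergence problem for the sample moments that define the $*$-estimator, and then read off the limit of $t_{ivx}^{*}$ by the continuous mapping theorem. First I would write the IV normal equation for the $*$-construction in ratio form,
\begin{align*}
\hat{\beta}_{ivx}^{*} - \beta = \left( \sum_{t=1}^{T} w_{t-1}\, x_{t-1} \right)^{-1} \sum_{t=1}^{T} w_{t-1}\, u_t ,
\end{align*}
where $w_{t-1}$ is the instrument tied to $y_{t-1}$ in the $*$-scheme (self-instrumentation). The decisive object is the numerator $\sum w_{t-1} u_t$: because $y_{t-1}=\beta x_{t-2}+u_{t-1}$ is linked to the innovation $v_t$ of the regressor through the system, its covariation with $u_t$ carries the endogeneity parameter $\sigma_{uv}$, and it is precisely this term that will fail to wash out when $\sigma_{uv}\neq 0$, in contrast to the properly constructed IVX instrument whose displayed limit for $t_{ivx}$ is pivotal under the null.

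Second, I would establish the joint functional central limit theorem for the partial sums built from $(u_t,v_t)$ and for the derived near-integrated quantities $x_{\lfloor Tr \rfloor}$ and $y_{\lfloor Tr \rfloor}$, invoking the Ornstein--Uhlenbeck limit $J_C$ and the uniform convergence of the sample covariance matrices already recorded in the excerpt. After the $T$-normalisations dictated by the rate $\hat{\beta}_{ivx}-\beta = O_p(T^{-(\lambda+1)/2})$, the denominator converges to a stochastic integral of the form $\int_0^1 J_c\,dJ_c$ (up to the signal carried by $w_{t-1}$), while the numerator splits into a martingale piece yielding the standard normal variate $\mathcal{Z}$ and a drift piece proportional to $\sigma_{uv}\int_0^1 J_c(s)\,ds$. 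The presence of a non-vanishing coefficient on $\sigma_{uv}$ under the null is the analytic heart of the result.

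Third, applying the continuous mapping theorem to the ratio and to the self-normalised statistic, I would read off the limit of $t_{ivx}^{*}$ as a standard normal component plus a non-degenerate term proportional to $\sigma_{uv}$ and to the persistence nuisance $c$, making precise that self-instrumentation of $y_{t-1}$ does not purge the second-order endogeneity bias. The main obstacle I anticipate is the mismatch of stochastic orders: under the null $y_{t-1}$ is stationary whereas $x_{t-1}$ is near-integrated, so the information matrix is only non-degenerate after block-wise rescaling, and one must verify both that the rescaled limiting matrix is invertible and that the cross moment $\sum w_{t-1} x_{t-1}$ retains a non-vanishing signal. Establishing this rotation and rescaling carefully — and checking that the endogeneity functional $\sigma_{uv}\int_0^1 J_c\,dV$ survives in the numerator after the rescaling — is where the real work lies; once that is done, the stated non-pivotal limit follows directly from the continuous mapping theorem.
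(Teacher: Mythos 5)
Your proposal cannot be verified against a stated conclusion, because the proposition you were given is truncated in the paper itself: it ends with ``then we have that'' and never says what $\hat{\beta}_{ivx}^{*}$ or $t_{ivx}^{*}$ converge to. Moreover, what the paper supplies as its own proof is nothing more than three displayed weak-convergence results for moments built from the genuine IVX instrument $z_{t-1}$, namely the limits of $T^{-(1+\lambda)/2}\sum_{t} z_{t-1}u_{t}$, of $T^{-(1+\lambda)}\sum_{t} z_{t-1}^{2}$, and of $T^{-(1+\lambda)/2}\sum_{t} z_{t-1}x_{t-1}$; no moment involving $y_{t-1}$ used as its own instrument appears, and no argument links these limits to the starred quantities. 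Against that backdrop, your reconstruction of the intended content --- that self-instrumenting $y_{t-1}$ fails to purge the endogeneity driven by $\sigma_{uv}$, so that $t_{ivx}^{*}$ is non-pivotal in contrast to the pivotal-plus-drift limit displayed for $t_{ivx}$ --- is a sensible reading, and your insistence on block-wise rescaling (stationary $y_{t-1}$ against near-integrated $x_{t-1}$ under the null) identifies exactly the technical issue a complete proof must confront. On both counts you go further than the paper does.

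There are, however, genuine gaps. First, the mechanism you call the ``analytic heart'' is misplaced: with $u_{t}$ a martingale difference and $y_{t-1}$ predetermined, the numerator moment $\sum_{t} y_{t-1}u_{t}$ is a zero-mean martingale transform, so it cannot carry a $\sigma_{uv}$ bias at all. The endogeneity enters instead through the \emph{cross-moments among the regressors and instruments}, because $y_{t-1}$ contains $u_{t-1}$ while $x_{t-1}$ and $z_{t-1}$ load on $v_{t-1}$: under the null $\beta=0$ one has $y_{t-1}=u_{t-1}$, $x_{t-1}=\rho x_{t-2}+v_{t-1}$, and hence
\begin{align*}
\frac{1}{T}\sum_{t=2}^{T} y_{t-1}x_{t-1} \;\Rightarrow\; \int_{0}^{1} J_{c}\,dB_{u} + \sigma_{uv},
\qquad
\frac{1}{T}\sum_{t=2}^{T} z_{t-1}y_{t-1} \;\overset{P}{\to}\; \sigma_{uv},
\end{align*}
i.e.\ the endogeneity appears as an additive one-sided covariance constant in the design blocks, not in the score. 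Second, and relatedly, the form of the surviving term is stated inconsistently in your text --- first as $\sigma_{uv}\int_{0}^{1}J_{c}(s)\,ds$, later as $\sigma_{uv}\int_{0}^{1}J_{c}\,dV$ --- and neither matches the additive-constant structure above; since the entire point of the proposition is the precise non-pivotal limit, this is not a cosmetic slip. Finally, the decisive steps are announced rather than executed: to complete the argument you must write the augmented normal equations with instrument vector $(z_{t-1},y_{t-1})^{\prime}$, prove joint convergence of all blocks under the mixed normalization $\mathrm{diag}\bigl(T^{(1+\lambda)/2},\,T^{1/2}\bigr)$ (the three $z$-moments the paper records together with the $y$-cross-moments above), invert the block-rescaled limit matrix, and track how the $\sigma_{uv}$ constants propagate into the self-normalized statistic. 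Until then the proposal is a plausible program, not a proof.
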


Under the assumptions of the proposition the following hold
\begin{align}
\frac{1}{ T^{ \frac{1+\lambda}{2}} } \sum_{t=2}^T z_{t-1} u_t &\Rightarrow \mathcal{N} \left( 0, \frac{\sigma_u^2 \sigma_v^2 }{2 c_z }  \right)
\\
\frac{1}{ T^{ 1+\lambda } } \sum_{t=2}^T z^2_{t-1} &\to \frac{ \sigma_v^2 }{ 2 c_z} 
\\
\frac{1}{ T^{ \frac{1+\lambda}{2}} } \sum_{t=2}^T z_{t-1} x_{t-1}  &\Rightarrow \frac{ \sigma_v^2 }{ c_z } \left( B_c^2(1) - \int_0^1 B_c(s) dB_c(s) \right), 
\end{align}

\newpage

\subsection{Limit Theory for Moderately Explosive Systems}

\paragraph{Case II: C with not distinct diagonal elements, $c_i = c_j$ for some $i \neq j$}

We now consider case (II), where the matrix $C$ does not have distinct diagonal elements. In this case two or more elements of $x_t$ have comparable moderately explosive behaviour governed by a common autoregressive root of the form $\rho_{nj} = 1 + c_j / n^{\alpha}$. We know that under such conditions the second moment matrix $\sum_{t=1}^n x_t x_t^{\prime}$ in the regression model is asymptotically singular, which is explained by the fact that some elements of $x_t$ have common moderately explosive behaviour. These elements of $x_t$ are then asymptotically multicollinear in much the same way as regressors that are cointegrated or have common deterministic trends. To deal with this singularity in the regression model, we can develop an asymptotic theory for the regression in a similar way by rotating the regression coordinates in the direction of the common explosive behaviour and in an orthogonal direction. Here, however, the rotation is a random process determined by the regressor vector $x_n$. The randomness and the sample size dependence in the rotation present further complications in the development of the asymptotics because the limit theory segementation then depends on weak convergence of the rotation matrix. In what follows, we develop the related asymptotic theory for the special case where $C$ is the scalar matrix $C = c I_K$ and $\rho_N = 1 + c / n^{\alpha}$.  

For $\left( k_n \right)_{n \in \mathbb{N} }$ any sequence increasing to infinity, we define
\begin{align}
Y_{ C_n } := \frac{1}{ n^{\alpha / 2} } \sum_{j=1}^{ k_n } R_n^{ -j } F_x(1) \epsilon_j. 
\end{align}

The stochastic sequence of $Y_{ C_n }$ plays an important role in determining the asymptotic behaviour of the explosive systems. 

\paragraph{Proof of Lemma 4.2 MP (2009)}

\begin{lemma}
For each $\alpha \in (0,1)$, $C > 0$ and a sequence $k_n$ satisfying $\norm{ R_n }^{- k_n } \to 0$ and $n^{\alpha} \norm{ R_n }^{- (n-k_n)} \to 0$, we have that, as $n \to \infty$, 
\begin{itemize}
\item[(i)]
\begin{align}
n^{- \alpha } \mathbb{E} \norm{ \text{vec} \sum_{t=1}^n u_{0t} \left(  \sum_{j=t+1}^n R_n^{t-j} u_{xj} \right)^{\prime} R_n^{-n} } \to 0,
\end{align} 
\item[(ii)]
\begin{align}
n^{- \alpha } \mathbb{E} \norm{ \text{vec} \sum_{t=1}^{k_n} u_{0t} \left(  \sum_{j=t+1}^n R_n^{t-j} u_{xj} \right)^{\prime} R_n^{-n} } \to 0.
\end{align} 
\end{itemize}
\end{lemma}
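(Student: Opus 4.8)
The plan is to exploit the extreme smallness of the factor $\boldsymbol{R}_n^{-n}$. In the scalar specialization treated in this subsection, $\boldsymbol{R}_n = \rho_n \boldsymbol{I}_K$ with $\rho_n = 1 + c/n^{\alpha}$ and $c>0$, so that $\norm{\boldsymbol{R}_n^{-n}}_2 = \rho_n^{-n} = \exp\big(-n\log(1+c/n^{\alpha})\big) = \exp\big(-c\,n^{1-\alpha}(1+o(1))\big)$, which, since $\alpha\in(0,1)$, decays faster than any power of $n$. Consequently it suffices to show that, once this factor is stripped off, the remaining random sum grows only polynomially in $n$ in $L^1$: the super-polynomial decay of $\rho_n^{-n}$ then dominates both the explicit $n^{-\alpha}$ and any polynomial factor. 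Write $\boldsymbol{\eta}_t := \sum_{j=t+1}^n \boldsymbol{R}_n^{t-j}\boldsymbol{u}_{xj}$ for the inner sum, so that the matrix inside (i) is $\boldsymbol{S}_n := \sum_{t=1}^n \boldsymbol{u}_{0t}\boldsymbol{\eta}_t^{\prime}\boldsymbol{R}_n^{-n}$ and the matrix inside (ii) is its truncation $\boldsymbol{S}_n^{(k)} := \sum_{t=1}^{k_n}\boldsymbol{u}_{0t}\boldsymbol{\eta}_t^{\prime}\boldsymbol{R}_n^{-n}$.

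First I would reduce the statement to a second-moment bound on $\boldsymbol{\eta}_t$. Since $\norm{\text{vec}\,\boldsymbol{S}_n}=\norm{\boldsymbol{S}_n}$, the triangle inequality together with submultiplicativity of the Frobenius norm ($\norm{\boldsymbol{A}\boldsymbol{B}}\le\norm{\boldsymbol{A}}\,\norm{\boldsymbol{B}}_2$) gives $\norm{\boldsymbol{S}_n}\le \norm{\boldsymbol{R}_n^{-n}}_2\sum_{t=1}^n\norm{\boldsymbol{u}_{0t}}\,\norm{\boldsymbol{\eta}_t}$, and after taking expectations and applying Cauchy--Schwarz termwise,
\begin{align*}
\mathbb{E}\,\norm{\text{vec}\,\boldsymbol{S}_n}\le \norm{\boldsymbol{R}_n^{-n}}_2\sum_{t=1}^n\big(\mathbb{E}\,\norm{\boldsymbol{u}_{0t}}^2\big)^{1/2}\big(\mathbb{E}\,\norm{\boldsymbol{\eta}_t}^2\big)^{1/2}.
\end{align*}
The crucial quantity is $\mathbb{E}\norm{\boldsymbol{\eta}_t}^2$. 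Using the martingale-difference property of the innovations, $\mathbb{E}[\boldsymbol{u}_{xj}\boldsymbol{u}_{xk}^{\prime}]=\boldsymbol{\Sigma}_x\mathbf{1}\{j=k\}$, together with $\boldsymbol{R}_n^{-m}=\rho_n^{-m}\boldsymbol{I}_K$, the cross terms vanish and
\begin{align*}
\mathbb{E}\,\norm{\boldsymbol{\eta}_t}^2=\sum_{m=1}^{n-t}\rho_n^{-2m}\,\mathsf{trace}(\boldsymbol{\Sigma}_x)\le \mathsf{trace}(\boldsymbol{\Sigma}_x)\,\frac{\rho_n^{-2}}{1-\rho_n^{-2}}=O(n^{\alpha}),
\end{align*}
uniformly in $t$, because $1-\rho_n^{-2}\sim 2c/n^{\alpha}$. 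Hence each summand is $O(n^{\alpha/2})$ and $\sum_{t=1}^n(\cdots)=O(n^{1+\alpha/2})$.

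Combining the two displays yields $n^{-\alpha}\,\mathbb{E}\norm{\text{vec}\,\boldsymbol{S}_n}=O\big(n^{1-\alpha/2}\big)\,\rho_n^{-n}\to 0$, since $\rho_n^{-n}$ decays super-polynomially; this proves (i). For (ii) the identical argument with the outer sum truncated at $k_n\le n$ gives $\sum_{t=1}^{k_n}(\cdots)=O(k_n n^{\alpha/2})$, so that $n^{-\alpha}\mathbb{E}\norm{\text{vec}\,\boldsymbol{S}_n^{(k)}}=O(k_n n^{-\alpha/2})\,\rho_n^{-n}\le O(n^{1-\alpha/2})\,\rho_n^{-n}\to 0$. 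One may alternatively retain the sharper dependence on $k_n$ and invoke the stated hypotheses $\rho_n^{-k_n}\to 0$ and $n^{\alpha}\rho_n^{-(n-k_n)}\to 0$ (writing $\rho_n^{-n}=\rho_n^{-k_n}\rho_n^{-(n-k_n)}$), but the crude bound already closes the argument.

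I do not anticipate a genuine obstacle: the entire difficulty is bookkeeping around the geometric factor, and the single quantitative input, $\mathbb{E}\norm{\boldsymbol{\eta}_t}^2=O(n^{\alpha})$, follows directly from summing the geometric series $\sum_m\rho_n^{-2m}$, whose defect $1-\rho_n^{-2}$ is of exact order $n^{-\alpha}$. The only point requiring mild care is passing from the scalar specialization $\boldsymbol{C}=c\boldsymbol{I}_K$ to a general positive-definite $\boldsymbol{C}$: there one replaces $\rho_n^{-m}$ by $\norm{\boldsymbol{R}_n^{-m}}_2\le(1+c_{\min}/n^{\alpha})^{-m}$, where $c_{\min}>0$ is the smallest eigenvalue of $\boldsymbol{C}$, after which every estimate above carries over verbatim with $c$ replaced by $c_{\min}$.
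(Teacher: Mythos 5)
Your proof is correct for the statement as printed, but it proceeds quite differently from the paper, which in fact never proves (i)--(ii) directly: the text under the ``Proof'' heading is the downstream argument (BN decomposition, the martingale difference array $\xi_{n,t+k_n}$, its predictable quadratic variation, and the mixed-Gaussian limit), in which the lemma is invoked only implicitly to justify the two $o_p(1)$ replacements --- extending the inner sum $\sum_{j=1}^{t}$ to $\sum_{j=1}^{n}$ and truncating the outer sum at $t=k_n+1$ --- with the lemma itself imported from MP (2009). Your argument, by contrast, is direct and elementary: the termwise bound $\norm{u_{0t}\boldsymbol{\eta}_t^{\prime}R_n^{-n}}\leq\norm{u_{0t}}\,\norm{\boldsymbol{\eta}_t}\,\norm{R_n^{-n}}_2$, Cauchy--Schwarz, the geometric-series estimate $\mathbb{E}\norm{\boldsymbol{\eta}_t}^2=O(n^{\alpha})$ uniformly in $t$, and the observation that $\rho_n^{-n}=\exp\big(-cn^{1-\alpha}(1+o(1))\big)$ decays faster than any polynomial; the bookkeeping $n^{-\alpha}\cdot n\cdot n^{\alpha/2}\cdot\rho_n^{-n}=n^{1-\alpha/2}\rho_n^{-n}\to 0$ is right, and the extension to diagonal $C$ with smallest entry $c_{\min}>0$ is as routine as you claim. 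What your approach buys is a self-contained $L^1$ proof; what it forgoes is any contact with the martingale structure that the paper needs anyway for the limit theory that follows.

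Two caveats. First, ``the cross terms vanish'' assumes martingale-difference innovations, whereas the surrounding theorem assumes weakly dependent linear-process errors (Assumption LP), under which $\mathbb{E}[u_{xj}u_{xk}^{\prime}]\neq 0$ for $j\neq k$. The repair is one line --- absolute summability of the autocovariances gives $\mathbb{E}\norm{\boldsymbol{\eta}_t}^2\leq\big(\sum_{h}|\mathsf{trace}\,\Gamma_x(h)|\big)\,\rho_n^{-2}/(1-\rho_n^{-2})=O(n^{\alpha})$ --- but as written your proof covers a narrower error class than the paper's. Second, your (accurate) observation that the hypotheses on $k_n$ are never used is a red flag about the statement rather than about your proof: in the source lemma of MP (2009), part (ii) has the inner sum running from $j=1$, so it contains the explosive positive powers $R_n^{t-j}$ for $j<t$; that is the version the paper actually uses in its displayed chain of equalities (where the inner sum is $\sum_{j=1}^{n}$), and for it your crude bound $k_n\leq n$ fails --- the correct estimate is $O(n^{\alpha/2})\,\rho_n^{-(n-k_n)}\to 0$, which is precisely where the condition $n^{\alpha}\norm{R_n}^{-(n-k_n)}\to 0$ enters. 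So you have proved exactly what is printed; be aware that what is printed appears to be a mistranscription of the lemma the paper relies on, and closing that intended version requires the sharper splitting you only sketch at the end.
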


\newpage 

\begin{proof}
Notice that the sample covariance can be written as below
\begin{align*}
\text{vec} \frac{1}{n^{\alpha} } \sum_{t=1}^n u_{0t} x_t^{\prime} R_n^{-n} &=  
\frac{1}{n^{\alpha} } \text{vec} \left\{ \sum_{t=1}^n u_{0t} \left( \sum_{j=1}^t R_n^{t-j} u_{xj} \right)^{\prime} R_n^{-n} \right\} + o_p(1)
\\
&= 
\frac{1}{n^{\alpha} } \text{vec} \left\{ \sum_{t=k_n + 1}^n u_{0t} \left( \sum_{j=1}^n R_n^{t-j} u_{xj} \right)^{\prime} R_n^{-n} \right\} + o_p(1)
\\
&= 
\frac{1}{n^{\alpha / 2} } \sum_{t=k_n + 1}^n \left( R_n^{-(n-t)} \otimes u_{0t} \right) \left( \frac{1}{n^{\alpha / 2} } \sum_{j=1}^n R_n^{t-j} u_{xj} \right)^{\prime} R_n^{-n} + o_p(1)
\\
&= \big[ I_{mK} + o_p(1) \big] \frac{1}{n^{\alpha / 2} } \sum_{t= k_n + 1}^n \left( R_n^{-(n-t)} \otimes u_{0t} \right) Y_{C_n} + o_p(1).  
\end{align*}
Applying the BN decomposition on the proceeding expression and making use of  
\begin{align}
\frac{1}{n^{\alpha / 2} } \sum_{t=k_n + 1}^n \left( R_n^{-(n-t)} \otimes \Delta \tilde{ \epsilon }_{0t}  \right)  = o_p(1), \ \ \text{as} \ \ n \to \infty, 
\end{align}
we obtain the following expression for the sample covariance as $n \to \infty$
\begin{align*}
\text{vec} &\left\{ \frac{1}{n^{\alpha} } \sum_{t=1}^n u_{0t} x_t^{\prime} R_n^{-n} \right\}
\\
&= 
\big[ I_{mK} + o_p(1) \big] \frac{1}{n^{\alpha / 2} } \sum_{t=1}^{n - k_n} \left[  R_n^{-(n-k_n - t)} Y_{C_n} \otimes F_0(1) \epsilon_{t + \kappa_n }   \right].
\end{align*}
Furthermore, letting $\mathcal{F}_{n,i} := \sigma \left( x_0, \epsilon_i,  \epsilon_{i-1},... \right)$ since $Y_{C_n}$ is $\mathcal{F}_{n,k_n}$ measurable, then 
\begin{align}
\xi_{n, t+ k+n} := \frac{1}{n^{\alpha / 2} } \left[  R_n^{-(n-k_n - t)} Y_{C_n} \otimes F_0(1) \epsilon_{t + \kappa_n }   \right]. 
\end{align}
is an $\mathbb{R}^{mK}-$valued martingale difference array with respect to $\mathcal{F}_{n,t + k_n}$. Moreover, we denote by $M_{n,k} := \sum_{t=1}^k \xi_{n,t+k_n}$ the martingale array corresponding to $\xi_{n,t+k_n}$ and by $\langle M_{n,k} \rangle := \sum_{t=1}^k \mathbb{E}_{ \mathcal{F}_{n,t + k_n} } \left[ \xi_{n,t+k_n} \xi_{n,t+k_n}^{\prime} \right]$ the predictable quadratic variation of $M_{n,k}$. Since we have that
\begin{align*}
\langle M_{n,k} \rangle_{n - k_n} 
&= \left[ \frac{1}{n^{\alpha} } \sum_{t=1}^{ n - k_n }  R_n^{-(n-k_n - t)} Y_{C_n} Y_{C_n}^{\prime} R_n^{-(n-k_n - t)} \right] \otimes \Omega_{00}
\\
&= 
\big[ I_{mK} + o_p(1) \big] \left( \int_0^{ \infty } e^{ -pC } Y_{C_n}  Y_{C_n}^{\prime} e^{ -pC } \otimes \Omega_{00} \right)
\end{align*}
and $\langle M_{n,k} \rangle_{n - k_n}$ is $\mathcal{F}_{n, k_n}-$measurable with $\mathcal{F}_{n, k_n} \subset \mathcal{F}_{n, k_n + 1}$, that is, the $\sigma-$algebra supporting the predictable quadratic variation is smaller than each of the elements of the filtration supporting the array $\xi_{n, t+ k_n}$. 

\newpage 

Therefore, we obtain that 
\begin{align*}
\text{ vec } \left\{ \frac{1}{n^{\alpha} } \sum_{t=1}^{ n } u_{0t} x_t^{\prime} R_n^{-n} \right\}
&= M_{n, n - k_n} + o_p(1)
\\
&\Rightarrow \mathcal{MN} \left( 0, \int_0^{\infty} e^{ pC} Y_C Y_C^{\prime} e^{ pC} \otimes \Omega_{00} \right)
\end{align*} 
as $n \to \infty$. The verification of the Lindeberg and tightness conditions needed for this result is provided by Proposition A1 of MP.

Therefore, we can now deduce the limit behavior of the regression coefficient for the case (I), where the localizing coefficients are distinct. This is given by
\begin{align*}
\text{vec} \left\{ n^{\alpha} \left( \hat{A}_n - A \right) R_n^n \right\}
&= 
\left[ \left( \frac{1}{n^{2 \alpha }} \sum_{t=1}^n R_n^{-n} x_t x_t^{\prime} \right)^{-1} \otimes I_m \right] \text{vec} \left(  \frac{1}{n^{\alpha}}  \sum_{t=1}^n u_{0t} x_t^{\prime} R_n^{-n} \right)
\\
&=
\big[ I_{mK} + o_p(1) \big] \left[ \left( \int_0^{ \infty } e^{ -pC } Y_{C_n}  Y_{C_n}^{\prime} e^{ -pC } dp \right)^{-1} \otimes I_{m} \right] \sum_{t=1}^{n - k_n} \xi_{n,t} 
\\
&= \big[ I_{mK} + o_p(1) \big] \left[ \left( \int_0^{ \infty } e^{ -pC } Y_{C_n}  Y_{C_n}^{\prime} e^{ -pC } dp \right)^{-1} \otimes \Omega_{00}^{-1} \right] \left( I_{mK} \otimes \Omega_{00} \right) \sum_{t=1}^{n - k_n} \xi_{n,t} 
\\
&= \big[ I_{mK} + o_p(1) \big] \langle M_n \rangle_{ n - k_n}^{-1} \left( I_{mK} \otimes \Omega_{00} \right) M_{n, n - k_n}. 
\end{align*}

The limiting distribution of $M_{n, n - k_n}$ is established in (26). We also show in the Appendix that $M_{n, n - k_n}$ satisfies the requirements of Proposition A1 (iii), so that joint convergence of $M_{n, n - k_n}$  and $\langle M \rangle_{n - k_n}$ applies. Therefore, with the next theorem we prove that the regression coefficient has a mixed normal limiting distribution.
\end{proof}  

\begin{theorem}
For the model (1)-(2) with $R_n = I_K + C / n^{\alpha}$, $c_i > 0$ for all $i, c_i \neq c_j$ for all $i \neq j, \alpha \in (0,1)$, and weakly dependent errors satisfying Assumption LP, we have that
\begin{align}
n^{\alpha} \left( \hat{A}_n - A \right) R_n^n \Rightarrow \mathcal{MN} \left( 0, \left( \int_0^{ \infty } e^{ -pC } Y_{C_n}  Y_{C_n}^{\prime} e^{ -pC } dp \right)^{-1} \otimes \Omega_{00} \right)
\end{align}
\end{theorem}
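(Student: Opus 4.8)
The plan is to assemble the theorem directly from the martingale representation established in the proof of Lemma 4.2, so very little new machinery is required. First I would start from the ratio form of the normalized OLS error already derived there, namely
\[
\text{vec}\left\{ n^{\alpha} \left( \hat{A}_n - A \right) R_n^n \right\} = \big[ I_{mK} + o_p(1) \big] \, \langle M_n \rangle_{n - k_n}^{-1} \left( I_{mK} \otimes \Omega_{00} \right) M_{n, n - k_n},
\]
which cleanly separates the numerator martingale $M_{n, n - k_n}$ from the denominator, the predictable quadratic variation $\langle M_n \rangle_{n - k_n}$. The two ingredients are already pinned down: the denominator satisfies $\langle M_n \rangle_{n - k_n} = [I + o_p(1)]\big( \int_0^{\infty} e^{-pC} Y_{C_n} Y_{C_n}^{\prime} e^{-pC} \, dp \otimes \Omega_{00} \big)$ and is $\mathcal{F}_{n, k_n}$-measurable, while the numerator is an $\mathbb{R}^{mK}$-valued martingale difference array driven by the innovations $\epsilon_{t + k_n}$. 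Writing $\Psi_C := \int_0^{\infty} e^{-pC} Y_{C} Y_{C}^{\prime} e^{-pC} \, dp$ for the random limit matrix, the whole task reduces to controlling the joint limit of the pair $\big( M_{n, n - k_n}, \langle M_n \rangle_{n - k_n} \big)$.

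Second, the analytic core is to upgrade the convergence of $M_{n, n - k_n}$ to a joint (stable) statement. Because the target covariance is random through $Y_C$, ordinary weak convergence will not suffice; I would invoke the martingale central limit theorem in its stable form and verify the three requirements of Proposition A1 of MP: (i) the conditional Lindeberg condition for the array $\xi_{n, t + k_n}$, which follows from the finite fourth moments in Assumption LP together with the exponential damping of $R_n^{-(n - k_n - t)}$; (ii) convergence of $\langle M_n \rangle_{n - k_n}$ to the random limit $\Psi_C \otimes \Omega_{00}$, which is exactly the computation carried out in the lemma; and (iii) the nesting property that $\langle M_n \rangle_{n - k_n}$ is measurable with respect to $\mathcal{F}_{n, k_n} \subset \mathcal{F}_{n, t + k_n}$. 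Point (iii) is what makes the limit genuinely mixed normal rather than normal: conditional on $Y_C$, the array behaves like a sum of independent increments with deterministic conditional variance, so $M_{n, n - k_n} \Rightarrow \mathcal{MN}\big( 0, \Psi_C \otimes \Omega_{00} \big)$ stably.

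Third, with joint convergence in hand I would apply the continuous mapping theorem to the product $\langle M_n \rangle_{n - k_n}^{-1} ( I_{mK} \otimes \Omega_{00} ) M_{n, n - k_n}$. Conditionally on $Y_C$ the numerator is asymptotically $\mathcal{N}( 0, \Psi_C \otimes \Omega_{00} )$, so the conditional covariance of the product collapses, using $( A \otimes B )^{-1} = A^{-1} \otimes B^{-1}$ and $(A\otimes B)(C\otimes D) = AC \otimes BD$, to
\[
\big( \Psi_C^{-1} \otimes \Omega_{00}^{-1} \big) \big( \Psi_C \otimes \Omega_{00} \big) \big( \Psi_C^{-1} \otimes \Omega_{00}^{-1} \big) = \Psi_C^{-1} \otimes \Omega_{00}.
\]
Mixing over the law of $Y_C$ then delivers the stated mixed-Gaussian limit $\mathcal{MN}\big( 0, ( \int_0^{\infty} e^{-pC} Y_{C_n} Y_{C_n}^{\prime} e^{-pC} \, dp )^{-1} \otimes \Omega_{00} \big)$, and the leading $[I_{mK} + o_p(1)]$ factor is absorbed by Slutsky's theorem.

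The hard part will be the stable-convergence verification in step two, and in particular the tightness and uniform negligibility conditions of Proposition A1. The delicate point is the simultaneous choice of the truncation sequence $k_n$ with $\norm{ R_n }^{-k_n} \to 0$ and $n^{\alpha} \norm{ R_n }^{-(n - k_n)} \to 0$, which is precisely what guarantees that the initial block $\sum_{t=1}^{k_n}$ is asymptotically negligible (part (ii) of Lemma 4.2) while the $Y_{C_n}$ approximation to $Y_C$ remains valid; getting both to hold together is what makes the martingale approximation error genuinely $o_p(1)$ and couples the conditioning variable $Y_C$ to the martingale limit through the correct filtration, thereby producing the product form rather than a spurious independent product.
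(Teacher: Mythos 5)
Your argument is correct in substance, but it is not the argument the paper attaches to this theorem, so a comparison is in order. The proof environment following the theorem opens by constructing the random orthogonal rotation $H_n=[H_{cn},H_{\perp n}]$ with $H_{cn}=x_n/(x_n^{\prime}x_n)^{1/2}$, passes to rotated coordinates $z_t=H_n^{\prime}x_t$, and begins re-expressing the least-squares estimator in those coordinates; this is the device needed when the localizing coefficients are \emph{not} distinct (the scalar case $C=cI_K$ announced at the start of that subsection), because there $\sum_t x_t x_t^{\prime}$ is asymptotically singular and one must separate the common explosive direction from its orthogonal complement. Under the hypothesis actually stated in the theorem ($c_i\neq c_j$ for all $i\neq j$) no rotation is required, and the appropriate argument is the one you give: the representation $\mathsf{vec}\{n^{\alpha}(\hat{A}_n-A)R_n^n\}=[I_{mK}+o_p(1)]\langle M_n\rangle_{n-k_n}^{-1}(I_{mK}\otimes\Omega_{00})M_{n,n-k_n}$, joint (stable) convergence of the pair $(M_{n,n-k_n},\langle M_n\rangle_{n-k_n})$ via Proposition A1 --- the conditional Lindeberg condition, convergence of the predictable quadratic variation, and the nesting $\mathcal{F}_{n,k_n}\subset\mathcal{F}_{n,t+k_n}$ that is what produces mixed rather than plain Gaussianity --- followed by continuous mapping and Slutsky. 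This is precisely the chain the paper itself carries out in the closing lines of the Lemma 4.2 proof, immediately before stating the theorem (``with the next theorem we prove that the regression coefficient has a mixed normal limiting distribution''); in effect your proposal supplies the proof the paper gestures at there, while the paper's attached proof is the (truncated) opening of the common-root analysis and never reaches the stated limit. So your route coincides with the paper's substantive derivation for the distinct-roots case, not with the proof environment printed under the theorem.

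One algebraic slip should be fixed. Since $\langle M_n\rangle_{n-k_n}^{-1}(I_{mK}\otimes\Omega_{00})$ converges to $(\Psi_C\otimes\Omega_{00})^{-1}(I_{mK}\otimes\Omega_{00})=\Psi_C^{-1}\otimes I_m$, the conditional covariance sandwich is
\begin{align*}
\left(\Psi_C^{-1}\otimes I_m\right)\left(\Psi_C\otimes\Omega_{00}\right)\left(\Psi_C^{-1}\otimes I_m\right)=\Psi_C^{-1}\otimes\Omega_{00},
\end{align*}
which is the covariance appearing in the theorem. The display you wrote, $(\Psi_C^{-1}\otimes\Omega_{00}^{-1})(\Psi_C\otimes\Omega_{00})(\Psi_C^{-1}\otimes\Omega_{00}^{-1})$, actually equals $\Psi_C^{-1}\otimes\Omega_{00}^{-1}$, not $\Psi_C^{-1}\otimes\Omega_{00}$: you sandwiched with $\langle M_n\rangle^{-1}$ alone and dropped the $(I_{mK}\otimes\Omega_{00})$ factor. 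The final covariance you report is nonetheless the correct one, so this is a transcription error rather than a gap in the argument.
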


\begin{proof}
Start by defining the orthogonal random matrix $H_n = [ H_{cn} , H_{ \perp n} ]$, where 
\begin{align}
H_{cn} = \frac{ x_n }{ \left( x_n^{\prime} x_n \right)^{1/2} }, \ \ H_{ \perp n}^{\prime} H_{cn} = 0 \ \ \text{almost surely}, 
\end{align} 
and the $K \times ( K - 1 )$ random matrix $H_{ \perp n}$ is an orthogonal complement to $H_{cn}$ satisfying $H_{ \perp n}^{\prime} H_{ \perp n} = I_{K-1}$ and $H_{ \perp n}^{\prime} H_{ \perp n} = I_{K} - H_{ cn}^{\prime} H_{cn}$ \textit{almost surely}. Therefore, the asymptotic behaviour of 
$H_{ \perp n} H_{ \perp n}^{\prime}$ is given by 
\begin{align}
H_{ \perp n} H_{ \perp n}^{\prime} = I_K - \frac{ Y_{cn} Y_{cn}^{\prime} }{   Y_{cn}^{\prime} Y_{cn} } + o_p(1) \Rightarrow I_K - \frac{ Y_{c} Y_{c}^{\prime} }{ Y_{c}^{\prime} Y_{c} } \ \ \text{as} \ n \to \infty, 
\end{align} 

\newpage

where $Y_{cn}$ and $Y_{c}$ are random vectors $Y_{C_n}$ and $Y_{C}$ of Lemma 4.1 with $C = c I_K$. Then, applying this rotation to the moderately explosive regressor vector yields
\begin{align}
z_t = H_n^{\prime} x_t 
=
\begin{bmatrix}
H_{cn}^{\prime} x_t \\
H_{\perp n}^{\prime} x_t
\end{bmatrix} 
:= 
\begin{bmatrix}
z_{1t} \\
z_{2t}
\end{bmatrix} 
\end{align}
with $z_{2t}$ satisfying the reverse autoregression $z_{2t} = \rho_n^{-1} z_{2t+1} - \rho_n^{-1} H_{ \perp n}^{\prime} u_{xt}$, which gives rise to the following 
\begin{align}
z_{2t} = - H_{ \perp n}^{\prime} \sum_{j = 1}^{n - t} \rho_n^{-j} u_{xt + j},  
\end{align}
because $z_{2n} = H_{ \perp n}^{\prime} x_n = 0$. Using the orthogonality condition of $H_n$, we obtain the following expression for the least squares estimator after rotation of the regression space
\begin{align*}
n^{ \frac{ 1 + \alpha }{2} } \left( \hat{A}_n - A  \right) 
&= 
\left( \frac{1}{ n^{ ( 1 + \alpha ) / 2 } } \sum_{t=1}^n u_{0t} z_t^{\prime} \right) \left( \frac{1}{n^{1 + \alpha } } \sum_{t=1}^n z_t z_t^{\prime} \right)^{-1} H_n^{\prime}
\\
&= 
\left( \frac{1}{ n^{ ( 1 + \alpha ) / 2 } } \sum_{t=1}^n u_{0t} z_t^{\prime} \right)
\end{align*}
\end{proof}

\newpage

\bibliographystyle{apalike}
\bibliography{myreferences1}

\newpage

\end{document}